\newtheorem{theorem}{Theorem}[section]
\newtheorem{lemma}{Lemma}[section]
\newtheorem{corollary}{Corollary}[section]
\newtheorem{proposition}{Proposition}[section]
\newtheorem{assumption}{Assumption}
\theoremstyle{definition}
\newtheorem{remark}{Remark}
\DeclareMathOperator*{\argmin}{argmin}
\title{Latent space models for grouped multiplex networks}
\author[1]{Alexander Kagan \thanks{Corresponding author. Email: \texttt{amkagan@umich.edu}}}
\author[2]{Peter W. MacDonald}
\author[1]{Elizaveta Levina}
\author[1]{Ji Zhu}
\affil[1]{Department of Statistics, University of Michigan}
\affil[2]{Department of Statistics \& Actuarial Science, University of Waterloo}
\date{}
\begin{document}
\maketitle

\begin{abstract}
Complex multilayer network datasets have become ubiquitous in various  applications, such as neuroscience, social sciences, economics, and genetics. Notable examples include brain connectivity networks collected across multiple patients or trade networks between countries collected across multiple goods. Existing statistical approaches to such data typically focus on modeling the structure shared by all networks; some go further by accounting for individual, layer-specific variation. However, real-world multilayer networks often exhibit additional patterns shared only within certain subsets of layers, which can represent treatment and control groups, or patients grouped by a specific trait. Identifying these group-level structures can uncover systematic differences between groups of networks and influence many downstream tasks, such as testing and low-dimensional visualization.
To address this gap in existing research, we introduce the GroupMultiNeSS model, which generalizes the previously proposed MultiNeSS model of \cite{multiness}. This model enables the simultaneous extraction of shared, group-specific, and individual latent structures from a sample of multiplex networks — multiple, heterogeneous networks observed on a shared node set. For this model, we establish identifiability, develop a fitting procedure using convex optimization in combination with a nuclear norm penalty, and prove a guarantee of recovery for the latent positions as long as there is sufficient separation between the shared, group-specific, and individual latent subspaces. We compare the model with MultiNeSS and other models for multiplex networks in various synthetic scenarios and observe an apparent improvement in the modeling accuracy when the group component is accounted for. Experiment with the Parkinson's disease brain connectivity dataset of \cite{TaoWu2017} demonstrates the superiority of GroupMultiNeSS in highlighting node-level insights on biological differences between the treatment and control patient groups.
\end{abstract}

{\bf Keywords:} multiplex networks, latent space models, group structure
\section{Introduction}
    \label{sec:introduction}
    Complex multilayer network datasets have become ubiquitous in many area of applications, including neuroscience, genetics, social sciences, and economics, among others.  Examples of such datasets include brain connectivity networks observed in a group of patients, protein-protein interaction networks observed across multiple tissues, or trade networks in multiple commodities between countries; in all these cases, the same set of nodes is shared across multiple networks (layers).   A number of statistical methods for such data have been proposed, including natural generalizations of single-layer models, such as the stochastic block model (SBM) \citep{Holland1983}, the random dot product graph (RDPG) \citep{Young2007, Athreya2017}, and the  latent variable model of \cite{MaMaUnifModel2020} to their respective multilayer versions \citep{han2015consistent, Jones2021MRDPG, xuefei2020flexible}. Other lines of work focused on Bayesian approaches to latent space models \citep{gollini2016joint, SalterTownshend2017, DAnegelo2019, Sosa2021}, and models based on low-rank assumptions, such as the Common Subspace Independent Edge Model (COSIE) \citep{Arroyo2019} and the MultiNeSS model \citep{multiness, TianYinqiu2024}. 
All of these models, in various ways, focus on how to model the common structure shared by all the layers, while accounting for individual layer-specific information. For example, the multilayer SBM assumes the communities to be the same across layers but allows their connection probabilities to differ, COSIE assumes all expected layer adjacency matrices lie in a common subspace, and MultiNeSS decomposes each layer's latent space into shared and individual subspaces. 
% With many successful applications of these models in various synthetic and real-world scenarios, we can confidently state that the problem of shared information extraction from multilayer networks can generally be considered solved. 

While the common structure is the natural starting point for studying multilayer networks, in many applications there are important questions that go beyond common and individual structure.   
For instance, in the context of brain connectivity, one may want to look at the differences between patients and healthy controls, or between a treatment and a placebo, while separating out both the common structure shared by all humans and the individual variability irrelevant to the scientific question at hand.  In this work, we study the question of how to estimate structure specific to groups of network layers within the collection, and how to separate these group-level structures from the structure shared by all the layers. 

Typical studies focused on discovering abnormal connectivity patterns in patients with a given disease, such as ADHD \citep{DiffADHD2017}, Alzheimer's \citep{DiffAlzheimer2024}, or Parkinson's \citep{TaoWu2017}, tend to perform permutation tests on descriptive summary statistics of the network groups, such as average connectivity or average path length.   These  descriptive statistics are often insufficient to account for the full complexity of the networks, resulting in low power of the tests, and are not able to separate out the common structure.  Latent space network models have been also used to develop classical two-sample hypothesis tests, testing whether the latent positions of nodes in the two samples are drawn from the same distribution, usually up to an orthogonal rotation or scaling \citep{TangSemiparamNetworkTest, nguen2024networktwosampletestblock, macdonald2024mesoscale}.   While these approaches provide valid statistical tests, they are also unable to separate out the common structure before comparing groups, resulting in lower power.  Further, both these types of methods do not provide estimates of group structure that could be visualized and compared.  

% for example, in the studies of Alzheimer's disease or schizophrenia. 

In contrast, here we focus on developing a latent space model specifically for multiplex networks whose layers can be divided into groups.   It generalizes the previously proposed MultiNeSS model by allowing the latent positions of nodes within each layer to comprise not only a component shared across all networks and an individual component specific only to this layer, but also an additional group component shared by the networks only within its group. 
We develop a computationally efficient fitting algorithm for this model, and establish theoretical guarantees under the Gaussian edges model. Through simulation studies and applications to real data, we show that incorporating a group component significantly improves modeling accuracy. Our results highlight the importance of accounting for group-level structure in latent space models and provide a versatile tool for analyzing grouped network data.

The rest of this manuscript is organized as follows. Section \ref{sec:model} presents the model and states the identifiability conditions for its parameters. In Section \ref{sec:estimation}, we describe the fitting procedure and propose a cross-validation approach for selecting hyperparameters.  In Section \ref{sec:theory}, we
establish consistency for the fitting algorithm under the Gaussian assumption on edge values. In Section \ref{sec:experiments},  we present numerical simulations on synthetic data to illustrate the efficacy of our method, and in Section \ref{sec:real_data}, we apply it to the Parkinson's disease brain connectivity data from \citet{TaoWu2017}.  Section \ref{sec:discussion} discusses conclusions and future work.  

% \section{Background}
%     \label{ch:background}
% 	\subfile{background.tex}

% \section{Related Work}
%     \label{ch:related_work}
% 	\subfile{related_work.tex}

% \section{Notations}
%     \label{ch:notations}
% 	\subfile{notations.tex}
 
\section{A model for grouped multiplex networks}
    \label{sec:model}
	
We start by fixing notation. The observed data are $M$ undirected networks on a shared set of $n$ nodes, separated into $K$ groups of sizes $m_k$, $k =1, \ldots, K,$ so that $M = \sum_{k=1}^K m_k$. 
Each network is represented by a symmetric and possibly weighted adjacency matrix 
$$A_{k \ell} \in \mathbb{R}^{n\times n}, \ (k, \ell) \in \mathcal{I}, \quad \text{where} \quad \mathcal{I} :=\{(k, \ell): k=1,\ldots, K; \ \ell=1,\ldots, m_k\},$$ 
where the edge value $A_{k \ell, ij}$ represents the strength of connection between nodes $i$ and $j$ in network $\ell$ of group $k$. Each node $i$, $i = 1, \dots, n$ 
is associated with a  layer-specific fixed latent position $x_{k \ell}^{(i)}\in \mathcal{X}_{k\ell} \subset \mathbb{R}^{D_{k\ell}}$, which we stack into a matrix $X_{k \ell} \in \mathbb{R}^{n\times D_{k\ell}}$.  We assume that conditional on $X_{k \ell}$, the elements of $A_{k \ell}$ for $i \le j$ are independently drawn and follow an edge entry distribution 
\begin{equation}\label{general_edge_distrib}
    A_{k \ell, ij} \stackrel{\text{ind}}{\sim} f(\cdot;\kappa_{k\ell}(x_{k \ell}^{(i)}, x_{k \ell}^{(j)}), \phi_{k\ell}), \quad 1\le i \le j\le n, \quad (k, \ell) \in \mathcal{I}, 
\end{equation}
where $\kappa_{k\ell}:\mathcal{X}_{k\ell}\times \mathcal{X}_{k\ell} \rightarrow \mathbb{R}$ is a symmetric function capturing similarity between the two input latent positions, and $\phi_{k\ell}$ is a possible nuisance parameter of the edge distribution $f$. If the modeled networks are free of self-loops, we restrict this assumption to the entries with $i < j$. Many previously proposed multiplex network models have this form;  for example, if $\kappa_{k\ell}$ is the inner product, $\mathcal{X}_{k\ell}$ is such that $0 \le x^\top y \le 1$ for all $x, y\in \mathcal{X}_{k\ell}$, and $f$ corresponds to the Bernoulli distribution, we get the Multilayer RDPG model \citep{Jones2021MRDPG}. 

We assume that the edge distribution belongs to a canonical exponential family of the form 
\begin{equation}\label{one_param_exp_family}
    f(x;\theta)\propto \exp\{\theta x - \nu(\theta)\}, \quad \theta \in \mathbb{R}
\end{equation}
with a natural parameter $\theta$ 
and a log-partition function $\nu$.  This family includes Bernoulli, Poisson, Gaussian, and  many other distributions, and is suitable for modeling many different edges types across a range of applications, including binary, count, and continuous edge values.

The distribution of edges in layer $A_{k\ell}$ in  \eqref{general_edge_distrib} depends on the latent positions $X_{k\ell}$ only through the Gram matrix $\Theta_{k\ell} \in \mathbb{R}^{n\times n}$ with $\Theta_{k\ell, ij}=\kappa_{k\ell}(x_{kl}^{(i)}, x_{k\ell}^{(j)})$; this will naturally raise questions about identifiability, addressed in Section \ref{sec:identifiability}.  In particular, when $\kappa_{k\ell}$ is the Euclidean inner product, $\Theta_{k\ell} = X_{k\ell}X_{k\ell}^\top$, the key low rank matrix  for many network latent space models starting from the seminal work of \citep{Hoff2002}.    To emphasize this natural parameter, we can rewrite \eqref{general_edge_distrib} in matrix as follows:
\begin{equation}\label{general_edge_disrib_matrix_form}
    A_{k\ell}\ \stackrel{\text{ind}}{\sim} f(\cdot; \Theta_{k\ell}, \phi_{k\ell}), \quad (k,\ell) \in \mathcal{I}.
\end{equation}

\subsection{A group latent space structure}
Here we propose a new latent space model of the general type   \eqref{general_edge_disrib_matrix_form} which explicitly includes common, group-level, and individual structure.   We call it GroupMultiNeSS  (GROUPed MULTIplex NEtworks with Shared Structure).   The key assumption of GroupMultiNeSS, illustrated in Figure \ref{fig:latent_space_gmn}, is that the latent positions $X_{k \ell} \in 
\mathbb{R}^{n\times D_{k\ell}}$ can be decomposed into parts representing 
 common structure $V\in\mathbb{R}^{n \times d_{0}}$ shared by all the layers, group-specific structure $W_k \in\mathbb{R}^{n \times d_{k}}$ shared by the layers within a given group $k=1, \ldots,K$, and finally, individual structure $U_{k \ell} \in\mathbb{R}^{n \times d_{k \ell}}$ specific to a given layer $\ell=1, \ldots, m_k$ within group $k$;  that is, 
\begin{equation}\label{gmn_decomposition}
    X_{k \ell} = [V \ W_{k} \ U_{k \ell}], \quad (k, \ell)\in\mathcal{I},
\end{equation}
and $D_{k\ell} = d_0 + d_k +d_{k\ell}$.
If the number of groups is set to 1, this model recovers the MultiNeSS model of \cite{multiness} as a special case.   

\begin{figure}[t!]
    \centering
    \includegraphics[width=0.4\linewidth]{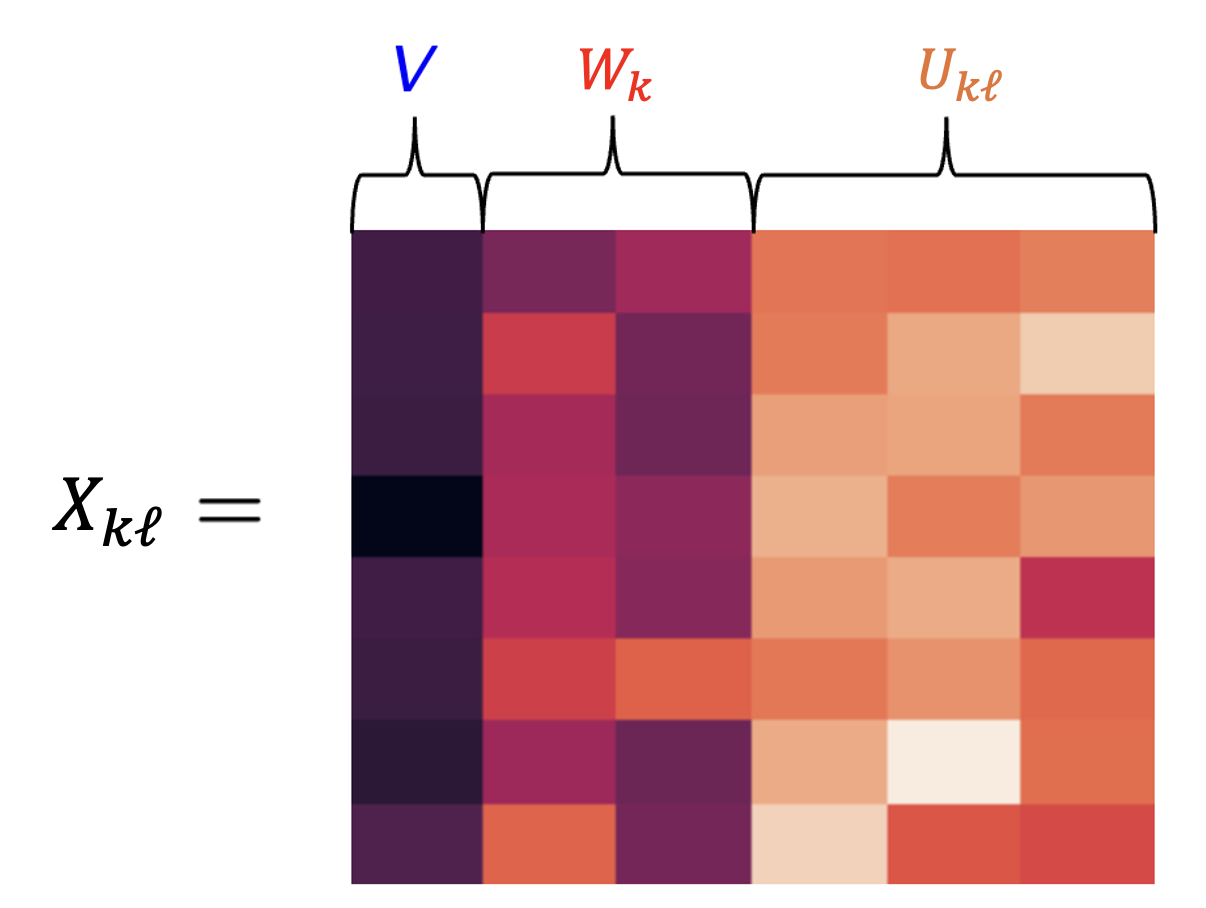} \qquad 
    \includegraphics[width=0.4\linewidth]{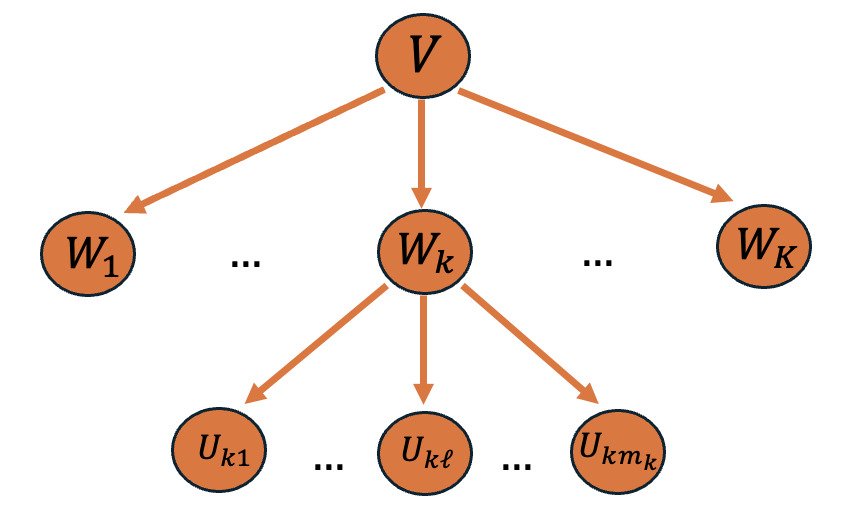}
    \caption{Latent space decomposition assumed by GroupMultiNeSS.}
    \label{fig:latent_space_gmn}
\end{figure}

% the edge-entry distribution is a one-parameter exponential family as in \citep{TianYinqiu2024}:
%  $$f(x|\theta) \propto x\theta - \nu(\theta)
%  $$
We set the similarity function $\kappa_{k\ell}$ to be the generalized inner product, defined as 
\begin{equation}\label{generalized_inner_product}
\kappa_{p, q}(x, y)=x_1 y_1+\cdots+x_p y_p-x_{p+1} y_{p+1}-\cdots-x_{p+q} y_{p+q}=x^{\top} I_{p, q} y,
\end{equation}
where $I_{p, q}$ is a $(p+q)\times (p+q)$ diagonal matrix with first $p$ diagonal values equal to $1$ and remaining $q$ equal to $-1$.   The first $p$ dimensions are referred to as assortative, meaning that higher similarity in this dimension increases the edge weight, and the last $q$ dimensions are disassortative, meaning a higher similarity decreases the edge weight.   We allow the number of assostative and disassortative dimensions to be different in each latent component, assuming that $\Theta_{k\ell}$ has the general form 
\begin{equation}\label{gmn_gram_matrix}
    \Theta_{k\ell}  = VI_{p_0, q_0}V^\top + W_k I_{p_k, q_k} W_k ^\top + U_{k \ell}I_{p_{k \ell}, q_{k \ell}} U_{k \ell}^\top,
\end{equation}
where $p_k + q_k = d_k$ for $k = 0, 1, \ldots,K$ and $p_{k \ell} + q_{k \ell} = d_{k \ell}$ for $(k, \ell)\in\mathcal{I}$. 
% \begin{align*}
%     \Theta_{k\ell}:=& VI_{p_0, q_0}V^\top + W_k I_{p_k, q_k} W_k ^\top + U_{k \ell}I_{p_{k \ell}, q_{k \ell}} U_{k \ell}^\top \\ -&\operatorname{diag}(VI_{p_0, q_0}V^\top + W_k I_{p_k, q_k} W_k ^\top + U_{k \ell}I_{p_{k \ell}, q_{k \ell}} U_{k \ell}^\top),
% \end{align*}
% Notice that by \eqref{general_edge_distrib}, $\mathcal{G}(X_{k\ell})$ contains the full information about the distribution of layer $A_{k\ell}$. Therefore, we can think of the identifiability of the GroupMultiNeSS model in terms of the identifiability of $\{\mathcal{G}_{k\ell}\}_\mathcal{I}$. This consideration will be helpful for our discussion in the next section.

    % $$P_{k \ell} = \mathbb{E}[A_{k \ell}] = VI_{p_0, q_0}V^\top + W_k I_{p_k, q_k} W_k ^\top + U_{k \ell}I_{p_{k \ell}, q_{k \ell}} U_{k \ell}^\top
    % $$

% For now, we will stick to the two-group case
% $$ g =\begin{cases} 1, & k \le m_1\\
%                     2, & k > m_1
%        \end{cases}
% $$
% For now, we assume
% \begin{equation}\label{data_generation}
%     A_{k, ij} \stackrel{iid}{\sim} N(P_{k,ij}, \sigma^2), \qquad P_{k,ij}  := F_{ij} + H_{g, ij} + R_{k, ij}.
% \end{equation}

\subsection{Identifiability}\label{sec:identifiability}
In this section, we give a sufficient condition for the identifiability of the GroupMultiNeSS parameters, for a somewhat more general similarity function $\kappa(x, y) = \psi(x^\top I_{p, q}y)$, an invertible scalar function of the generalized inner product defined in \eqref{generalized_inner_product}. This is an easy generalization since if 
$f(\cdot;\theta_1) \neq f(\cdot;\theta_2)$ for any $\theta_1 \ne \theta_2$, the distributions  will also not coincide whenever $\psi(\theta_1)\ne \psi(\theta_2)$.  

% Before formally stating our complex sufficient identifiability condition, we would like to list several intuitive necessary conditions.
First consider the case of $M= 1$, a single network.  The standard identifiability condition for an inner product latent position model is linearly independent columns of the latent position matrix $X_{k\ell} = [V \ W_k \ U_{k\ell}]$.   However, this is evidently not enough for GroupMultiNeSS, since we need to observe at least two groups to separate $V$ from $W_k$ and at least two networks in each group to be able to separate $W_k$ from $U_{k\ell}$.   We will thus assume $K \ge 2$ and $m_k \ge 2$ for all $k$.  %Indeed, if we had only one group, mixing the columns of $V$ and $W_k$ would not change the Gram matrix in \eqref{gmn_gram_matrix} and thus the edge-entry distribution. Similarly, if we observed only one layer $\ell$ in the group $k$, mixing the columns of $U_{k\ell}$ and $W_k$ would also preserve the distribution. 
%Unfortunately, these conditions are still insufficient as we can simultaneously put a subset of $W_k$'s columns to each $U_{k\ell}, \ \ell=1, \ldots, m_k$ and a subset of $V$'s columns to each $W_k, k=1, \ldots, K$ or to each $\{U_{k\ell}\}_{\mathcal{I}}$ without changing the distribution. To avoid this, 
Further, intuitively we need $V$ to contain all the information shared by all the layers, while $W_k$ should contain all of the information shared by the layers in group $k$ but not overlap with $V$, and similarly for $W_k$ and $U_{k\ell}$. Finally, as in all inner product models, each latent component is only identifiable up to an indefinite orthogonal rotation, defined by $\mathcal{O}_{p, q} = \{O \in \mathbb{R}^{(p + q) \times (p+q)}: O^\top I_{p, q}O = I_{p, q}\}$.  
The following proposition formalizes this intuition.   The proof can be found in section \ref{identif_proof_section} of Appendix. 
% To avoid the first possibility, it is natural, though not necessary this time, to assume that within each group $k$, we should be able to find two layers, say $s$ and $t$, such that $[W_k \ U_{ks}, U_{kt}]$ has linearly independent columns.
% Similarly, to eliminate the second possibility, it is sufficient to assume that there are layers $(k_1, \ell_1), (k_2, \ell_2)\in \mathcal{I}$ in distinct groups $k_1 \ne k_2$ such that the columns of $[V \ W_{k_1} \ W_{k_2} \ U_{k_1\ell_1} \ U_{k_2\ell_2}]$ are linearly independent.
\begin{proposition}\label{identif_propos}
Assume $\{f(\cdot;\theta, \phi), \ \theta \in \mathbb{R}\}$ is an identifiable parametric family and $\kappa(x, y) = \psi(x^\top I_{p, q}y)$ is an invertible function of the generalized
inner product.  Assume the following conditions hold:  
\begin{enumerate}
    \item For each $(k, \ell) \in \mathcal{I}$, columns of the matrix $X_{k \ell} = \left[V \ W_k \ U_{k\ell}\right]$ are linearly independent.
    \item For each $k=1, \ldots, K$, there exist $\ell_1 \neq \ell_2$ %\liza{Note I changed notation from $s4 and $t$ to $\ell_1$ and $\ell_2$ for consistency;  please update the proof accordingly.  }  
    such that the columns of the matrix 
    $\left[V \ W_k \ U_{k \ell_1} \ U_{k \ell_2}\right]
    $
    are linearly independent.
    \item There exist $(k_1, \ell_1), (k_2, \ell_2) \in\mathcal{I}$ with $k_1 \ne k_2$ such that
    the columns of the matrix \\ 
    $[V \ W_{k_1} \ W_{k_2} \ U_{k_1 \ell_1} \ U_{k_2 \ell_2}]$
    are linearly independent.
\end{enumerate}
Then the parameters of the GroupMultiNeSS model \eqref{general_edge_distrib} are identifiable up to indefinite orthogonal transformations, that is, if the probability distributions induced by two different parameterizations $\left(V, \{W_k\}_{k=1}^K, \{U_{k\ell}\}_{\mathcal{I}}\right)$ and $\left(V', \{W_k'\}_{k=1}^K, \{U_{k\ell}'\}_{\mathcal{I}}\right)$ coincide, then
\begin{align*}
     V=V^{\prime} O_0, \ \ 
    W_k = W_k'O_{k},  \ \   % , & k=1,\ldots, K,\\
     U_{k\ell}=U_{k\ell}^{\prime} O_{k\ell}, % & (k, \ell)\in\mathcal{I}.
\end{align*}
for some indefinite orthogonal rotations $O_k \in \mathcal{O}_{p_k, q_k}, k=0, \ldots, K$ and $O_{k\ell} \in \mathcal{O}_{p_{k\ell}, q_{k\ell}}, (k, \ell)\in\mathcal{I}$.
\end{proposition}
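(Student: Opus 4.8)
The plan is to reduce the statement to an assertion about Gram matrices and then peel off the latent components "from the outside in": first the shared subspace, then the individual components, then the group components, and finally $V$ itself. Throughout I take both parameterizations to satisfy Conditions 1--3 (this is what it means for the parameters of the model to be identifiable). The first step is the reduction: if the induced distributions coincide then, for every $(k,\ell)\in\mathcal{I}$, the edge-entry distributions agree entrywise; since $\{f(\cdot;\theta,\phi)\}$ is an identifiable family and $\psi$ is invertible, this forces $\kappa_{k\ell}(x^{(i)}_{k\ell},x^{(j)}_{k\ell})$, hence the generalized inner products, to agree, i.e. $\Theta_{k\ell}=\Theta_{k\ell}'$ for all $(k,\ell)$, where $\Theta_{k\ell}=VI_{p_0,q_0}V^\top+W_kI_{p_k,q_k}W_k^\top+U_{k\ell}I_{p_{k\ell},q_{k\ell}}U_{k\ell}^\top$ and likewise for the primed objects. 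Everything below is then an argument about these symmetric matrices.

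\emph{Recovering the subspaces.} By Condition 1 and invertibility of the signature matrices, $\operatorname{col}(\Theta_{k\ell})=\operatorname{col}(V)\oplus\operatorname{col}(W_k)\oplus\operatorname{col}(U_{k\ell})$. Condition 2 supplies, for each group $k$, two layers whose $U$-blocks are jointly linearly independent of $[V\ W_k]$, and a direct-sum computation then yields $\bigcap_{\ell}\operatorname{col}(\Theta_{k\ell})=\operatorname{col}(V)\oplus\operatorname{col}(W_k)=:\mathcal{S}_k$. Since this characterization does not involve the $U$'s, $\mathcal{S}_k$ is determined by the Gram matrices, so $\mathcal{S}_k=\mathcal{S}_k'$. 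Intersecting over the two groups provided by Condition 3 collapses, by the same type of argument, to $\bigcap_k\mathcal{S}_k=\operatorname{col}(V)$, whence $\operatorname{col}(V)=\operatorname{col}(V')$, with the dimensions $d_0,d_k,d_{k\ell}$ matching along the way.

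\emph{Peeling off the individual, then the group, components.} Subtracting primed from unprimed Gram matrices at layer $(k,\ell)$ shows that $C_k:=U_{k\ell}'I_{p_{k\ell}',q_{k\ell}'}U_{k\ell}'^\top-U_{k\ell}I_{p_{k\ell},q_{k\ell}}U_{k\ell}^\top$ is independent of $\ell$ and, since it also equals a difference of $V$- and $W_k$-terms, has column and row space inside $\mathcal{S}_k$. Conjugating by an orthonormal basis $Z_k$ of $\mathcal{S}_k^\perp$ annihilates $C_k$, so $Z_k^\top U_{k\ell}'$ and $Z_k^\top U_{k\ell}$ --- both of full column rank because $\operatorname{col}(U_{k\ell})\cap\mathcal{S}_k=\{0\}$ and similarly for the primed version --- have equal generalized Gram matrices; Sylvester's law of inertia then forces $(p_{k\ell},q_{k\ell})=(p_{k\ell}',q_{k\ell}')$, and the elementary single-layer fact ("equal inner-product Gram matrices with full-column-rank factors differ by an element of $\mathcal{O}_{p,q}$") gives $Z_k^\top U_{k\ell}'=(Z_k^\top U_{k\ell})\bar O_{k\ell}$ for some $\bar O_{k\ell}\in\mathcal{O}_{p_{k\ell},q_{k\ell}}$. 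Writing $U_{k\ell}'=U_{k\ell}\bar O_{k\ell}+G_{k\ell}$ with $\operatorname{col}(G_{k\ell})\subseteq\mathcal{S}_k$ and substituting back into $C_k$, the surviving cross term gives $U_{k\ell}\bigl(\bar O_{k\ell}I_{p_{k\ell},q_{k\ell}}G_{k\ell}^\top\bigr)$ equal to a matrix with column space in $\mathcal{S}_k$; since $\operatorname{col}(U_{k\ell})\cap\mathcal{S}_k=\{0\}$ and all factors are invertible, $G_{k\ell}=0$, hence $U_{k\ell}'=U_{k\ell}\bar O_{k\ell}$ and $C_k=0$. The identical manoeuvre, now with $\operatorname{col}(V)$ in the role of $\mathcal{S}_k$ and $W_k$ in the role of $U_{k\ell}$, applied to $D:=VI_{p_0,q_0}V^\top-V'I_{p_0,q_0}V'^\top=W_k'I_{p_k,q_k}W_k'^\top-W_kI_{p_k,q_k}W_k^\top$, yields $W_k'=W_k\bar O_k$ with $\bar O_k\in\mathcal{O}_{p_k,q_k}$ and $D=0$. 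Finally $D=0$ together with $\operatorname{col}(V)=\operatorname{col}(V')$ forces $V'=V\bar O_0$ and $\bar O_0I_{p_0,q_0}\bar O_0^\top=I_{p_0,q_0}$. Setting $O_0:=\bar O_0^{-1}$, $O_k:=\bar O_k^{-1}$, $O_{k\ell}:=\bar O_{k\ell}^{-1}$ --- each inverse again lying in the corresponding indefinite orthogonal group --- gives the claimed relations.

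\emph{Where the difficulty lies.} The reduction to Gram matrices and the subspace bookkeeping are routine; the substance --- and the genuinely new ingredient relative to the MultiNeSS identifiability proof --- is the hierarchical peeling step. The key realizations are that $U_{k\ell}'I U_{k\ell}'^\top-U_{k\ell}I U_{k\ell}^\top$ is forced to be constant across a group and to live in the lower subspace $\mathcal{S}_k$, that one can then split $U_{k\ell}'$ into a generalized rotation of $U_{k\ell}$ plus a contamination term in $\mathcal{S}_k$ which the direct-sum Condition 1 must kill, and that once the individual components are cleared the very same argument recurses one level up to separate the group components from the shared one. The signature/inertia matching and the implicit requirement that both parameterizations obey all three conditions are the subtle technical points to be careful about.
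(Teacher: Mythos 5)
Your proof is correct, but it follows a genuinely different route from the paper's. The paper invokes Lemma 1 of the MultiNeSS paper once per layer on the full concatenated matrix $[V\ W_k\ U_{k\ell}]$, obtaining a single indefinite-orthogonal matrix $O_{k\ell}$ with $[V\ W_k\ U_{k\ell}]O_{k\ell}=[V'\ W_k'\ U_{k\ell}']$, and then proves that $O_{k\ell}$ is block-diagonal: equating the two resulting expressions for $V'$ coming from the two layers of Condition 3 kills the $(2,1)$ and $(3,1)$ blocks, equating the two expressions for $W_k'$ from the layers of Condition 2 kills the $(3,2)$ block, and indefinite orthogonality transfers these zeros to the symmetric blocks. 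You instead work at the level of Gram matrices: you first recover the nested subspaces $\operatorname{col}(V)\subseteq\mathcal{S}_k\subseteq\operatorname{col}(\Theta_{k\ell})$ as intersections of column spaces (with Conditions 2 and 3 guaranteeing the intersections collapse correctly), and then peel off $U_{k\ell}$, $W_k$, $V$ in turn by projecting onto the orthocomplement of the lower subspace, matching signatures via Sylvester's law of inertia, applying the single-component rotation lemma, and killing the contamination term $G$ through the direct-sum structure of Condition 1. Both arguments are sound. Yours is longer but more self-contained (it needs only the one-component version of the imported lemma), it makes the signature matching $(p,q)=(p',q')$ explicit where the paper leaves it buried inside the cited lemma, it uses only the $[V\ W_{k_1}\ W_{k_2}]$ portion of Condition 3 (so it actually establishes the result under a slightly weaker hypothesis), and its recursive structure extends more transparently to the deeper hierarchies mentioned in the paper's remarks. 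You were also right to state explicitly that both parameterizations are assumed to satisfy Conditions 1--3; the paper relies on this implicitly when it applies Lemma 1 of MultiNeSS with a square rotation matrix.
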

\begin{remark}
    Intuitively, Condition 2 requires that in each group $k$ there are at least two individual components with columns linearly independent of those shared by the whole group ($[V, \ W_k]$) -- if that is not the case, then the individual components still contain some common group information. Similarly, Condition 3 requires that the shared component $V$ has linearly independent columns from at least some $[W_{k_1},\ U_{k_1\ell_1}]$ and $[W_{k_2},\ U_{k_2\ell_2}]$, as otherwise not all shared information would have been separated.  
\end{remark}

%     Define an undirected graph $\mathcal{G}_I$ on the network layers, with vertex set $\{(k, \ell): \ k=1, \ldots, K, \ell = 1,\ldots, m_k\}$, and edges
% $$
% (k_1, \ell_1) \sim (k_2, \ell_2) \Longleftrightarrow\left\{\begin{array}{lll} \left[\begin{array}{llll}
% V & W_{k_1} & U_{k_1 , \ell_1} & U_{k_2, \ell_2}
% \end{array}\right] & \text {\normalfont  is linearly independent }, & k_1 = k_2 \\
% \left[\begin{array}{lllll}
% V & W_{k_1} & W_{k_2} & U_{k_1, \ell_1} & U_{k_2, \ell_2}
% \end{array}\right] & \text{\normalfont  is linearly independent}, & \text{\normalfont otherwise}
% \end{array}\right.
% $$
% If $\mathcal{G}_I$ and the vertex-restriction of $\mathcal{G}_I$ on each group $\{(k, \ell):\ \ell = 1, \ldots, m_k\}, k=1, \ldots, K$ is connected, 

\section{Fitting the GroupMultiNeSS model}
    \label{sec:estimation}
	The true latent positions $V, \{W_k\}_{k=1}^K, \{U_{k\ell}\}_{\mathcal{I}}$ are only identifiable  up to a rotation.  To avoid making an arbitrary choice of the rotation, we instead focus on estimating the Gram matrices 
\begin{align}\label{gmn_product_notation}
    S = VI_{p_0, q_0}V^\top, \ \ 
    Q_{k} = W_kI_{p_k, q_k}W_k^\top, \ \ 
    R_{k \ell} = U_{k \ell}I_{p_{k\ell}, q_{k\ell}}U_{k \ell}^\top, 
\end{align}
for $k=1,\ldots, K$, $(k, \ell)\in\mathcal{I}$.   If needed, we can always extract estimates of the latent positions themselves by performing an SVD on the estimated Gram matrices, and fix a rotation to align latent positions across groups.  
%In practice, given the low-rank estimates $\hat{S}, \hat{Q}_k,$ and $\hat{R}_{k\ell}$, we can further extract $\hat{V}, \hat{W}_k,$ and $\hat{U}_{k\ell}$ with Adjacency Spectral Embedding (ASE) \citep{RubinDelanchy2017}. For example, to estimate $\hat{V}$, this method would suggest computing the truncated eigen-decomposition $\hat{S}=\hat{\bar{V}}\hat{\Gamma}\hat{\bar{V}}^\top$,where $\hat{\Gamma} \in\mathbb{R}^{\hat{d}_0\times \hat{d}_0}$ is a diagonal matrix containing the top $\hat{d}_0$ eigenvalues (in absolute magnitude), which are sorted on the diagonal in decreasing order with respect to their signed value, and $\hat{\bar{V}}\in\mathbb{R}^{n\times \hat{d}_0}$ contains the corresponding eigenvectors. In this case, $\hat{p}_0$ and $\hat{q}_0$ with $\hat{d}_0 = \hat{p}_0 + \hat{q}_0$ correspond to the number of positive and negative entries in $\hat{\Gamma}$, respectively.Then, we can set $\hat{V} = \hat{\bar{V}}|\hat{\Gamma}|^{1/2}$ where $|\cdot|$ takes absolute values of all entries in a matrix. Estimates $\hat{W}_k$ and $\hat{U}_{k\ell}$ can be computed similarly.

\subsection{Likelihood maximization with a nuclear norm penalty}
A  natural way to estimate parameters \eqref{gmn_product_notation} is to minimize the negative log-likelihood under the GroupMultiNeSS model, which is easy to write down due to independence.  We can write the full data likelihood as 
\begin{align}\label{log_likelihood}
    \mathcal{L}\left(S, \left\{Q_k\right\}_{k = 1}^K, \left\{R_{k\ell}\right\}_{\mathcal{I}}; \ \left\{A_{k \ell}\right\}_{\mathcal{I}}\right)=\sum_{k=1}^K \mathcal{L}_k(S + Q_k, \{R_{k\ell}\}_{\ell=1}^{m_k}; \ \left\{A_{k \ell}\right\}_{\ell=1}^{m_k}) \, , 
\end{align}
where 
\begin{equation}\label{group_log_likelihood}
        \mathcal{L}_k(S + Q_k, \{R_{k\ell}\}_{\ell=1}^{m_k}; \ \left\{A_{k \ell}\right\}_{\ell=1}^{m_k}) = -\sum_{\ell=1}^{m_k} \sum_{i\le j} \log f\left(A_{k \ell, i j} ; S_{i j}+Q_{k, i j} + R_{k \ell, ij}, \phi\right). 
\end{equation}
% If the edge-entry distribution $f$ was unavailable and we only knew the layers' mean structure
% $$\mathbb{E}[A_{k\ell}|\ \Theta_{k\ell}] = g(\Theta_{k\ell}), \quad (k, \ell) \in\mathcal{I},$$
% where $g$ is some differentiable link function,
% we could substitute the group negative log-likelihood $\mathcal{L}_k$ by the mean squared loss:
% \begin{equation}\label{group_mse}
%     \mathcal{L}_k(S, Q_k, \{R_{k\ell}\}_{\ell=1}^{m_k}; \ \left\{A_{k \ell}\right\}_{\ell=1}^{m_k}) = \sum_{\ell=1}^{m_k} \sum_{i, j} \Bigl(A_{k \ell, i j} - g(S_{i j} + Q_{k, i j} + R_{k \ell, ij})\Bigr)^2. 
% \end{equation}
The sum over $i \le j$ can be replaced with the sum over $i < j$ if we do not want to model diagonal entries.

Imposing low-rank constraints on the matrices defined \eqref{gmn_product_notation}  turns the optimization problem \eqref{log_likelihood} non-convex and typically intractable.  
%$$
%\begin{array}{llll}
%&\operatorname{rank}^+(S) = p_0,& \operatorname{rank}^-(S) = q_0 & \\
%&\operatorname{rank}^+(Q_k) = p_{k}, & \operatorname{rank}^-(Q_k) = q_{k} & \quad k =1,\ldots,K, \\
%&\operatorname{rank}^+(R_{k \ell}) =p_{k\ell},& \operatorname{rank}^-(R_{k \ell}) %=q_{k\ell} & \quad (k, \ell)\in\mathcal{I},
%\end{array}
%$$
%where $\operatorname{rank}^{+}(Z)$ and $\operatorname{rank}^{-}(Z)$ denote respectively the number of strictly positive and strictly negative eigenvalues of a symmetric matrix $Z$. 
%Unfortunately, even in an oracle scenario where the dimensions of latent components are known, including the rank constraints in the optimization makes the problem non-convex and usually intractable. Instead, %following the approach of \cite{multiness}, 
Instead, we follow the common strategy of replacing the rank constraints with a nuclear norm penalty, which gives us a convex optimization problem as long as the edge distribution density $f(\cdot; \theta, \phi)$ is log-concave in $\theta$.

Note that the group log-likelihood $\mathcal{L}_k$ depends on $S + Q_k$ and $\{R_{k\ell}\}_{\ell=1}^{m_k}$ for that $k$ only, allowing for a two-stage approach to solving the optimization problem:   we can first estimate $\widehat{S + Q_k}$ and $\{\hat{R}_{k\ell}\}_{\ell=1}^{m_k}$ within each group and then estimate $\hat{S}$ and $\hat{Q}_k, k=1,  \ldots, K$ by optimizing the full likelihood $\mathcal{L}$ with all $\{\hat{R}_{k\ell}\}_\mathcal{I}$ fixed.    Formally, in the first stage we solve the optimization problem
\begin{equation}\label{first-stage-conv-prob}
\begin{aligned}
    \min_{S + Q_k, R_{k \ell}}\Bigl\{& \mathcal{L}_k\left(S + Q_k,\left\{R_{k\ell}\right\}_{\ell=1}^{m_k}; \ \left\{A_{k \ell}\right\}_{\ell=1}^{m_k}\right)+ \lambda_{1k}\|S + Q_k\|_*+\sum_{\ell=1}^{m_k}  \lambda_{1k}\alpha_{1k \ell}\left\|R_{k \ell}\right\|_* \Bigr\}
\end{aligned}
\end{equation}
and in the second stage, we solve 
\begin{equation}\label{second-stage-conv-prob}
   \min_{S, Q_k}\Bigl\{\mathcal{L}\left(S, \left\{Q_k\right\}_{k=1}^K,\bigl\{\hat{R}_{k\ell}\bigr\}_{\mathcal{I}}; \ \left\{A_{k \ell}\right\}_{\mathcal{I}}\right)+\lambda_2\|S\|_*+\sum_{k=1}^K\lambda_2 \alpha_{2k}\left\|Q_{k }\right\|_* \Bigr\}.
\end{equation}
Here, $\|\cdot\|_*$ denotes the matrix nuclear norm (the sum of its singular values), and $ \{\lambda_{1k}\}_{k=1}^{K}, \lambda_2$, $\{\alpha_{2k}\}_{k=1}^K,\{\alpha_{1k \ell}\}_{\mathcal{I}}$ are hyperparameters;  choosing them in practice is discussed in Section \ref{hyperparam_tuning_section}).

\begin{remark}
    The two-stage fitting procedure can be thought of as a bottom-to-top fitting of the GroupMultiNeSS tree in Figure \ref{fig:latent_space_gmn}:  the first stage estimates the "leaves" at the bottom, which stay fixed after, and the second stage estimates the next level of the tree. This procedure is thus easily extended to nested groups, as long as they are arranged in a hierarchical tree.   %estimation of the individual components in the first stage can be interpreted as ``cutting off'' the leaves from the tree, since they are no longer updated in the second stage. Similarly, the second-stage separation of the group and shared components can be interpreted as cutting off the tree's second-level nodes, which act as new leaves after the first-stage completion. This analogy suggests a natural extension of the proposed fitting procedure to the model, which generalizes GroupMultiNeSS by allowing the tree structure of the network layers to involve a more complex hierarchy, for example, by including additional subgroup latent components. 
\end{remark}
We solve the optimization problems \eqref{first-stage-conv-prob} and \eqref{second-stage-conv-prob} by block coordinate descent, updating one of the matrices at a time while keeping the others fixed.   Each matrix update is performed via a proximal gradient method developed for the nuclear norm penalty by \citep{FithianMazumder2018}. Given the gradient step size $\eta > 0$, the update at iteration $t \ge 1$ is given by 
\begin{equation}\label{proximal_map_optim}
    Z^{(t+1)} = \argmin_{Z\in\mathbb{R}^{n\times n}}\Bigl\{ {1\over 2\eta} \Bigl\|Z-\Bigl[Z^{(t)} - \eta \nabla L(Z^{(t)})\Bigr] \Bigr\|_F^2 + \rho \|Z\|_* \Bigr\} \ , 
\end{equation}
where $L$, $Z$, and $\rho$ stand for generic loss function, matrix being optimized over, and the nuclear norm penalty tuning parameter,  respectively.  It is well-known that the solution to the optimization problem in \eqref{proximal_map_optim} is given by the soft-thresholding operator, 
\begin{equation}\label{proximal_map_operator}
    Z^{(t+1)} 
 = \mathcal{T}_{\eta\rho}\left(Z^{(t)} - \eta \nabla L(Z^{(t)})\right) , 
\end{equation}
where for a  matrix with the singular value decomposition $Z=U\Sigma V^\top$ the soft-thresholding operator $\mathcal{T}_s$  is defined as
$$\mathcal{T}_s(Z) =  U \operatorname{diag}[(\Sigma_{11} - s)_+, \ldots, (\Sigma_{nn} - s)_+]V^\top \, , 
$$
with $(x)_+ := \max\{0, x\}$.   
Intuitively, the soft-thresholding operator projects the previous iterate onto the space of low-rank matrices by truncating its singular values.  

If the true rank $d$ of $Z$ is known (oracle estimator), soft thresholding is usually replaced by hard thresholding, 
\begin{equation}\label{proximal_map_operator_hard}
Z^{(t+1)} 
 = \left[Z^{(t)} - \eta \nabla L(Z^{(t)})\right]_d
\end{equation}
where  %\begin{equation}\label{hard_threshold}
$ [Z]_d:=\argmin_{\operatorname{rank}\left(Z^{\prime}\right) \leq d}\left\|Z-Z^{\prime}\right\|_F$, 
%\end{equation}
 by the Eckart-Young theorem, is the truncation of the SVD of $Z$ to the largest $d$ singular values.

%In the first stage, the loss is $\mathcal{L}_k$ and the matrices are $S+Q_k$ and $R_{k\ell}$ with respective thresholds $\lambda_{1k}$ and $\lambda_{1k}\alpha_{1k\ell}$. In the second stage, the loss is $\mathcal{L}$ and the matrices are $S$ and $Q_k$ with thresholds $\lambda_2$ and $\lambda_2\alpha_{2k}$, respectively.

We summarize the entire optimization procedure in Algorithm \ref{alg:stepwise_update}. In the first stage, we solve problem \eqref{first-stage-conv-prob} for every $k=1, \ldots, K$ by alternating between updates of $S+Q_k$ and $\{R_{k\ell}\}_{l=1}^{m_k}$ until convergence. In the second stage, we solve problem \eqref{second-stage-conv-prob} by alternating between updates of $S$ and $\{Q_k\}_{k=1}^K$. Each matrix update is performed according to \eqref{proximal_map_operator} but with stage-specific learning rates  $\eta_1,\eta_2 > 0$ and parameter-specific normalization for stability:   we normalize $\eta_1$ by $m_k$ for updates of $S + Q_k$ and  $\eta_2$ by $M$ and $m_k$ for updates of $S$ and $Q_k$, respectively, that is, we divide by the number of times these parameters appear in the joint log-likelihood \eqref{log_likelihood}. The parameter matrices obtained by running the alternating updates within each optimization subproblem are passed to the optional {\it refitting step}, a common post-processing step for the spectral regularization methods \citep{mazumder}. This procedure ``unshrinks'' non-zero eigenvalues of the fitted matrices to compensate for excessive truncation caused by the nuclear norm penalty (see details in Section \ref{refitting_details_section} of the Appendix). The first and second stage refitting procedures differ because the individual components $\hat{R}_{k\ell}$ from the first stage are kept fixed in the second stage; we refer to the algorithms as FirstStageRefit and SecondStageRefit, respectively.

We postpone the discussion of initialization to Section \ref{init_section} and for now state the algorithm for generic initialization algorithms FirstStageInit and SecondStageInit.  % The detailed investigation of initialization options is postponed to Section \ref{init_section}. Notice that for initializing $S+Q_k$ in the first stage, we only use the $k$-th group layers $\{A_{k\ell}\}_{\ell=1}^{m_k}$ as $S+Q_k$ is essentially an independent variable that does not appear in any other group except for the group $k$ itself and so other layers cannot give extra information on its possible value. 

\begin{remark}
The main computational bottleneck within every iteration of the first and second stages of Algorithm \ref{alg:stepwise_update} is computing the SVD for each updated parameter matrix, which is needed for soft thresholding. Computing the truncated SVD for $r$ leading singular values gives the total iteration complexity of $O(m_kr n^2)$ for Stage I and $O(Kr n^2)$ for Stage II. In our implementation, we take $r=\sqrt{n}$ as the default value. We also parallelize the first stage computations across groups.    
\end{remark}

\begin{remark}  One alternative approach to optimizing the joint log-likelihood in \eqref{log_likelihood} would be to add nuclear norm penalties for all of the matrices $S, Q_k, R_{k\ell}$ and update all of them in a single loop.  An important advantage of Algorithm \ref{alg:stepwise_update} over this is that the loop over groups $k = 1,\ldots, K$ can be  parallelized.  While we then have to perform a second stage optimization, we expect it to be much less computationally expensive than each first-stage problem, because the number of parameter matrices in the second stage is $K + 1$, which is typically much smaller than $m_k + 1, k=1, \ldots, K$ for each of the parallelized optimizations in the first stage.   Empirically, we indeed observed that this approach is not only significantly more expensive in terms of CPU time and the number of SVD computations, but also less stable in terms of convergence. % but also exhibits much less stable convergence. We explain this with the intuition that separating only two types of components at a time (first, $S + Q_k$ and $R_{k\ell}$, then $S$ and $Q_k$) results in much less undesirable component mixing than separating all three types at once.
\end{remark}

%Another reasonable alternative would be to substitute the joint log-likelihood optimized in the second stage of Algorithm \ref{alg:stepwise_update} with the quadratic discrepancies between $S+ Q_k$ and $\widehat{S + Q_k}$ estimated within the first stage: $$\mathcal{L}_{sq}(S, \{Q_k\}_{k=1}^K) = \sum_{(k, \ell)\in\mathcal{I}}\|S + Q_k - \widehat{S + Q_k}\|_F^2.$$ The theoretical results of Propositions \ref{second_stage_gaus_case_proposition} and \ref{SpQ_avg_residual_relationship_propos} suggest that this optimization problem would imply very similar results under the Gaussian edge-entry distribution. Our simulations supported this claim but showed that likelihood-based optimization provides much better results for non-Gaussian distributions, such as the logistic model, suggesting that accounting for the form of the edge-entry distribution is useful.  \liza{Is this an important point to make?  I would just omit this whole paragraph.  Not everything we considered and did not do needs to be described.  }

\begin{remark}
Another alternative is to apply a non-convex approach analogous to that of \cite{TianYinqiu2024}, which pre-estimates the latent component ranks via the Shared Space Hunting (SSH) approach; we could apply SSH and then replace soft with hard thresholding.  
The key advantage of this approach is avoiding cross-validation, and estimating ranks independently of the assumed likelihood.  However, SSH is only guaranteed to accurately estimate ranks that are much smaller than $n$, and we have empirically observed it is often unstable in harder settings.  When the ranks are misspecified, our Algorithm \ref{alg:stepwise_update} performs significantly better, and it gives comparable or marginally worse results when ranks are estimated correctly.   While both are valid approaches, we have opted to use cross-validation for stability, in spite of its higher computational cost.  

\end{remark}

The optimization procedure in the first stage \eqref{first-stage-conv-prob} essentially applies the MultiNeSS fitting algorithm to layers $\{A_{k\ell}\}_{\ell=1}^{m_k}$ (as defined in Equation 5 of \citet{multiness}),  aiming to separate the individual components $\{R_{k\ell}\}_{\ell=1}^{m_k}$ from the  $S+Q_k$, the shared component for group $k$. In contrast, the second-stage optimization is not directly interpretable as an application of MultiNeSS, since we fix the estimated individual components. To provide  some intuition about the second stage of Algorithm \ref{alg:stepwise_update}, we explicitly state its relationship to the MultiNeSS algorithm in the special case of the Gaussian edge-entry distribution, showing that it essentially fits the MultiNeSS model to the group-wise averaged residuals between the layers and their estimated individual components. The proof of this proposition can be found in Section \ref{fit_section_proofs} of the Appendix.

\begin{algorithm}[t!]%\label{alg:stepwise_update}
\caption{Stepwise proximal gradient descent for optimizing \eqref{log_likelihood} \label{alg:stepwise_update}}
\begin{algorithmic}
    \State \textbf{Input:} Adjacency matrices $\{A_{k\ell}\}_\mathcal{I}$, penalty coefficients $\{\lambda_{1k}\}_{k=1}^K, \lambda_2, \{\alpha_{2k}\}_{k=1}^K, \{\alpha_{1k\ell}\}_\mathcal{I}$, learning rates $\eta_1, \eta_2 > 0$.
    \State \textbf{Output:} $\hat{S}$, $\{\hat{Q}_k\}_{k=1}^K$, $\{\hat{R}_{k\ell}\}_\mathcal{I}$
    %, initial rank estimates for shared and group components $\hat{d}_k, k=0, \ldots, K$.
    \\
    % maximum number of steps $T$ 
    % \State $S^{(0)} \gets \mathbf{0}_{n\times n}$
    \State \underline{\bf Stage I}
    \For {$k=1, \ldots, K$}
        \State $(S + Q_k)^{(0)}, \{R^{(0)}_{k\ell}\}_{\ell=1}^{m_k} \gets \operatorname{FirstStageInit}(\{A_{k\ell}\}_{\ell=1}^{m_k})$
      % \State $(S + Q_k)^{(0)} \gets \left[{1\over m_k} \sum_{\ell=1}^{m_k} A_{k\ell}\right]_{\hat{d}_0 + \hat{d}_k}$
      % \State $R^{(0)}_{k\ell} \gets \mathbf{0}_{n\times n}, \text{ for } \ell = 1, \ldots, m_k$.
       \For{iteration $t = 1, 2,...$  until convergence}
            $$
            \begin{array}{ll}
                &R_{k\ell}^{(t)} \gets \mathcal{T}_{\eta_1\lambda_{1k}\alpha_{1k\ell}} \left[ R_{k\ell}^{(t-1)} - \eta_1\frac{\partial}{\partial R_{k\ell}} \mathcal{L}_k( (S + Q_k)^{(t-1)}, \{R_{k\ell}^{(t-1)}\}) \right],\quad \text{ for } \ell = 1, \ldots, m_k,\\
                &(S + Q_k)^{(t)} \gets \mathcal{T}_{\eta_1\lambda_{1k}/m_k} \left[ (S + Q_k)^{(t-1)} - \frac{\eta_1}{m_k} \frac{\partial}{\partial Q_k} \mathcal{L}_k((S + Q_k)^{(t-1)}, \{R_{k\ell}^{(t)}\}) \right].
            \end{array}
            $$
        \EndFor
        \State $\widehat{S + Q_k}, \{\hat{R}_{k\ell}\}_{\ell=1}^{m_k} \gets \operatorname{FirstStageRefit}\bigl((S+Q_k)^{(t)}, \{R_{k\ell}^{(t)}\}_{\ell=1}^{m_k}; \{A_{k\ell}\}_{\ell=1}^{m_k}\bigr)$
    \EndFor\\
    \State \underline{\bf Stage II}
    \State $S^{(0)}, \{Q_k^{(0)}\}_{k=1}^K \gets \operatorname{SecondStageInit}(\{A_{k\ell}\}_{\mathcal{I}}, \{\hat{R}_{k\ell}\}_\mathcal{I}, \{\widehat{S + Q_k}\}_{k=1}^{K})$
    % \State $S^{(0)} \gets \Bigl[{1\over K} \sum_{k=1}^K \bigl({1\over m_k}\sum_{\ell=1}^{m_k} (A_{k\ell} - \hat{R}_{k\ell})\bigr)\Bigr]_{\hat{d}_0}$
    % \State $Q_{k}^{(0)} \gets \mathbf{0}_{n\times n}$ for $k=1, \ldots, K$
    \For{iteration $t = 1, 2,... $ until convergence}
            \State $$
            \begin{array}{lll}
                &Q_k^{(t)} \gets \mathcal{T}_{\eta_2\lambda_2\alpha_{2k}/m_k} \left[ Q_k^{(t-1)} - \frac{\eta_2}{m_k} \frac{\partial}{\partial Q_k} \mathcal{L}(S^{(t-1)}, \{Q_k^{(t-1)}\}, \{\hat{R}_{k\ell}\}) \right], & \text{ for } k=1, \ldots, K,\\
                &S^{(t)} \gets \mathcal{T}_{\eta_2\lambda_2/M} \left[ S^{(t-1)} - \frac{\eta_2}{M} \frac{\partial}{\partial S} \mathcal{L}(S^{(t-1)}, \{Q_k^{(t)}\}, \{\hat{R}_{k\ell}\}) \right].&
            \end{array}
            $$
        \EndFor
    \State $\hat{S}, \{\hat{Q}_{k}\}_{k=1}^{K} \gets \operatorname{SecondStageRefit}\bigl(S^{(t)}, \{Q_{k}^{(t)}\}_{k=1}^{K}, \{\hat{R}_{k\ell}\}_{\mathcal{I}}; \{A_{k\ell}\}_{\mathcal{I}}\bigr)$

\end{algorithmic}
\end{algorithm}

\begin{proposition}\label{second_stage_gaus_case_proposition}
    With the edge entry distribution $f(\cdot; \theta, \sigma) = \mathcal{N}(\theta, \sigma^2)$, Problem \eqref{second-stage-conv-prob} coincides with the objective of the MultiNeSS model fitted to the layers $\tilde{A}_k = {1\over m_k}\sum_{\ell=1}^{m_k} (A_{k\ell} -\hat{R}_{k\ell}), \ k=1,\ldots, K$ under the assumption that their edge entries are independent and Gaussian with layer-dependent variances:
    $$\tilde{A}_{k, ij} \stackrel{\text{\textnormal{ind}}}{\sim} \mathcal{N}(S_{ij} +Q_{k, ij},  \sigma^2 / m_k), \quad 1\le i \le j \le n, \quad k=1, \ldots, K.$$
\end{proposition}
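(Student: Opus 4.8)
The plan is to specialize the joint negative log-likelihood \eqref{log_likelihood} to the Gaussian case and then apply a within-group analysis-of-variance decomposition to collapse the sum over layers into a function of the group averages $\tilde A_k$. Concretely, when $f(\cdot;\theta,\sigma)=\mathcal{N}(\theta,\sigma^2)$, each summand $-\log f\bigl(A_{k\ell,ij};\,S_{ij}+Q_{k,ij}+\hat R_{k\ell,ij},\,\sigma\bigr)$ equals $\tfrac{1}{2\sigma^2}\bigl(A_{k\ell,ij}-S_{ij}-Q_{k,ij}-\hat R_{k\ell,ij}\bigr)^2$ plus a term that does not involve $S$ or $Q_k$. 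Writing $B_{k\ell}:=A_{k\ell}-\hat R_{k\ell}$ — a fixed matrix, since $\hat R_{k\ell}$ is frozen after the first stage — the group term \eqref{group_log_likelihood} becomes $\tfrac{1}{2\sigma^2}\sum_{\ell=1}^{m_k}\|B_{k\ell}-(S+Q_k)\|_F^2$ up to an additive constant, where the Frobenius norm runs over node pairs $i\le j$ (or $i<j$ in the loop-free case); symmetry of all matrices involved makes the choice immaterial.

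The key step is the identity
$$\sum_{\ell=1}^{m_k}\bigl\|B_{k\ell}-(S+Q_k)\bigr\|_F^2=\sum_{\ell=1}^{m_k}\bigl\|B_{k\ell}-\bar B_k\bigr\|_F^2+m_k\bigl\|\bar B_k-(S+Q_k)\bigr\|_F^2,\qquad \bar B_k:=\tfrac{1}{m_k}\sum_{\ell=1}^{m_k}B_{k\ell}=\tilde A_k,$$
which holds because the cross term $\langle B_{k\ell}-\bar B_k,\ \bar B_k-(S+Q_k)\rangle$ sums to zero over $\ell$. The first term on the right is constant in $(S,Q_k)$, so summing over $k$ in \eqref{log_likelihood} yields $\mathcal{L}\bigl(S,\{Q_k\},\{\hat R_{k\ell}\};\{A_{k\ell}\}\bigr)=\sum_{k=1}^K\tfrac{m_k}{2\sigma^2}\|\tilde A_k-S-Q_k\|_F^2+c$, with $c$ independent of $(S,\{Q_k\})$.

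To finish, I would recognize $\tfrac{m_k}{2\sigma^2}\|\tilde A_k-S-Q_k\|_F^2$ as exactly $-\sum_{i\le j}\log\mathcal{N}\bigl(\tilde A_{k,ij};\,S_{ij}+Q_{k,ij},\,\sigma^2/m_k\bigr)$ up to an additive constant — that is, the per-group negative log-likelihood for $\tilde A_k$ under the stated heteroskedastic Gaussian model — so that $\mathcal{L}$ equals, up to a constant, the negative log-likelihood of that model. Since the penalty $\lambda_2\|S\|_*+\sum_{k=1}^K\lambda_2\alpha_{2k}\|Q_k\|_*$ is unchanged, adding it to both sides shows that the objective in \eqref{second-stage-conv-prob} agrees, up to an additive constant, with the MultiNeSS objective (Equation 5 of \cite{multiness}) applied to the layers $\tilde A_1,\dots,\tilde A_K$ with shared component $S$, individual components $Q_k$, and layer variances $\sigma^2/m_k$; in particular the two optimization problems have the same set of minimizers.

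There is no deep obstacle here: the only points needing care are the bookkeeping of the additive constants (verifying they genuinely contain no dependence on $S$ or $Q_k$, including the ANOVA leftover $\sum_{\ell}\|B_{k\ell}-\bar B_k\|_F^2$), the consistent handling of the $i\le j$ versus $i<j$ summation together with matrix symmetry, and matching the penalty terms to the precise form of the MultiNeSS objective so that the word ``coincides'' is literally accurate rather than merely up to reparametrization.
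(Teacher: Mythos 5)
Your proposal is correct and follows essentially the same route as the paper: the paper completes the square entry-wise in $(S_{ij}+Q_{k,ij})$ to show the objective reduces, up to terms independent of $S$ and $Q_k$, to $\sum_k \frac{m_k}{2\sigma^2}\sum_{i\le j}(S_{ij}+Q_{k,ij}-\tilde A_{k,ij})^2$ plus the unchanged nuclear-norm penalties, which is exactly your ANOVA decomposition written out coordinate-wise. Your version is slightly more explicit about the leftover constant $\sum_\ell\|B_{k\ell}-\bar B_k\|_F^2$ and the final matching to the MultiNeSS objective, but there is no substantive difference in the argument.
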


\subsection{Choice of tuning parameters}\label{hyperparam_tuning_section}

In this section, we propose a procedure for choosing the hyperparameters $\{\lambda_{1k}\}_{k=1}^K, \lambda_2,$ 
$\{\alpha_{2k}\}_{k=1}^K,$ and $\{\alpha_{1k\ell}\}_{\mathcal{I}}$ used in Algorithm \ref{alg:stepwise_update}, based on the commonly used edge cross-validation method of \cite{li_network_2020}. 

At first glance, the total number of tuning parameters seems overwhelming, but there are several ways to  significantly simplify cross-validation. Note that $\lambda_{1k}$ with $\{\alpha_{1k\ell}\}_{\ell=1}^{m_k}$ only appear in \eqref{first-stage-conv-prob} and $\lambda_{2}$ with $\{\alpha_{2k\ell}\}_{\ell=1}^{m_k}$ only in \eqref{second-stage-conv-prob}, which means we can tune these groups of parameters separately. We further fix $\alpha_{1k} := \alpha_{1k\ell}$ for $(k, \ell) \in \mathcal{I}$ and $\alpha_2 := \alpha_{2k}$ for $k=1, \ldots, K$, resulting in only two parameters to tune for each of the subproblems. As a simpler alternative, one could also just tune the more consequential $\lambda_2, \{\lambda_{1k}\}_{k=1}^K$ parameters, and set $\{\alpha_{1k\ell}\}_{\mathcal{I}}$ and $\{\alpha_{2k}\}_{k=1}^K$ to their theoretically optimal values derived for the Gaussian edge entry distribution in Corollary \ref{oracle_rate_corolalry}. In our simulations, this approach gave results comparable to tuning two hyperparameters per subproblem, and we set it as the default option in our implementation.

To tune the hyperparameters of the $k$-th subproblem in the first stage, we sample a random subset of ``training'' node pairs in layers of the $k$-th group:
$$\mathcal{A}^{(k)}_{train} \subset \mathcal{A}^{(k)} := \{A_{k\ell, ij}: \ell=1, \ldots, m_k, \ 1\le i\le j\le n\} . $$ 
We then solve \eqref{first-stage-conv-prob} with log-likelihood terms in $\mathcal{L}_k$ restricted to node pairs in $\mathcal{A}^{(k)}_{train}$ and then evaluate the non-penalized likelihood \eqref{group_log_likelihood} on the remaining ``test'' node pairs $\mathcal{A}_{test}^{(k)}  =\mathcal{A}^{(k)} \setminus \mathcal{A}^{(k)}_{train}$, and choose the parameters from a pre-defined grid to optimize the test log-likelihood averaged across multiple cross-validation folds for stability. Hyperparameters in the second stage problem \eqref{second-stage-conv-prob} are tuned similarly, with the only difference that node pairs are sampled from $A_{k\ell}, (k, \ell)\in\mathcal{I}$.

%we can randomly sample a subset of edge entries from all the layers $$\mathcal{A}_{train} \subset \mathcal{A} := \{A_{k\ell, ij}: k=1,\ldots, K,\ \ell=1, \ldots, m_k,\ 1\le i\le j\le n\},$$  solve the problem with log-likelihood terms restricted to $\mathcal{A}_{train}$, and evaluate the non-penalized log-likelihood \eqref{log_likelihood} on $\mathcal{A}_{test} =\mathcal{A}\setminus \mathcal{A}_{train}$.

\subsection{Initialization}\label{init_section}
Both stages of the Algorithm \ref{alg:stepwise_update} require initial values, but they are convex problems, so the choice of initialization primarily impacts how long it takes the algorithm to converge, not its ability to attain the global optimum.  However, our theoretical results rely on the initializer being sufficiently close to the truth. 

 For the Gaussian edge-entry model, we initialize the shared component of a collection of layers as their average and the individual components as the resulting residuals.   When the ranks are known, these can be further truncated, resulting in initializers   %When the ranks of the latent components are known or preliminarily estimated, we can also truncate the initializers, as defined in \eqref{hard_threshold}.%In our case, due to the first identifiability assumption in Proposition \ref{identif_propos}, the rank of the total shared component for layers in group $k$ is $d_0 + d_k$, implying the initializer:
\begin{equation}\label{avg_init_first_stage}
    (S + Q_k)^{(0)} = \Bigl[ {1\over m_k}\sum_{\ell=1}^{m_k} A_{k\ell}\Bigr]_{d_0 + d_k}, \quad R_{k\ell}^{(0)} = \Bigl[A_{k\ell} - (S + Q_k)^{(0)}\Bigr]_{d_{k\ell}}, \quad \ell = 1, \ldots , m_k.
\end{equation}
For exponential family distributions with a non-linear link function $g$, we first get a proxy of $\Theta_{k\ell}$ by applying $g^{-1}$ to each layer $A_{k\ell}$ element-wise (truncating if necessary to avoid the inverse going to infinity), and then average the transformed layers. 

Alternatively, one can use the recently developed ``shared space hunting'' (SSH) approach (Algorithm 1 in \citet{TianYinqiu2024}).   This method is computationally more expensive than averaging as it requires first estimating the latent positions $X_{k\ell}$ of the nodes in each layer and the latent space dimensions $D_{k\ell}$.   Our simulations (available on GitHub) suggest that the SSH can produce very poor initializers if the latent dimensions are inaccurately estimated, which tends to happen when $n$ is small and/or the true ranks are large, whereas with correctly estimated ranks the SSH approach produces only marginally better results than averaging.  Thus we use averaging as the default initialization option, due to its robustness and low computational cost.

% This observation is supported by the theoretical analysis of SSH in \citep{TianYinqiu2024}, which requires that the latent rank of each layer is $o(\log n)$ for consistency. On the other hand, we show empirically that the initializer in \eqref{avg_init_first_stage} robustly estimates the shared component even if used untruncated, suggesting it can be preferable if latent ranks are large, unknown, or cannot be accurately estimated. Furthermore, our theoretical analysis suggests that consistency with the averaging initializer is achieved even when the latent ranks are $o(n)$ (see Lemma \ref{init_sqrt_n_proposition} for more detail). On the negative side, contrary to the SSH approach, our theory requires that the number of layers grows to infinity and their pairwise cosine similarity tends to zero, as this ensures the averaging initializer is sufficiently close to the truth.

For Stage II, we have more initialization options to choose from since it can use the first-stage estimates $\{\widehat{S+Q_k}\}_{k=1}^K$ and $\{\hat{R}_{k\ell}\}_\mathcal{I}$. % besides the layers $\{A_{k\ell}\}_\mathcal{I}$ themselves. 
The natural candidates for the initializer of the shared component $S$ would be the average of either the estimates $\widehat{S + Q_k}$, 
\begin{equation}\label{mean_spq_init}
    S^{(0)}_{\text{sq}} = \Bigl[{1\over K} \sum_{k=1}^K \widehat{S + Q_k}\Bigr]_{d_0}
\end{equation}
or of the residuals between the adjacency matrix and the individual component estimate, 
\begin{equation}\label{second_stage_avg_resid_init}
    S^{(0)}_{\text{resid}} = \Bigl[{1\over K} \sum_{k=1}^K {1\over m_k} \sum_{\ell=1}^{m_k} (A_{k\ell} - \hat{R}_{k\ell})\Bigr]_{d_0}.
\end{equation}
Similarly to the first-stage initializer in \eqref{avg_init_first_stage}, for non-Gaussian distributions each layer $A_{k\ell}$ in \eqref{second_stage_avg_resid_init} should be first transformed by the inverse link function $g^{-1}$ and then truncated.
Alternatively, the SSH approach can be used for the initial estimation of the shared component $S$ from either $\{\widehat{S+Q}_k\}_{k=1}^{K}$ or
$\Bigl\{{1\over m_k}\sum_{\ell=1}^{m_k}(A_{k\ell} - \hat{R}_{k\ell})\Bigr\}_{k=1}^K$.

In our implementation of Algorithm \ref{alg:stepwise_update}, we used \eqref{second_stage_avg_resid_init} as the second-stage initializer, as it gives the most direct parallel to MultiNeSS per Proposition \ref{second_stage_gaus_case_proposition}. 
However, empirically we observed that both initializations \eqref{second_stage_avg_resid_init} and \eqref{mean_spq_init} result in essentially the same convergence speeds for problem \eqref{second-stage-conv-prob}. 
To provide a possible theoretical explanation to this observation,  we derive an explicit relationship between the two under the Gaussian edge distribution and a mild extra assumption on the self-loop distribution.   The following proposition demonstrates that in the Gaussian case, $\widehat{S + Q_k}$ is the soft-thresholded version of the averaged residuals ${1\over m_k} \sum_{\ell=1}^{m_k} \bigl(A_{k\ell} - \hat{R}_{k\ell}\bigr)$.  The proof can be found in Section \ref{fit_section_proofs} of the Appendix.

\begin{proposition}\label{SpQ_avg_residual_relationship_propos}
Assume that the edge entry distribution is $f(\cdot;\theta, \sigma) = \mathcal{N}(\theta, \sigma^2)$ for the layers' off-diagonal entries and $\mathcal{N}(\theta, 2\sigma^2)$ for the diagonal entries. Then, the estimates produced by solving problem \eqref{first-stage-conv-prob} are related as follows:
\begin{equation}\label{SpQ_avg_residual_relationship}
       \widehat{S+Q}_k = \mathcal{T}_{2\sigma^2\lambda_{1k} / m_k}\Bigl[{1\over m_k}\sum_{\ell=1}^{m_k}\bigl(A_{k\ell} - \hat{R}_{k\ell}\bigr)\Bigr].
   \end{equation}
\end{proposition}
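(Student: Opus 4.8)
The plan is to expand the first-stage objective \eqref{first-stage-conv-prob} in the Gaussian case and recognize it as a coupled nuclear-norm-penalized least-squares problem whose minimizer in the $S+Q_k$ direction admits a closed-form soft-thresholding solution once the individual components are held fixed at their optimum. First I would substitute the Gaussian negative log-likelihood $\mathcal{L}_k(S+Q_k, \{R_{k\ell}\}; \{A_{k\ell}\}) = \tfrac{1}{2\sigma^2}\sum_{\ell=1}^{m_k}\sum_{i<j}(A_{k\ell,ij} - S_{ij} - Q_{k,ij} - R_{k\ell,ij})^2$ plus the diagonal contribution, which carries weight $1/(4\sigma^2)$ per loop entry because of the doubled variance $2\sigma^2$ on the diagonal; this is precisely why the extra assumption on the self-loop distribution is needed, as it makes the diagonal and off-diagonal quadratic terms combine into the single Frobenius norm $\tfrac{1}{2\sigma^2}\sum_{\ell}\|A_{k\ell}-S-Q_k-R_{k\ell}\|_F^2$ after accounting for the symmetric double-counting. (The factor $2\sigma^2$ is exactly what converts the weighted sum over $i\le j$ into the unweighted Frobenius norm over all $(i,j)$.)

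Next I would fix $\{R_{k\ell}\} = \{\hat R_{k\ell}\}$ at the joint optimum and minimize over $S+Q_k$ alone. Writing $Z := S+Q_k$, the objective in $Z$ becomes $\tfrac{1}{2\sigma^2}\sum_{\ell=1}^{m_k}\|(A_{k\ell}-\hat R_{k\ell}) - Z\|_F^2 + \lambda_{1k}\|Z\|_*$. Completing the square in the quadratic term, $\sum_\ell \|(A_{k\ell}-\hat R_{k\ell}) - Z\|_F^2 = m_k\|Z - \bar B_k\|_F^2 + (\text{const})$ where $\bar B_k = \tfrac{1}{m_k}\sum_{\ell=1}^{m_k}(A_{k\ell}-\hat R_{k\ell})$ is the averaged residual, so the problem reduces to $\min_Z \tfrac{m_k}{2\sigma^2}\|Z - \bar B_k\|_F^2 + \lambda_{1k}\|Z\|_*$, i.e. $\min_Z \tfrac12\|Z - \bar B_k\|_F^2 + \tfrac{\sigma^2\lambda_{1k}}{m_k}\|Z\|_*$. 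The standard proximal-operator identity for the nuclear norm — the same one quoted around \eqref{proximal_map_operator} in the paper — gives the unique minimizer $Z = \mathcal{T}_{2\sigma^2\lambda_{1k}/m_k}(\bar B_k)$; I need to be careful here that the soft-thresholding level matches the stated $2\sigma^2\lambda_{1k}/m_k$ rather than $\sigma^2\lambda_{1k}/m_k$, which will be forced by correctly tracking the $\tfrac12$ versus $1$ normalization of the Frobenius term coming from the $i<j$ convention and the loop variance, so I would do that bookkeeping explicitly.

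Finally I would argue that $\hat R_{k\ell}$ is indeed the minimizer of the \emph{joint} problem restricted to the $R$-block (which it is, being a coordinate of a global optimum of a convex objective), so substituting it and then optimizing over $S+Q_k$ genuinely recovers the $S+Q_k$-coordinate $\widehat{S+Q}_k$ of the global optimum; this uses convexity of \eqref{first-stage-conv-prob} (valid since the Gaussian density is log-concave in $\theta$) to justify that blockwise optimality characterizes the optimum. The main obstacle is purely the constant-tracking: getting the soft-thresholding parameter to come out as exactly $2\sigma^2\lambda_{1k}/m_k$ requires handling the diagonal entries with their $2\sigma^2$ variance and the $i\le j$ summation convention consistently, and this is where the "mild extra assumption on the self-loop distribution" does its work — without it the diagonal and off-diagonal terms would not assemble into a clean scalar multiple of a single Frobenius norm and no closed form in terms of $\mathcal{T}$ would result.
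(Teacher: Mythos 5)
Your approach is the same as the paper's: rewrite the Gaussian negative log-likelihood as a scalar multiple of a Frobenius norm, fix $\{R_{k\ell}\}$ at the joint optimum (any coordinate block of a global minimizer must minimize the objective over that block, with the other blocks held at their optimal values — this direction does not even need convexity), complete the square to reduce to a single prox problem in $Z=S+Q_k$, and read off the soft-thresholding solution. The structure is correct.

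The one concrete error is in the constant you defer: the combined quadratic term is $\tfrac{1}{4\sigma^2}\sum_{\ell}\|A_{k\ell}-S-Q_k-R_{k\ell}\|_F^2$, not $\tfrac{1}{2\sigma^2}\sum_{\ell}\|\cdot\|_F^2$ as you wrote. For a symmetric matrix $B$ one has $\|B\|_F^2=\sum_i B_{ii}^2+2\sum_{i<j}B_{ij}^2$, so $\tfrac{1}{4\sigma^2}\|B\|_F^2=\tfrac{1}{4\sigma^2}\sum_i B_{ii}^2+\tfrac{1}{2\sigma^2}\sum_{i<j}B_{ij}^2$, which is exactly the sum of the off-diagonal likelihood terms (weight $\tfrac{1}{2\sigma^2}$ each, variance $\sigma^2$) and the diagonal ones (weight $\tfrac{1}{2\cdot 2\sigma^2}=\tfrac{1}{4\sigma^2}$ each, variance $2\sigma^2$); this matching of the diagonal coefficient is the entire role of the self-loop assumption. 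With your coefficient $\tfrac{1}{2\sigma^2}$, completing the square gives $\tfrac{m_k}{2\sigma^2}\|Z-\bar B_k\|_F^2+\lambda_{1k}\|Z\|_*$ and hence the threshold $\sigma^2\lambda_{1k}/m_k$, which contradicts the claim; you notice the mismatch but assert rather than show that "correct bookkeeping" forces the factor of $2$. With the correct coefficient $\tfrac{1}{4\sigma^2}$ the reduction is $\tfrac{m_k}{4\sigma^2}\|Z-\bar B_k\|_F^2+\lambda_{1k}\|Z\|_*$, i.e. $\tfrac12\|Z-\bar B_k\|_F^2+\tfrac{2\sigma^2\lambda_{1k}}{m_k}\|Z\|_*$ after rescaling, and the stated threshold $2\sigma^2\lambda_{1k}/m_k$ follows with no further adjustment. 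Once that single factor is repaired, your argument coincides with the paper's proof.
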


\section{Consistency results}
    \label{sec:theory}

This section establishes theoretical guarantees for the parameter estimates obtained by running Algorithm \ref{alg:stepwise_update} with unit learning rates $\eta_1=\eta_2=1$ under the Gaussian assumption on edge distribution, $f(\cdot; \theta, \sigma)=\mathcal{N}(\theta, \sigma^2)$. To simplify our analysis, we assume that Stage II and each subproblem of Stage I are terminated after the first iteration $(t=1)$. 
Throughout this section, we assume that $M$, $K$, $m_k$, $d_0$, $d_k$, and $d_{k\ell}$  are possibly increasing functions of $n$. The variance $\sigma^2$ is assumed fixed. All proofs for this section can be found in Section \ref{consist_section_proofs} of the Appendix.

We begin by introducing additional notation. 
For $a, b\in\mathbb{R}$, denote $a\vee b = \max(a, b)$. For real-valued sequences $g_n, h_n$, we write  $g_n \lesssim h_n$ if $g_n = O(h_n)$ and $g_n \asymp h_n$ if  $g_n \lesssim h_n$ and $h_n \lesssim g_n$. For brevity, we write  
$$ \rho_{1k} := \lambda_{1k} / m_k, \quad \rho_{1k\ell} := \lambda_{1k} \alpha_{1k\ell}, \quad \rho_2 := \lambda_2 / M,\quad \rho_{2k} := \lambda_2\alpha_{2k} / m_k, \quad \quad \text{for} \quad (k, \ell) \in \mathcal{I}.$$
Note there is a one-to-one correspondence between these newly defined thresholds and the original  parameters  $\{\lambda_{1k}\}_{k=1}^K, \lambda_2, \{\alpha_{2k}\}_{k=1}^K, \{\alpha_{1k\ell}\}_{\mathcal{I}}$.
For matrices $Z_1, Z_2 \in \mathbb{R}^{n\times n}$ we denote the cosine similarity between their eigenspaces by 
%$$\cos(Z_1, Z_2) = \|V_1^\top V_2\|_2. $$
%This definition depends on the choice of the orthogonal basis in column spaces $\operatorname{col}(Z_1)$ and $\operatorname{col}(Z_2)$, but can be alternatively reformulated in a basis-free fashion as
$$\cos(Z_1, Z_2) = \max_{x\in\operatorname{col} (Z_1), \ y\in \operatorname{col}(Z_2)} {|x^\top y| \over \|x\|_2\|y\|_2}.
$$
For convenience, we also define maximal angles between the following pairs of latent component types
\begin{equation} \label{cos_sim_comp_pairs}
    \begin{aligned}
    &s_{v,w}  = \max_{1\le k \le K} \cos(S, Q_k),  &s_{w,w}  & = \max_{1\le k_1 < k_2 \le K} \cos(Q_{k_1}, Q_{k_2}),  \nonumber  \\ 
  &s_{vw,u}^{(k)} =  \max_{1\le \ell\le m_k} \cos(S +Q_k, R_{k\ell}),    
    &s_{u,u}^{(k)} & =  \max_{1\le \ell_1< \ell_2 \le m_k} \cos(R_{k\ell_1}, R_{k\ell_2}),\\
    &s_{u,u} =  \max_{(k_1, \ell_1)\ne (k_2, \ell_2)} \cos(R_{k_1\ell_1}, R_{k_2\ell_2}). & & 
    \end{aligned}
\end{equation}
These quantities will appear in the error bounds on the latent components, supporting the intuition that the accuracy of separating latent spaces depends on how similar they are to each other.  

Before we proceed to the main result, we state a concentration bound that depends on the Gaussian assumption. With a minor modification of the algorithm, the main result can be extended to sub-Gaussian distributions, as discussed below in Remark \ref{gaus_relaxation_remark}.   Rewriting the observed matrix as signal plus noise, we have 
\begin{equation}\label{layer_decomp_with_gaus_error}
    A_{k\ell} = \Theta_{k\ell} + E_{k\ell} = (S + Q_k + R_{k\ell}) + E_{k\ell}, 
\end{equation}  
 where $E_{k\ell} \in \mathbb{R}^{n \times n}$ is a symmetric centered noise matrix with $E_{k\ell, ij} \stackrel{\text{ind}}{\sim} \mathcal{N}(0,\sigma^2)$ for $1\le i\le j\le n$ and $(k, \ell)\in\mathcal{I}$,  
Gaussian matrices enjoy a convenient concentration bound on their operator norm, as well as the averages of their independent copies. In particular, our theoretical analysis will be restricted to the event where all individual errors $E_{k\ell}, (k, \ell)\in \mathcal{I}$, their groupwise averages $\bar{E}_k = {1\over m_k}\sum_{\ell=1}^{m_k} E_{k\ell}, k=1, \ldots, K$, and total average $\bar{E} = {1\over M}\sum_{(k,\ell)\in\mathcal{I}}E_{k\ell}$ are bounded as follows:
\begin{equation}\label{error_bound_set}
    \mathcal{E}_{noise} := \left\{\|E_{k\ell}\|_2 \le 3\sigma\sqrt{n}, \ (k, \ell) \in\mathcal{I}; \  \|\bar{E}_k\|_2 \le 3\sigma \sqrt{n / m_k}, \ k=1,\ldots, K;\ \|\bar{E}\|_2 \le 3\sigma \sqrt{n / M}\right\}
\end{equation}
The following lemma shows that this event has a high probability. The proof is analogous to Lemma 3 in \citep{multiness}.
\begin{lemma}\label{gaus_noise_lemma}
With a universal constant $C_0 > 0$, it holds $\mathbb{P}\bigl(\mathcal{E}_{noise}\bigr) \ge 1 - (M + K + 1)ne^{-C_0n}$.
\end{lemma}

Next, we state assumptions needed to establish consistency. The first says that all groups have comparable sizes, asymptotically dominating the number of groups.
\begin{assumption}\label{group_props_assump}
    For each group $k=1, \ldots, K$, $m_k \asymp M / K $ with  $K \lesssim M^{1/2}$ and $M\rightarrow\infty$.
\end{assumption}

The next assumption controls the signal-to-noise ratio, calibrated relative to the concentration bound in Lemma \ref{gaus_noise_lemma}.
%For consistency, we naturally want the signal strength to dominate the noise asymptotically. Since in the set $\mathcal{E}_{noise}$, we have $\|E_{k\ell}\|_2 \le 3\sigma \sqrt{n}$, the natural constraint on signal strength rate would be to require the singular values of $\Theta_{k\ell}$ to have asymptotic order $n^\tau$ for some $\tau > 1/2$. This leads us to the following asymptotic regime:
%inspired by a similar assumption in \citep{multiness}:
\begin{assumption}\label{eigenvals_order_gmn} 
There are constants $0< b_S < B_S$, $0< b_Q < B_Q$, $0 < b_{S+Q} < B_{S+Q}$, $0< b_R < B_R$, and $\tau \in (1/2, 1]$, such that
\begin{equation}
    \begin{aligned}
& b_S n^\tau \leq\left|\gamma_{d_0}\left(S\right)\right| \leq\left|\gamma_1\left(S\right)\right| \leq B_S n^\tau, & \\
& b_Q n^\tau \leq\left|\gamma_{d_k}\left(Q_k\right)\right|\leq\left|\gamma_1\left(Q_k\right)\right| \leq B_Q n^\tau,  &  k=1, \ldots, K,\\
& b_{S +Q} n^\tau \leq\left|\gamma_{d_0 + d_k}\left(S + Q_k\right)\right|\leq\left|\gamma_1\left(S+Q_k\right)\right| \leq B_{S+Q} n^\tau,  &  k=1, \ldots, K,\\
& b_R n^\tau \leq\left|\gamma_{d_{k\ell}}(R_{k\ell})\right| \leq\left|\gamma_1(R_{k\ell})\right|\leq B_R n^\tau,  & (k, \ell)\in\mathcal{I}
\end{aligned}
\end{equation}
\end{assumption} 
Since we assumed that both Stage I and Stage II of Algorithm \ref{alg:stepwise_update} are run for one iteration only, it is important to require that the initializers for each stage are "good enough".  Rather than state the result for any particular choice of initializers, we formulate this requirement as a separate generic assumption.  We will later show that our choice of initializer satisfies this assumption. 
%To make our theoretical analysis applicable to both averaging and SSH initialization, we do not specify which method we use for {\it FirstStageInit} and {\it SecondStageInit} in Algorithm \ref{alg:stepwise_update}, but assume that these are deterministic functions of the input with generic error rates for the produced initializers. These error rates are assumed to be dominated by the corresponding ground truth parameter rates in Assumption \ref{eigenvals_order_gmn} to emphasize that the initializers are sufficiently better than random initialization.

\begin{assumption}\label{initializer_assump}
     For $n$ sufficiently large, Stage I and II initializers on the set $\mathcal{E}_{noise}$ satisfy 
    \begin{equation}\label{init_error_set}
      %  \mathcal{E}_{init} :=
   \mathbb{P}   \Bigl\{\|S + Q_k - (S + Q_k)^{(0)}\|_2 \le r_k^{(I)},  %\text{ for } k=1, \ldots, K;\  
   \ \|S - S^{(0)}\|_2 \le r^{(II)} \, | \, \mathcal{E}_{noise}  \} = 1,
    \end{equation}
 %   satisfies $\mathbb{P}\bigl(\mathcal{E}_{init} \mid \mathcal{E}_{noise} \bigr) = 1$ with 
 where $r^{(I)}_k$ and $r^{(II)}$ are some $o(n^\tau)$ deterministic functions of $n$.

 % \begin{remark}
 %      According to Algorithm \ref{alg:stepwise_update}, Stage II initializer $S^{(0)}$ can also depend on the output of Stage I, that is, $\{\hat{R}_{k\ell}\}_{\mathcal{I}}$ and $\{\widehat{S+ Q_k}\}_{k=1}^K$. In this case, $r^{(II)}$ may depend on $n$ indirectly through the hyperparameters $\rho_{1k}$ and $\rho_{1k\ell}$ of Stage I.
 % \end{remark}
 
% We will also assume that $\{r^{(I)}_k$\}  and $r^{(II)}$ are $o(n^\tau)$ to ensure that the initializers are better than random.
\end{assumption}
In Proposition \ref{init_sqrt_n_proposition}, we state explicit conditions for the initializers \eqref{avg_init_first_stage}  and \eqref{second_stage_avg_resid_init} to satisfy this assumption with errors of order $\sqrt{n}$. Alternatively, Theorem 1 of \citet{TianYinqiu2024} provides different conditions under which the Frobenius norm (and thus the operator norm) of the SSH initializer's error is of order $\sqrt{n}\log n$. %We do not present these conditions in this work and refer the reader to the result of \cite{TianYinqiu2024}. 

With Assumptions \ref{group_props_assump}, \ref{eigenvals_order_gmn}, and \ref{initializer_assump} at hand, we are ready to state the main result.

\begin{theorem} \label{main_consistency_theorem} Suppose the edge-entry distribution is $f(\cdot ; \theta, \sigma)=\mathcal{N}\left(\theta, \sigma^2\right)$ with fixed $\sigma^2$. 
Then under Assumptions \ref{group_props_assump}, \ref{eigenvals_order_gmn}, \ref{initializer_assump}, with probability greater than $1-(M+K+1) n e^{-C_0 n}$, where $C_0 > 0$ is a universal constant, and for $n$ sufficiently large, the estimates produced by Algorithm \ref{alg:stepwise_update} with learning rates $\eta_1=\eta_2=1$ and without refitting satisfy, for $(k, \ell) \in \mathcal{I}$, 
\begin{equation*}
    \begin{array}{lll}
    \textnormal{(Stage I)} &\|\hat{R}_{k\ell} - R_{k\ell}\|_F \leq 4 d_{k\ell}^{1 / 2}\rho_{1k\ell}, &  \|\widehat{S + Q}_{k} - (S + Q_k)\|_F \leq 4 (d_0 + d_{k})^{1 / 2}\rho_{1k}\\
    \textnormal{(Stage II)} &\|\hat{Q}_k-Q_k\|_F \leq 4 d_k^{1 / 2}\rho_{2k}, & \|\hat{S}-S\|_F \leq 4 d_0^{1 / 2}\rho_2,
    \end{array}
\end{equation*}
where 
\begin{equation*}%\label{threshold_rates_in_theorem}
\begin{aligned}
    &\rho_{1k\ell} \asymp r_k^{(I)}\vee n^{1 / 2} ,  \\
    &\rho_{1k}   \asymp \max_{1\le \ell\le m_k}\rho_{1k\ell}\Bigl[{1\over m_k} \vee s_{u, u}^{(k)} \vee n^{-\tau}\max_{1\le \ell\le m_k}\rho_{1k\ell}\Bigr]^{1/2},  \\
     &\rho_{2k} \asymp r^{(II)} \vee \rho_{1k},  \\
     &\rho_2  \asymp \max_{1\le k\le K}\rho_{2k}\Bigl[{1\over K} \vee s_{w, w}  \vee n^{-\tau} \max_{1\le k\le K}\rho_{2k}\Bigr]^{1/2}
    \vee \max_{(k,\ell)\in\mathcal{I}}\rho_{1k\ell}& \Bigl[{1\over M} \vee s_{u, u} \vee n^{-\tau} \max_{(k,\ell)\in\mathcal{I}}\rho_{1k\ell}\Bigr]^{1/2}, 
    % \vee (n^{1/2} \vee r^{(I)}_k) \Bigl[{1\over M} \vee s_{u, u} \vee n^{1/2-\tau}\Bigr]^{1/2}
\end{aligned}
\end{equation*}
and the constants in the rates depend only on  $(B_R, b_R, B_{S + Q}, b_{S+Q}, B_Q, b_Q, B_S, b_S, \sigma)$.  
Additionally, if $S$, $Q_k,$ and $R_{k\ell}$ are positive semi-definite (PSD), then $\hat{S}, \hat{Q}_k, \hat{R}_{k\ell}$ are also PSD.
\end{theorem}

\begin{remark}
An important advantage of Algorithm \ref{alg:stepwise_update} is the convexity of the optimization problems, which should mean the initial values do not affect consistency, only the speed of convergence.  We made an assumption on the initializer error in Theorem \ref{main_consistency_theorem} because we only analyze one update of the gradient descent, due to the complicated alternating updates.  We leave it for future work to formally remove the initialization error assumption, though by convexity it is clear that even if the chosen initial value does not satisfy it (and we show that ours do), after some number of gradient descent steps it will. 
% \liza{I did not really understand what you were trying to say in this remark (commented out below now), so wrote what I think is true.   Please check it's correct, and an accurate interpretation of what you wanted to say.}
%However, the structure of the derived error bounds in Theorem \ref{main_consistency_theorem} implies that we still have to impose some restrictions on the initialization error to establish consistency.  The main restriction disallowing the free choice of the initializer is the alternating structure of the latent component updates, which is hard to analyze after more than one gradient descent step and thus requires that the initializer is already a sufficiently good proxy of the ground truth. As for future work, it would be interesting to find a way to prove the result while avoiding the additional initialization restrictions. 
\end{remark}

\begin{remark}\label{gaus_relaxation_remark}
    Our proof of Theorem \ref{main_consistency_theorem} relies on two key properties of the Gaussian distribution: (i) the convenient form of the proximal gradient updates for the Gaussian log-likelihood, and (ii) the concentration result of \citet{Bandeira2016} used to establish Lemma \ref{gaus_noise_lemma}.  The concentration result was extended to sub-Gaussian distributions in the same paper (Corollary 3.3).  The convenient form of gradient updates can be retained by replacing the log-likelihood with the sum of squares loss. This implies Theorem \ref{main_consistency_theorem} can be extended to sub-Gaussian edge distributions, and in particular to the RDPG binary edge model with $\Theta_{k\ell} \in [0, 1]^{n\times n}$ and $ A_{k\ell}\sim \operatorname{Bernoulli}(\Theta_{k\ell}), (k, \ell)\in \mathcal{I}$. 
\end{remark}

Theorem \ref{main_consistency_theorem} expresses the rates in terms of key quantities such as
$r_k^{(I)}$, $r^{(II)}$, $m_k$, and $n$, allowing us to apply the analysis to a wide range of scenarios. For instance, we can derive the conditions under which Algorithm \ref{alg:stepwise_update} produces estimates matching the oracle error rates,  where each parameter $S, \{Q_k\}_{k=1}^K, \{R_{k\ell}\}_\mathcal{I}$ is estimated using its true rank and true values of all other parameters:
\begin{equation}
    \begin{aligned}
        &\hat{S}^{(oracle)}=\Bigl[\frac{1}{M} \sum_{ \mathcal{I}}\left(A_{k\ell} -Q_k -  R_{k\ell}\right)\Bigr]_{d_0} =  \bigl[S + \bar{E}\bigr]_{d_0}\ ,  \\ 
        &\hat{Q}_k^{(oracle)}=\Bigl[\frac{1}{m_k} \sum_{\ell=1}^{m_k}\left(A_{k\ell} -S - R_{k\ell}\right)\Bigr]_{d_k} =  \bigl[Q_k + \bar{E}_k\bigr]_{d_k}\ , \\ 
        & \hat{R}_{k\ell}^{(oracle)} = 
\left[A_{k\ell}- S - Q_k\right]_{d_{k\ell}}  = \bigl[R_{k\ell} + E_{k\ell}\bigr]_{d_{k\ell}}\ .
    \end{aligned}
\end{equation}
  where the second set of equalities follows from \eqref{layer_decomp_with_gaus_error}. Per Lemma \ref{hard_threshold_lemma}, this shows that their errors are dominated by the rates for the corresponding noise components in \eqref{error_bound_set}.
% The following proposition establishes the Frobenius error rates for these estimates.
% \begin{proposition}\label{oracle_estimators_rates_propos}
%     Under $\mathcal{E}_{noise}$, the orcale estimates satisfy: \begin{equation}\label{oracle_estimators_rates}
%     \|\hat{S}^{(oracle)} - S\|_F \le 3\sigma\sqrt{n}, \quad  \|\hat{Q}_k^{(oracle)} - Q_k\|_F \le 3\sigma \sqrt{n / m_k}, \ k=1,\ldots, K;\quad \bar{E} \le 3\sigma \sqrt{n / M}
%     \end{equation}
% \end{proposition}
% \begin{proof}
%     Plugging in \eqref{layer_decomp_with_gaus_error},  the oracle estimators obtain the form $\bigl[S + \bar{E}\bigr]_{d_0}, \bigl[Q_k + \bar{E}_k\bigr]_{d_k}$, and $\bigl[R_{k\ell} + E_{k\ell}\bigr]_{d_{k\ell}}$. With the bounds on the noise spectral norms in \eqref{error_bound_set}, the proof is completed by Weil's inequality and $\|E\|_F \le \operatorname{rank}(E) \|A\|_2$. 
% \end{proof}
We state sufficient conditions ensuring that Algorithm \ref{alg:stepwise_update} achieves the oracle rates in the following corollary of Theorem \ref{main_consistency_theorem}. We begin by stating an additional assumption on the rates of group sizes and pairwise similarities of individual and group components.

\begin{assumption}\label{rates_assump}
It holds $s_{u, u} \lesssim n^{1/2 - \tau}, \  s_{w,w}\lesssim K^{-1},$ and $m_k\lesssim n^{\tau -1/2}$ for each $k=1,\ldots, K$.
\end{assumption}

% While most of them are similar to the relatively mild assumptions of \cite{multiness} needed for the MultiNeSS fitting procedure to match the oracle error rates, here, we had to make an additional, restrictive assumption, requiring the second-stage initializer to improve with the number of layers $M$. Proposition \ref{init_sqrt_n_proposition}, postponed to the end of this section, shows that for this to hold, there should be almost no overlap between the group and shared latent spaces. 

\begin{corollary}[Oracle rate conditions]
\label{oracle_rate_corolalry}
    Under Assumption \ref{rates_assump} and assumptions of Theorem \ref{main_consistency_theorem} with $r^{(II)} \lesssim (nK/M)^{1/2}$ and $r_k^{(I)}\lesssim n^{1/2}$ for each $k=1,\ldots, K$, it holds
    
    \begin{equation*}
        \begin{array}{lll}
    \|\hat{R}_{k\ell} - R_{k\ell}\|_F \leq C_R d_{k\ell}^{1 / 2}n^{1/2}, &  \|\widehat{S + Q}_{k} - (S + Q_k)\|_F \leq C_{S+Q} (d_0 + d_{k})^{1 / 2}n^{1/2}m_k^{-1/2}\\
    \|\hat{Q}_k-Q_k\|_F \leq C_Qd_k^{1 / 2}n^{1/2}m_k^{-1/2}, & \|\hat{S}-S\|_F \leq C_S d_0^{1/2}n^{1/2} M^{-1/2},
    \end{array}
\end{equation*}
    if the hyperparameters are set with sufficiently large positive constants $c_{1k}$ and $c_2$ as
    \begin{equation}\label{oracle_hyperparam_rates}
        \lambda_{1k} = c_{1k}\sqrt{nm_k}, \quad \alpha_{1k\ell} = 1/\sqrt{m_k}, \quad \lambda_{2} = c_2\sqrt{nM}, \quad \alpha_{2k} = \sqrt{m_k / M}, \quad \quad (k, \ell)\in\mathcal{I}.
    \end{equation}

\end{corollary}

We conclude this section by stating sufficient conditions, which guarantee that our initializers of $\{S + Q_k\}_{k=1}^K$ in the first stage and of $S$ in the second stage have the rates as in Corollary \eqref{oracle_rate_corolalry}.

\begin{proposition}\label{init_sqrt_n_proposition} Under Assumptions \ref{group_props_assump}, \ref{eigenvals_order_gmn}, \ref{rates_assump}, if $s_{vw, u}^{(k)} \lesssim n^{1/2-\tau}m_k^{1/2}, s_{v,w}\lesssim Kn^{1/2-\tau} / M^{1/2}$, and there is sufficient separation between the spectra of $S$ and $Q_k$'s, that is, there exists $\delta > 0$ such that
    \begin{equation}\label{min_S_max_Q_sep}
             \frac{|\gamma_{d_S}(S)|}{\max_{k}|\gamma_{1}(Q_k)|} \ge {b_S \over B_Q} \ge 4(1 + \delta)\Bigl[{1\over K} + s_{w,w}\Bigr]^{1/2},
    \end{equation}
     for $n$ sufficiently large, Assumption \ref{initializer_assump} is satisfied for the initializers in \eqref{avg_init_first_stage} and \eqref{second_stage_avg_resid_init} with   $r^{(II)} \asymp (nK/M)^{1/2}$ and $r^{(I)}_k \asymp n^{1/2}, \ k=1, \ldots, K$.
% \begin{align*}
%     \|(S + Q_k) - (S + Q_k)^{(0)} \|_2 &\lesssim n^\tau(s_{w,u}^{(k)} \vee s_{v,u}^{(k)})  \bigl[m_k^{-1} \vee s_{u, u}^{(k)}\bigr]^{1/2}, \\
%     \|S - S^{(0)} \|_2 &\lesssim n^\tau s_{v,w} \bigl[K^{-1}\vee s_{w,w}\bigr]^{1/2} . 
% \end{align*}
% In particular, conditions of Corollary \ref{oracle_rate_corolalry} ( $r^{(I)}_k \lesssim\sqrt{n}$ and $r^{(II)}\lesssim (n/M)^{1/2}$) are satisfied when $s_{u, u}, s_{w, u}^{(k)}, s_{v, u}^{(k)}\lesssim n^{1/2 - \tau}, m_k \lesssim n^{\tau - 1/2}, s_{v, w} \lesssim n^{1/2-\tau}(K / M)^{1/2}$, and $s_{w, w}\lesssim K^{-1}$.
\end{proposition}

\section{Synthetic networks experiments}
	\label{sec:experiments}
	In this section, we empirically study the properties of Algorithm \ref{alg:stepwise_update} in various synthetic scenarios. In Section \ref{param_dependency_exp_section}, we investigate how the accuracy of  GroupMultiNeSS is affected by key quantities such as the size of the networks, the number of layers, and the similarities between latent components. In Section \ref{compar_other_methods_exp_section}, we compare GroupMultiNeSS to other models for multiplex networks and demonstrate its advantages over existing methods when a latent group structure is present in the layers.

\subsection{Experimental settings}
We let all ranks $\{d_k\}_{k=0}^K$, $\{d_{k\ell}\}_{\mathcal{I}}$ of all latent components be the same and denote them by $d$. The embedding similarity measure $\kappa$ is taken to be the standard Euclidean inner product. %, so that the GroupMultiNeSS latent space decomposition in \eqref{gmn_gram_matrix} has all $\{q_k\}_{k=0}^K$ and $\{q_{k\ell}\}_{\mathcal{I}}$ equal to zero.
To generate latent components with varying pairwise maximum cosine similarities, we use Algorithm \ref{alg:sampling_procedure} below, inspired by a similar sampling approach of \cite{TianYinqiu2024}. The algorithm uses two additional notions of maximal angles: 
$$s_{v, u} =  \max_{(k, \ell) \in \mathcal{I}} \cos(S, R_{k\ell}), \qquad s_{w, u}=  \max_{k=1,\ldots, K}\max_{1\le \ell\le m_k} \cos(Q_k, R_{k\ell}).$$  

The proposed sampling approach ensures that the columns with distinct indices within any two latent components are orthogonal, and columns with identical indices have similarity depending only on the types (shared, group, or individual) of the two input components. In particular, this implies that the sampling procedure is valid only if $d(1 + K + M) \le n$, as otherwise the number of angle constraints is larger than the number of available degrees of freedom $n$.

\begin{algorithm}%\label{alg:sampling_procedure}
\caption{Latent component sampling algorithm \label{alg:sampling_procedure}}
\begin{algorithmic}[1]
    \State \textbf{Input:} number of nodes $n$, latent dimension $d$, group sizes $\{m_k\}_{k=1}^K$, maximum cosine angles $s_{v, w}, s_{v, u}, s_{w, w}, s_{w, u}, s_{u, u}$ (all zero by default)
    \State Collect all angles into a single matrix:
    \begin{equation}\label{component_angle_matrix}
    \Omega = \begin{pmatrix}
        1 & s_{v, w}\mathbf{1}_K^\top & s_{v, u}\mathbf{1}_M^\top \\
         s_{v, w}\mathbf{1}_K & \Sigma_K(s_{w, w}) & s_{w, u} \mathbf{1}_K\mathbf{1}_M^\top \\
        s_{v, u}\mathbf{1}_M & s_{w, u}\mathbf{1}_M\mathbf{1}_K^\top & \Sigma_M(s_{u, u})
    \end{pmatrix}
    \end{equation}
    where $\Sigma_m(s)$ is an $m\times m$ matrix with ones on the diagonal and $s$ everywhere else. 
    \State Initialize latent positions as i.i.d.\ draws from the standard normal distribution
    $$\tilde{L} = [\tilde{V}, \tilde{W}_1, \ldots, \tilde{W}_K, \tilde{U}_{11}, \ldots, \tilde{U}_{1m_1}, \ldots, \tilde{U}_{K1}, \ldots, \tilde{U}_{Km_K}] \in \mathbb{R}^{n \times d(1 + K + M)}
    $$
 
    \State Compute the initial and target Gram matrices as, respectively, $\tilde{\mathcal{G}} = {1\over n} \tilde{L}^\top\tilde{L}$ and $\mathcal{G} = \Omega \otimes I_d$ and set the latent positions to 
    $$\tilde{L}\tilde{\mathcal{G}}^{-1/2}\mathcal{G}^{1/2} =  [V, W_1, \ldots, W_K, U_{11}, \ldots, U_{1m_1}, \ldots, U_{K1}, \ldots, U_{Km_K}]$$
    \State \textbf{Output:} $S=VV^\top, Q_k = W_kW_k^\top, R_{k\ell} = U_{k\ell}U_{k\ell}^\top$ for all $(k, \ell)\in\mathcal{I}$. 
\end{algorithmic}
\end{algorithm}

When fitting the GroupMultiNeSS model with Algorithm \ref{alg:stepwise_update}, we use a 5-fold edge cross-validation to tune $\{\lambda_{1k}\}_{k=1}^K$ and $\lambda_2$ as described in Section \ref{hyperparam_tuning_section}, with training and test sets within each fold of size $|\mathcal{A}_{train}| = 0.8|\mathcal{A}|$ and $|\mathcal{A}_{test}| = 0.2|\mathcal{A}|$. For hyperparameter tuning, we define the grid $c_\lambda\in [0.03, 0.1, 0.3, 1, 3, 10]$ and use the oracle rates in \eqref{oracle_hyperparam_rates}, that is, we set $\lambda_{1k} = c_\lambda \sqrt{nm_k}$ and $\lambda_{2} = c_\lambda\sqrt{nM}$. In all experiments, we use the oracle values for $\{\alpha_{1k\ell}\}_{\mathcal{I}}$ and $\{\alpha_{2k}\}_{k=1}^K$. 
We measure convergence of the algorithm by the relative difference between the current loss and its best value achieved so far, %that is, compute $1 -\mathcal{L}_k^{(t)} /\min_{\tau < t}\mathcal{L}_k^{(\tau)}$ and $1 -\mathcal{L}^{(t)} /\min_{\tau < t}\mathcal{L}^{(\tau)}$ in the first and second stages, respectively. We 
and stop if this difference has not exceeded the tolerance value $10^{-5}$ during the last ten steps. The learning rates used in Algorithm \ref{alg:stepwise_update} are set to $\eta_1=\eta_2=1$ for the Gaussian model and to $\eta_1=\eta_2=3$ for the Bernoulli edges.

For any parameter matrix $\Theta$, we measure the error of its estimate $\hat\Theta$ using the Relative Frobenius Error (RFE), defined  as  $ \operatorname{RFE}(\hat{\Theta}, \Theta) := {\|\hat{\Theta} - \Theta\|_F /  \|\Theta \|_F}$.   For a collection of matrices $\{\Theta_\ell\}_{\ell=1}^m$ and their corresponding estimates $\{\hat{\Theta}_\ell\}_{\ell=1}^m$, we sometimes report the average RFE (ARFE),  defined as:
%\begin{equation}   \operatorname{ARFE}(\{Z_\ell\}_{\ell=1}^{m}, \{\hat{Z}\}_{\ell=1}^{m}) := 
${m^{-1}}\sum_{\ell=1}^{m} \operatorname{RFE}(\hat{\Theta}_\ell, \Theta_\ell) $.
%\end{equation}
We use this metric to measure the average estimation accuracy for a given type of latent components. %, such as $\{Q_k\}_{k=1}^K$ or $\{R_{k\ell}\}_\mathcal{I}$.

Python implementations of GroupMultiNeSS (Algorithm \ref{alg:stepwise_update}), MultiNeSS \citep{multiness}, and Shared Space Hunting with refinement \citep{TianYinqiu2024}, as well as the synthetic data sampler (Algorithm \ref{alg:sampling_procedure}), are available in the \texttt{GroupMultiNeSS} package at 
\url{https://github.com/AlexanderKagan/GroupMultiNeSS}.  All simulation studies can be found at 
\url{https://github.com/AlexanderKagan/GroupmultinessExperiments}.

\subsection{Accuracy of GroupMultiNeSS}\label{param_dependency_exp_section}

In this section, we study how estimation accuracy of the GroupMultiNeSS latent components $S, \{Q_k\}_{k=1}^K, \{R_{k\ell}\}_{\mathcal{I}}$ and the resulting parameter matrices $\Theta_{k\ell} = S + Q_k + R_{k\ell}$ depends on the number of nodes $n$ and the number of layers $M$. 
% Additional experiments showing how the accuracy is affected by cosine similarities of the latent components can be found in Section \ref{cosine_similarity_appendix_section} of the Appendix.

We generate latent components using Algorithm \ref{alg:sampling_procedure} with $K=4$ balanced groups, $s_{v, u} = s_{w, u} = 0.1$, and latent dimensions $d=3$. The observed layers are sampled as  
\begin{equation}\label{layer_generation_experiment} 
A_{k\ell, ij} \stackrel{\text{ind}}{\sim} \mathcal{N}(\Theta_{k\ell, ij}, 1) \quad \text{or} \quad A_{k\ell, ij} \stackrel{\text{ind}}{\sim} \operatorname{Bernoulli}\bigl(\sigma(\Theta_{k\ell, ij})\bigr),\qquad (k, \ell) \in\mathcal{I}, \quad i\le j,
\end{equation}
where $\sigma$ is the logistic link function.  
Figure \ref{fig:parameter_dependency} presents the estimation errors as a function of the number of layers $M\in [8, 16, 24, 32, 40, 48]$ with $n=200$ and the number of nodes $n\in[100, 200, 300, 400, 500]$ with $M=16$ for the Gaussian and logistic models. All  experiments were repeated ten times with different random seeds to assess empirical standard errors, and showed Monte Carlo errors to be negligible, so error bars are omitted in the plots. 

As expected from our theoretical analysis, the estimation error of all components decreases as the number of nodes $n$ increases, since the true matrices are all low rank.  Increasing the number of layers $M$ improves the estimation of parameters that are shared by multiple layers --  $S$, $Q$, and therefore $\Theta$ -- but does not much affect the error in estimating the individual component $R$, which matches both the bounds of Theorem \ref{main_consistency_theorem} and  the intuition other layers do not help estimate the individual component of a given layer.    The slight improvement in $R$ shown in the plots is likely due to improved estimates of the shared and group components, which indirectly helps separate $R_{k\ell}$ from $S+Q_k$.

Comparing logistic and Gaussian models, we observe that the relative error of $\Theta$ is lower than that of all other components in the logistic case and lies between $R$ and $(Q, S)$ in the Gaussian case.  We explain it by the fact that separating additive latent components is a much easier task under a linear link function than a logistic one.  This also explains the much higher relative errors of all components in the logistic case.

\begin{figure}[t!]
    \centering
    \includegraphics[width=0.9\linewidth]{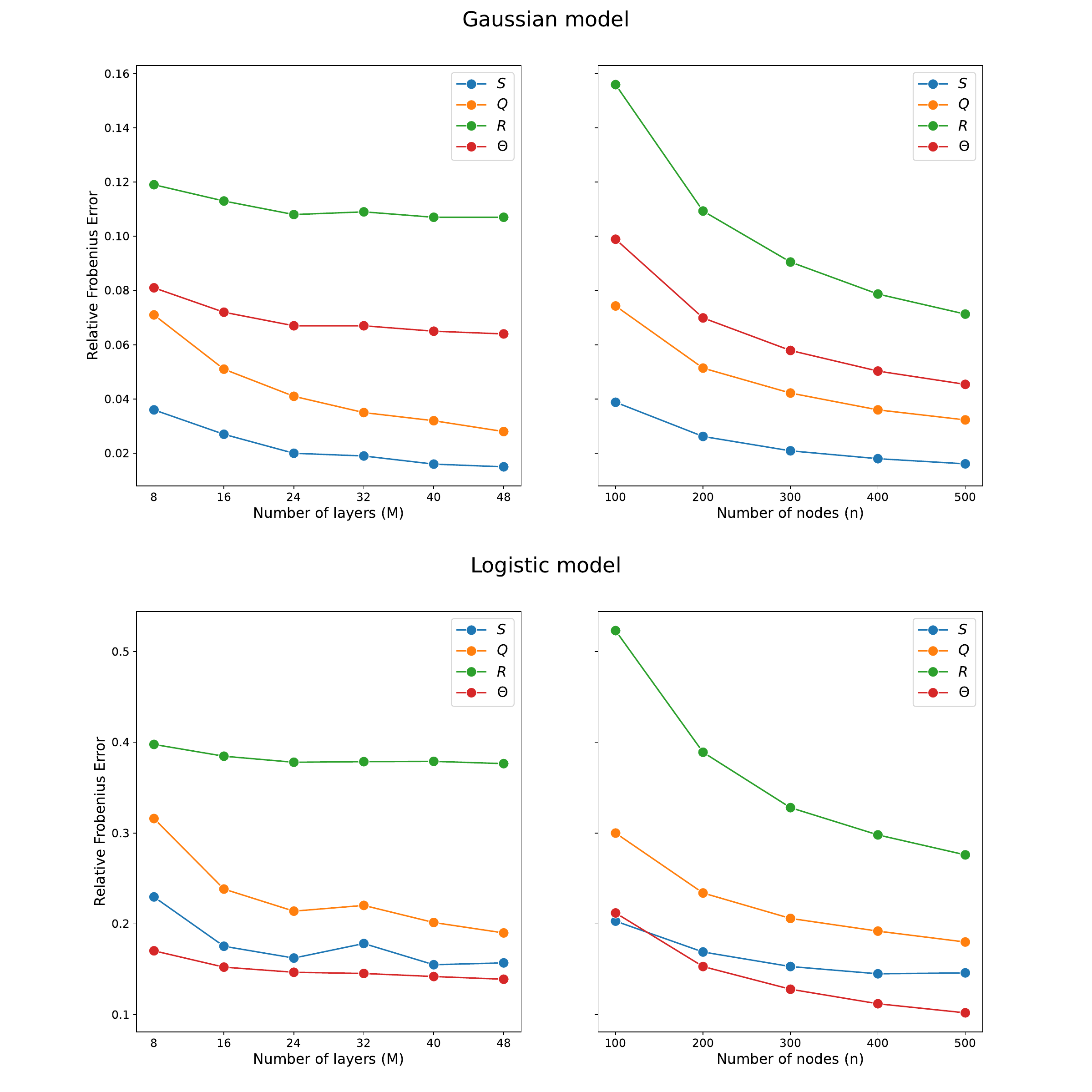}
    \caption{ ARFE for the Gaussian (top row) and logistic (bottom row) GroupMultiNeSS models, for $K=4$ balanced groups, $d=3$ latent dimensions, and $s_{v, w}=s_{w, u} = 0.1$.  Left column:  $n = 200$, $M$ varies from 8 to 48.   Right column:  $M = 16$, $n$ varies from 100 to 500. }
    \label{fig:parameter_dependency}
\end{figure}

\subsection{Comparison to other methods}\label{compar_other_methods_exp_section}

Here, we compare Algorithm \ref{alg:stepwise_update} with several other methods for multiplex networks with shared structure.   As a baseline, we include the MultiNeSS model \citep{multiness},  with the refitting step and nuclear norm penalty hyperparameters chosen by cross-validation.    We also include 
the Multiple Adjacency Spectral Embedding (MASE) algorithm, proposed by \cite{Arroyo2019} for fitting their  COSIE multiplex network model. COSIE  estimates the expectations of each layer but does not separate shared and individual components,  thus we can only compare the accuracy of estimating the overall expectation $\Theta_{k\ell}$.  Although COSIE is designed for the RDPG model, it can be directly applied to the Gaussian model. We implement an oracle version of COSIE that uses true ranks $\{d_k\}_{k=0}^K$ and $\{d_{k\ell}\}_{\mathcal{I}}$. To ensure a fair comparison with our method, we select the leading $d_0 + d_k + d_{k\ell}, \ (k, \ell)\in\mathcal{I}$ eigenvectors for each layer, and then use COSIE to fit a common invariant subspace of dimension $d_{0} + \sum_{k=1}^K d_{k} + \sum_{(k, \ell)\in \mathcal{I}} d_{k\ell}$, corresponding to the total number of latent dimensions in the GroupMultiNeSS model.
Finally, as a gold-standard benchmark, we include the oracle version of GroupMultiNeSS, which replaces soft thresholding updates by hard thresholding at true ranks, as stated in \eqref{proximal_map_operator_hard}, and omits the the refitting step since hard thresholding does not shrink eigenvalues. 

\begin{figure}[ht!]
    \centering
    % \textbf{Gaussian model}\\
    \includegraphics[width=0.97\linewidth]{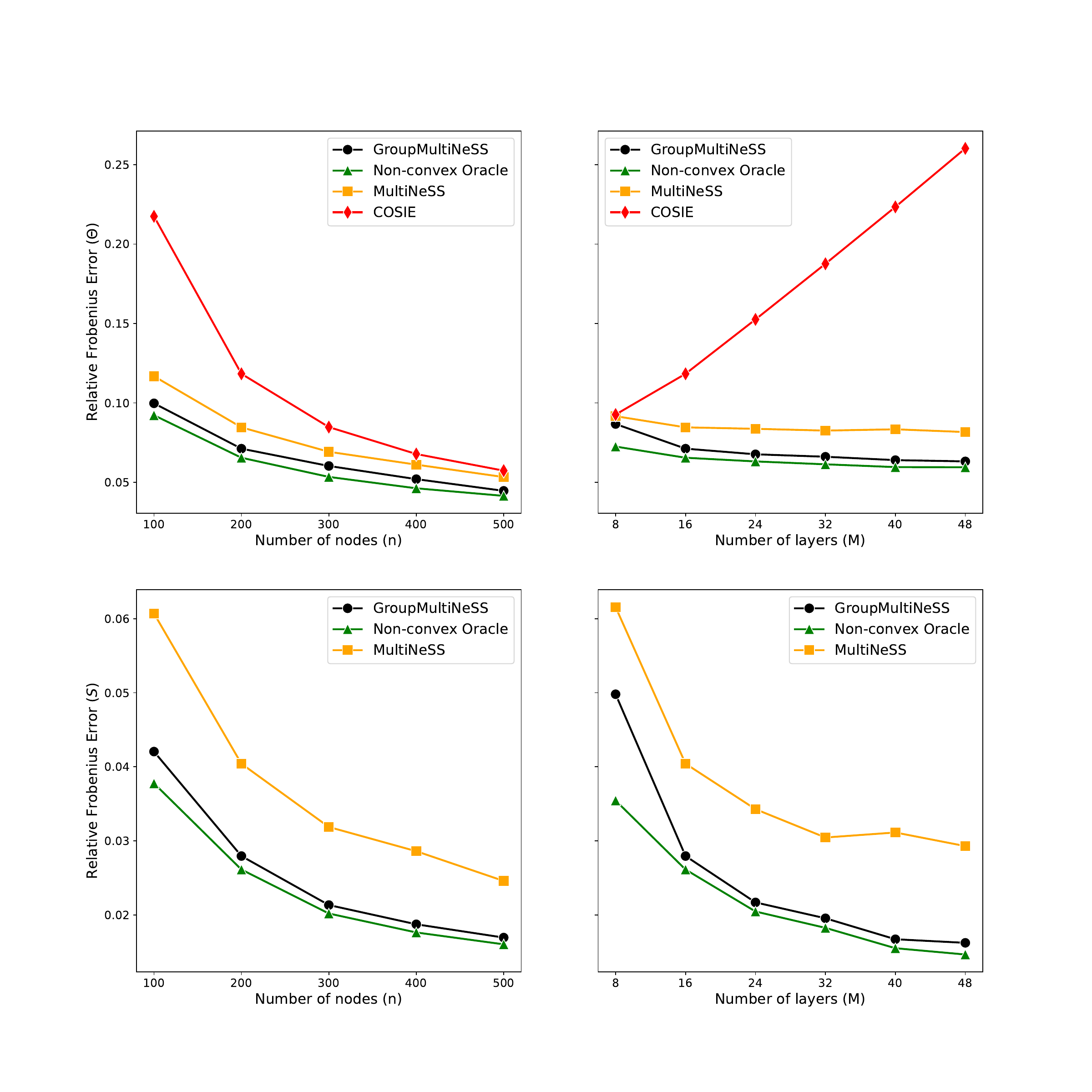}
    \label{fig:errors_across_models_gaus}
    \caption{ ARFE of $\Theta$ (top row) and $S$ (bottom row) as a function of the number of nodes $n$ with $M = 16$ (left column) and the number of layers $M$ with $n=200$ (right column), for $K=4$ balanced groups, latent dimension $d=3$, and the Gaussian edge distribution.}
\end{figure}

We keep the setting the same as in the previous section, with $K=4$ balanced groups and latent dimension $d=3$, generating the networks from the Gaussian distribution, with  $n\in [100, 200, 300, 400, 500]$ and  $M\in [8, 16, 24, 32, 40, 48]$.    The ARFE of the expectation matrices $\Theta$ for the four methods (GroupMultiNeSS, Oracle GroupMultiNeSS, MultiNeSS, and COSIE) in the top row of Figure \ref{fig:errors_across_models_gaus}.   The bottom row also shows the RFE of the extracted shared component $S$ for the first three methods.   Overall, the GroupMultiNeSS performance comes close to its oracle version and outperforms the other methods.    MultiNeSS does a reasonable job on estimating the overall mean $\Theta$, with error about 20\% higher than GroupMultiNeSS, but has about 50\% higher error on the shared component $S$;  this agrees with the results in  Figure \ref{fig:parameter_dependency}, showing that the ``total'' latent space $\Theta$ can be estimated relatively well even when its components are not accurately separated from each other.

While all methods improve as $n$ groups and both MultiNeSS and GroupMultiNeSS improve with $M$ increasing as well, COSIE gets worse with $M$;  this is a well-known property of the MASE algorithm, since it involves estimating the joint latent space of dimension that becomes more severely misspecified as $M$ increases. Results for the logistic model are similar and can be found in Section \ref{logistic_compar_appendix_section} of the Appendix.

\section{Application to Parkinson's disease brain networks}
    \label{sec:real_data}
	To demonstrate the practical utility of the proposed GroupMultiNeSS model, we analyze a publicly available functional connectivity dataset curated by \cite{TaoWu2017}. It consists of resting-state fMRI data from 40 subjects, 17 women and 23 men aged between 57 and 75 years, among whom 20 have been diagnosed with Parkinson’s disease (PD) and 20 are healthy controls. The functional brain network of each subject is represented by a Pearson correlation matrix computed across the 116 brain regions (nodes) defined by the AAL116 atlas \citep{AAL116_atlas}.   This atlas divides these $n = 116$ regions into 8 brain systems: there are 28 nodes in the frontal lobe, 26 in the cerebellum, 14 in the occipital lobe, 14 in the parietal lobe, 12 in the limbic system, 12 in the temporal lobe, 8 in subcortical gray matter (SCGM), and 2 in the insula. 

Our goal is to use the GroupMultiNeSS model to identify differences in the latent structure between the PD patients and controls.  Following the standard processing pipeline in neuroimaging, we apply the Fisher $z$-transformation to Pearson correlations, making the Gaussian edge model suitable for this application.  The diagonal elements of the correlation matrices are omitted from the loss function in optimization.  % So, following the notation of the present work, we encode this dataset as $M=40$ symmetric adjacency matrices $A_{k\ell}\in\mathbb{R}^{n \times n}, \ k\in \{1, 2\}, \ \ell = 1,\ldots, 20$ on $n=116$ nodes, separated into $K=2$ groups of sizes $m_1 = m_2 = 20$.
We also apply the standard preprocessing step of controlling for age and sex by regressing them out of each edge entry (separately for each edge).   That is, we replace each Fisher-transformed correlation with the residual from regressing all Fisher-transformed correlations for that pair of nodes on the subjects' sex and age.  

%Before proceeding to model fitting, we need to account for additional network covariates, such as patients' age and sex, as they can confound the group effect of interest. A possible solution to that would be to consider more network groups, for example, by additionally stratifying over sex and/or age group. However, this would lead to too small group sample sizes, which would not allow for accurate extraction of the group latent positions. Therefore, we propose using a standard neuroscience technique, which suggests regressing out the effects of additional covariates separately for each edge entry. That is, for every $1\le i < j\le n$, we consider the linear model $$A_{k\ell, ij} = \beta_{0,ij} + \beta_{\text{age}, ij}\cdot \text{Age}_{k\ell} + \beta_{\text{sex}, ij} \cdot \mathbf{1}[\text{Sex}_{k\ell} = \text{Male}] + \varepsilon_{k\ell, ij}, $$ where $\text{Age}_{k\ell}$ and $\text{Male}_{k\ell}$ represent the $(k,\ell)$-patient's age and indicator of being male, respectively. We then fit the coefficients of the $(i, j)$-th model with ordinary least squares (OLS) and calculate the residuals via: $$\hat{\varepsilon}_{k\ell, ij} = A_{k\ell, ij} - \hat{\beta}_{0, ij} - \hat{\beta}_{\text{age}, ij}\cdot \text{Age}_{k\ell} -\hat{\beta}_{\text{sex}, ij} \cdot \mathbf{1}[\text{Sex}_{k\ell}= \text{Male}] $$  and further apply GroupMultiNeSS to the residual matrices $\hat{A}^{(r)}_{k\ell} = [\hat{\varepsilon}_{ij}]_{i,j=1}^n, (k, \ell)\in \mathcal{I}$ under the  assumption of the Gaussian edge-entry distribution. 

We fit the model by Algorithm \ref{alg:stepwise_update} using the generalized inner product as the similarity function 
% \liza{Right?  Otherwise how would you get negative eigenvalues? } 
and used edge cross-validation to choose the hyperparameters $\{\lambda_{1k}\}_{k=1}^K$ and $\lambda_{2}$.  After fitting the latent components $\hat{S}, \{\hat{Q}_k\}_{k=1}^K,$ and $\{\hat{R}_{k\ell}\}_{\mathcal{I}}$, we extracted the latent positions $\hat{V}, \{\hat{W}_k\}_{k=1}^K,$ and $\{\hat{U}_{k\ell}\}_{\mathcal{I}}$ using ASE as described at the beginning of Section \ref{sec:estimation}.
In Figure \ref{fig:gmn_group_components_taowu}, we plot estimated group-specific latent positions in the three leading latent dimensions of $\{\hat{W}_k\}_{k=1}^K$ with regions (nodes) colored according to the brain system they belong to. For better visualization, we align the embeddings of the two groups by applying the three-dimensional  rotation obtained by Procrustes alignment, and only include the five biggest systems (frontal, occipital, parietal, temporal, and cerebellum), which together include 94 nodes.  
The three leading dimensions for both groups are estimated to be disassortative, which roughly means that a larger inner product in this dimension is associated with a smaller edge weight. 
Together, the three dimensions explain roughly $43\%$ and $47\%$ of the variance in the control and PD groups, respectively, as measured by the sum of all singular values.   
% Additional plots of the first four leading dimensions of the shared component as well as a two-dimensional version of the group components can be found in Section \ref{real_data_plots_supplement} of the Appendix.

%While plotting the shared component may also be of interest, we do not include it here as it appeared less visually interpretable.  However, as we discuss later, estimating it is crucial to eliminate the common latent patterns from the group components, thereby simplifying their visual comparison. 

While we do not know the ground truth in real data, comparing the two group embeddings in Figure \ref{fig:gmn_group_components_taowu}, we can observe large differences in the cerebellum (purple)  and occipital (orange) systems.  More spread out latent positions, due to disassorative dimensions, are likely to represent stronger connectivity. These results make sense, given that the cerebellum is responsible for balance and muscle control, which are commonly impaired in PD patients, and the occipital lobe is responsible for cognitive processing of visual information, which is also impaired in PD patients \citep{Weil2016VisualDisfPD, Gottlich2013PD}.  There are also visible differences in the temporal lobe (associated with memory and hearing), and the frontal lobe (motor control), which agrees well with previous reports in the literature \citep{LucasJimenez2016, Baggio2014}.  %and is usually associated with visuospatial impairment and cognitive decline in patients with PD.

We can also consider a change in relationships between different systems:  for example, the bigger separation between the cerebellum and frontal/temporal lobes in the PD group  suggests increased connectivity between these areas. This has been observed in the literature \citep{Tessitore2019} and is widely seen as a compensatory mechanism aimed at maintaining motor performance in the presence of basal ganglia dysfunction.

%are expected, as these mechanisms are most commonly impaired in the presence of Parkinson's. The fact that the PD group exhibits less clustering of the occipital regions (responsible for visual reception/interpretation) may suggest the presence of functional changes in the visual cortex of diseased patients. This aligns with previous evidence of  suggests that abnormal connectivity in the visual network may be related to adaptation and compensation processes as a consequence of altered motor function. 

% Similarly, we can observe a closer collocation between the violet parietal and yellow temporal regions in the control group, 
% potentially indicating a tighter interaction between these areas in the PD group. This has also been observed in the neuroscience literature \citep{LucasJimenez2016} 

\begin{figure}[t!]
    \centering
    \includegraphics[width=0.98\linewidth]{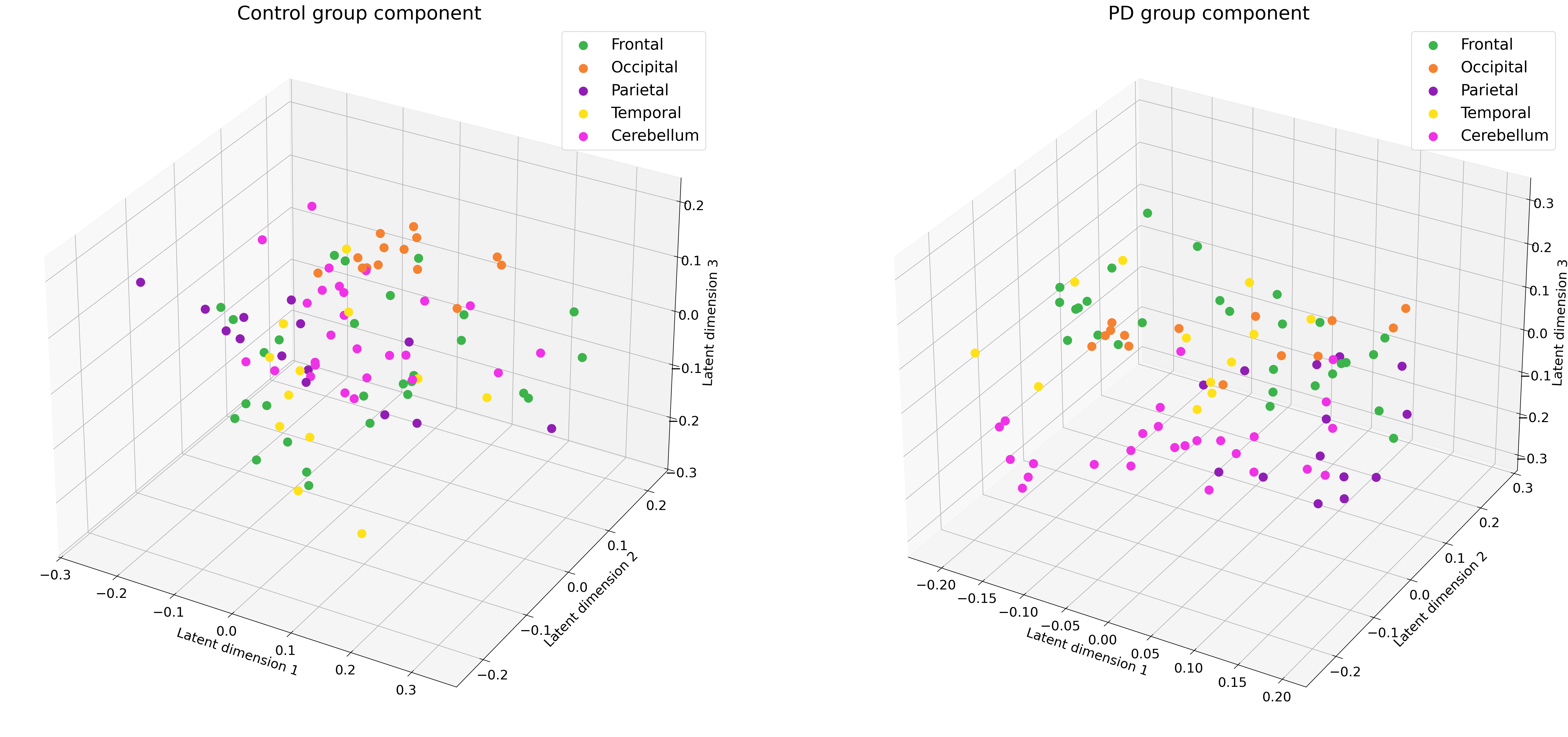}
    \caption{Three leading dimensions of the group latent positions estimated  with GroupMultiNeSS an(all disassortative).}
    \label{fig:gmn_group_components_taowu}
\end{figure}

\begin{figure}[t!]
    \centering
    \includegraphics[width=0.98\linewidth]{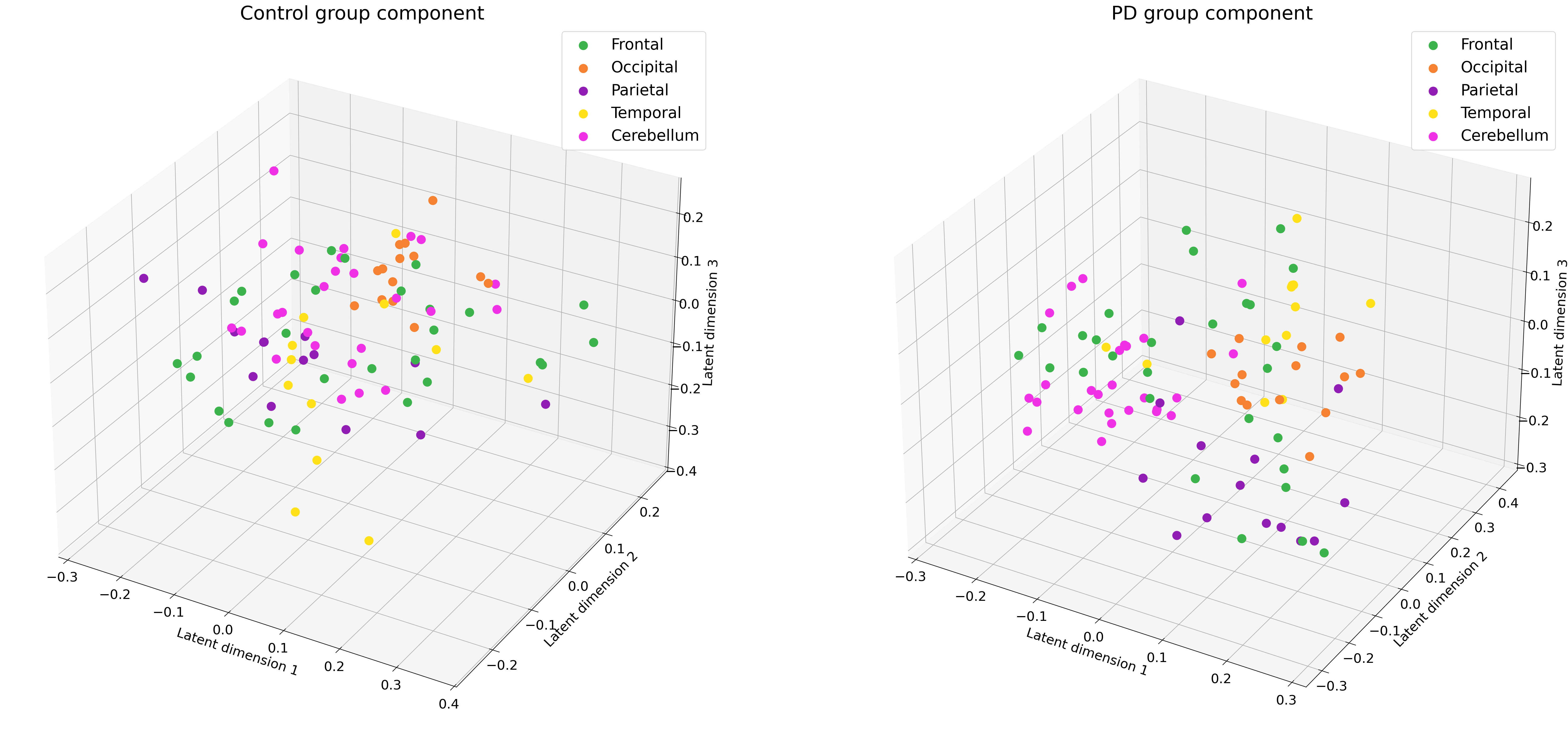}
    \caption{Group latent positions obtained by fitting separate MultiNeSS models on the layers of the two groups and plotted in the leading three latent dimensions (all disassortative).}
    \label{fig:mn_group_components_taowu}
\end{figure}

We also compare GroupMultiNeSS group components to extracting shared components from each group separately by running MultiNeSS on each.  %, we also compare them to the embeddings obtained by applying ASE to $\widehat{S + Q_k}, \ k=1, K$ fitted during the first stage of Algorithm \ref{alg:stepwise_update}. The obtained latent positions then coincide with those one would get by running two separate MultiNeSS models (with refitting and cross-validation to tune $\lambda_{1k}$) on the layers of the control and PD groups, respectively. 
We expect that since the structure shared by all subjects in both groups is not separated out, the embeddings of the two groups will look more similar, and that is indeed what Figure \ref{fig:mn_group_components_taowu} shows. %Compared to Figure \ref{fig:gmn_group_components_taowu}, we can see that both groups exhibit clustering of the occipital regions and their mixing with temporal (yellow) and frontal (green) regions. Additionally, the cerebellum (purple) regions in the PD group demonstrate weaker clustering and less separation from the frontal regions.

Finally, to quantify the differences between the two group components extracted by GroupMultiNeSS, we perform a permutation test.  For each pair of brain systems $(a, b)$, we consider all possible pairs $(r_1, r_2), r_1\in a, r_2 \in b$ and compute their unnormalized pairwise cosine similarity in the latent space of each group:
$$h_k^{(a, b)}(r_1, r_2) = \hat{W}_{k, r_1}^\top \hat{W}_{k, r_2}, \qquad k=1, 2.
$$ 
Averaging across all $r_1\in a, r_2 \in b$  gives $\bar{h}_k^{(a, b)}$, which can used as a proxy for connectivity between systems $a$ and $b$ in group $k$.  We can then look at the group differences $\bar{h}_2^{(a, b)} - \bar{h}_1^{(a, b)}$ for each pair of systems, shown as a heatmap in Figure \ref{fig:cos_sim_diff}.
Note that a positive difference corresponds to an increased average cosine similarity in the latent space of the PD group compared to the control, meaning lower lobe connectivity due to the disassortativeness of the latent dimensions. 

To informally assess the significance of these differences, we perform a permutation test by shuffling the group labels across groups 100 times to obtain the empirical distribution of the difference under the null hypothesis. % For each run, we repeat our estimation procedure to extract the group latent positions and note the average similarity difference for each of the 15 lobe pairs. For each pair, we then construct the empirical distribution from the obtained 100 differences, calculate the $p$-value of the true difference based on this distribution, and apply the\
We then apply the Benjamini-Hochberg correction to adjust for multiple testing. 

\begin{figure}[t!]
    \centering
    \includegraphics[width=0.7\linewidth]{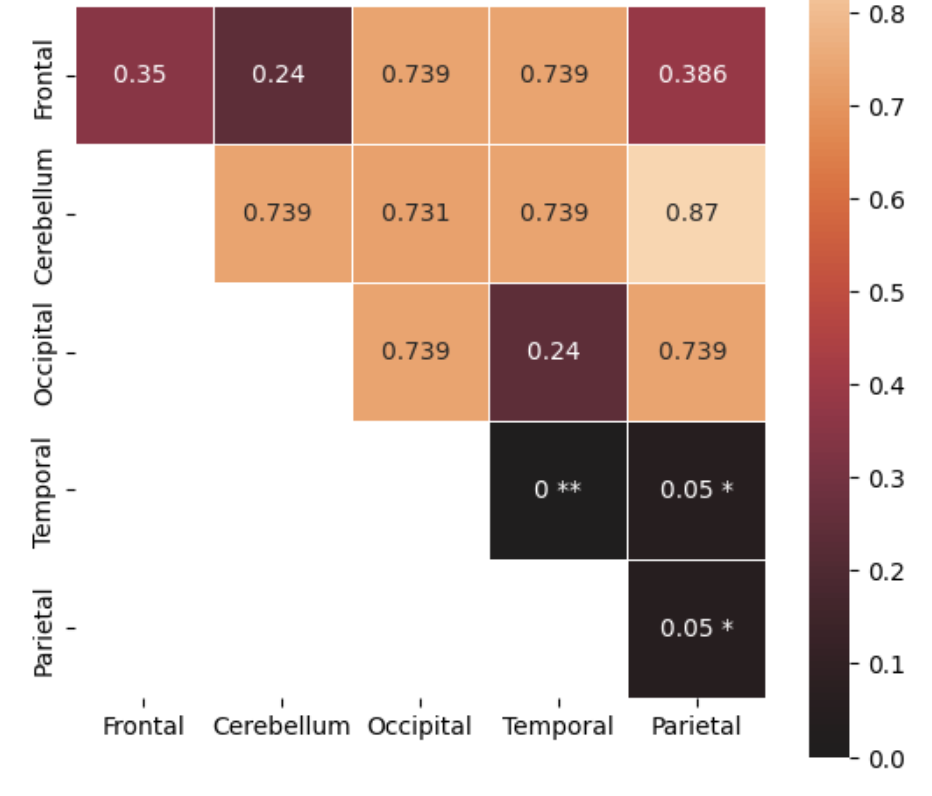}
    \caption{A heatmap of the BH-corrected p-values for the differences between the PD and control groups in the average cosine similarity of the brain systems (**:  $p < 0.01$; *:  $p < 0.05$).}
    \label{fig:cos_sim_diff}
\end{figure}

 Figure \ref{fig:cos_sim_diff} provides some quantitative support to the qualitative analysis of the visualizations in Figure \ref{fig:gmn_group_components_taowu}.  Fairly significant changes occur within and between the cerebellum, the occipital lobe, and the frontal lobe, which also stand out in the visualization.   Further, temporal and parietal lobes are implicated as the most significant, even after correcting for multiple testing, highlighting the need to develop more formal testing tools.   We leave this for future work.

\section{Discussion}
	\label{sec:discussion}
	The main contribution of this work is a latent space model that can explicitly separate out group-specific latent structure among multiplex network layers, distinct from both the common structure shared by all and from individual layer structure.   It is fully adaptive, in the sense that the algorithm learns from data how much the layers share through common or group-specific shared structure.  Both the estimation algorithm and theory can be directly extended to multiplex network models with a more complex group structure, for example, with a nested hierarchy of groups lay the basis for a universal modeling approach to complex multiplex networks. 

%Despite its universality in decomposing the latent spaces of network layers into hierarchically structured components, our framework still has several important limitations that should be addressed in future developments of our work to achieve a truly universal multiplex network model.

 We have already discussed the extension to the sub-Gaussian case for linear link functions in Remark \ref{gaus_relaxation_remark}.  Another natural extension of our work would be to apply the more general framework of \cite{TianYinqiu2024} to establish consistency for a broader class of link functions. Another important extension would be to consider the case of unknown group memberships, possibly fitting a mixture model or applying clustering to the layers as in \citep{pensky2024dimple}. An extension to directed networks could be developed by estimating the right and the left singular vectors separately. While we focused on estimation rather than testing in this work, the next step would be to develop more powerful tests for estimating group differences; perhaps adapting the framework of \cite{macdonald2024mesoscale} would allow for more formal comparisons between groups.

\bibliographystyle{chicago}
\bibliography{references}

@article{multiness,
	title = {Latent space models for multiplex networks with shared structure},
	volume = {109},
	issn = {1464-3510},
	url = {https://doi.org/10.1093/biomet/asab058},
	doi = {10.1093/biomet/asab058},
	pages = {683--706},
	number = {3},
	journal = {Biometrika},
	author = {MacDonald, P W and Levina, E and Zhu, J},
	year = {2022}
}

@article{mazumder,
  title={Spectral regularization algorithms for learning large incomplete matrices},
  author={Mazumder, Rahul and Hastie, Trevor and Tibshirani, Robert},
  journal={The Journal of Machine Learning Research},
  volume={11},
  pages={2287--2322},
  year={2010},
  publisher={JMLR. org}
}

@article{cai_zhang,
author = {Cai, T. and Zhang, Anru},
year = {2016},
pages = {},
title = {Rate-Optimal Perturbation Bounds for Singular Subspaces with Applications to High-Dimensional Statistics},
volume = {46},
journal = {The Annals of Statistics},
doi = {10.1214/17-AOS1541}
}

@article{Athreya2017,
author = {Athreya, Avanti and Fishkind, Donniell and Levin, Keith and Lyzinski, Vince and Park, Youngser and Qin, Yichen and Sussman, Daniel and Tang, Minh and Vogelstein, Joshua and Priebe, Carey},
year = {2017},
pages = {},
title = {Statistical inference on random dot product graphs: A survey},
volume = {18},
journal = {Journal of Machine Learning Research},
doi = {10.48550/arXiv.1709.05454}
}

@article{Young2007,
author="Young, Stephen J.
and Scheinerman, Edward R.",
editor="Bonato, Anthony
and Chung, Fan R. K.",
title="Random Dot Product Graph Models for Social Networks",
booktitle="Algorithms and Models for the Web-Graph",
year={2007},
journal = {Proceedings of the 5th International Conference on Algorithms and Models for the Web-Graph},
publisher="Springer Berlin Heidelberg",
address="Berlin, Heidelberg",
pages="138--149",
}

@article{Holland1983,
title = {Stochastic blockmodels: First steps},
journal = {Social Networks},
volume = {5},
number = {2},
pages = {109-137},
year = {1983},
issn = {0378-8733},
doi = {https://doi.org/10.1016/0378-8733(83)90021-7},
url = {https://www.sciencedirect.com/science/article/pii/0378873383900217},
author = {Paul W. Holland and Kathryn Blackmond Laskey and Samuel Leinhardt},
abstract = {A stochastic model is proposed for social networks in which the actors in a network are partitioned into subgroups called blocks. The model provides a stochastic generalization of the blockmodel. Estimation techniques are developed for the special case of a single relation social network, with blocks specified a priori. An extension of the model allows for tendencies toward reciprocation of ties beyond those explained by the partition. The extended model provides a one degree-of-freedom test of the model. A numerical example from the social network literature is used to illustrate the methods.}
}

@article{li_network_2020,
	title = {Network cross-validation by edge sampling},
    year = {2020},
	volume = {107},
	issn = {0006-3444},
	url = {https://doi.org/10.1093/biomet/asaa006},
	doi = {10.1093/biomet/asaa006},
	abstract = {While many statistical models and methods are now available for network analysis, resampling of network data remains a challenging problem. Cross-validation is a useful general tool for model selection and parameter tuning, but it is not directly applicable to networks since splitting network nodes into groups requires deleting edges and destroys some of the network structure. In this paper we propose a new network resampling strategy, based on splitting node pairs rather than nodes, that is applicable to cross-validation for a wide range of network model selection tasks. We provide theoretical justification for our method in a general setting and examples of how the method can be used in specific network model selection and parameter tuning tasks. Numerical results on simulated networks and on a statisticians’ citation network show that the proposed cross-validation approach works well for model selection.},
	pages = {257--276},
	number = {2},
	journal = {Biometrika},
	author = {Li, Tianxi and Levina, Elizaveta and Zhu, Ji},
	date = {2020-04},
}

@article{Arroyo2019,
author = {Arroyo, Jesús and Athreya, Avanti and Cape, Joshua and Chen, Guodong and Priebe, Carey and Vogelstein, Joshua},
year = {2021},
pages = {1-49},
title = {Inference for Multiple Heterogeneous Networks with a Common Invariant Subspace},
volume = {22},
journal = {Journal of machine learning research}
}

@article{gollini2016joint,
author = {Gollini, Isabella and Murphy, Thomas},
year = {2016},
pages = { 246--265},
title = {Joint Modeling of Multiple Network Views},
volume = {25},
number = 1,
journal = {Journal of Computational and Graphical Statistics},
doi = {10.1080/10618600.2014.978006}
}

@article{SalterTownshend2017,
 ISSN = {19326157},
 URL = {http://www.jstor.org/stable/26362225},
 abstract = {Social relationships consist of interactions along multiple dimensions. In social networks, this means that individuals form multiple types of relationships with the same person (e.g., an individual will not trust all of his/her acquaintances). Statistical models for these data require understanding two related types of dependence structure: (i) structure within each relationship type, or network view, and (ii) the association between views. In this paper, we propose a statistical framework that parsimoniously represents dependence between relationship types while also maintaining enough flexibility to allow individuals to serve different roles in different relationship types. Our approach builds on work on latent space models for networks [see, e.g., J. Amer. Statist. Assoc. 97 (2002) 1090–1098]. These models represent the propensity for two individuals to form edges as conditionally independent given the distance between the individuals in an unobserved social space. Our work departs from previous work in this area by representing dependence structure between network views through a multivariate Bernoulli likelihood, providing a representation of between-view association. This approach infers correlations between views not explained by the latent space model. Using our method, we explore 6 multiview network structures across 75 villages in rural southern Karnataka, India [Banerjee et al. (2013)].},
 author = {Michael Salter-Townshend and Tyler H. McCormick},
 journal = {The Annals of Applied Statistics},
 number = {3},
 pages = {1217--1244},
 publisher = {Institute of Mathematical Statistics},
 title = {LATENT SPACE MODELS FOR MULTIVIEW NETWORK DATA},
 urldate = {2025-01-01},
 volume = {11},
 year = {2017}
}

@article{DAnegelo2019,
author = {D'Angelo, Silvia and Murphy, Thomas and Alfo, Marco},
year = {2018},
pages = {},
title = {Latent Space Modeling of Multidimensional Networks with Application to the Exchange of Votes in Eurovision Song Contest},
volume = {13},
journal = {The Annals of Applied Statistics},
doi = {10.1214/18-AOAS1221}
}

@misc{Sosa2021,
      title={A Latent Space Model for Multilayer Network Data}, 
      author={Juan Sosa and Brenda Betancourt},
      year={2021},
      note={arXiv:2102.09560},
      primaryClass={cs.SI},
      url={https://arxiv.org/abs/2102.09560}, 
}

@article{FithianMazumder2018,
author = {Fithian, William and Mazumder, Rahul},
year = {2018},
equation={36},
pages = {238-260},
title = {Flexible Low-Rank Statistical Modeling with Missing Data and Side Information},
volume = {33},
journal = {Statistical Science},
doi = {10.1214/18-STS642}
}

@article{Koltchinskii2010,
author = {Koltchinskii, Vladimir and Tsybakov, Alexandre and Lounici, Karim},
year = {2010},
pages = {},
title = {Nuclear norm penalization and optimal rates for noisy low rank matrix
completion},
volume = {39},
journal = {Annals of Statistics},
doi = {10.1214/11-AOS894}
}

@article{TianYinqiu2024,
      title={Efficient Analysis of Latent Spaces in Heterogeneous Networks}, 
      author={Yuang Tian and Jiajin Sun and Yinqiu He},
      year={2024},
      note={arXiv:2412.02151},
      primaryClass={stat.ME},
      url={https://arxiv.org/abs/2412.02151}, 
}

@article{TaoWu2017,
    doi = {10.1371/journal.pone.0188196},
    author = {Badea, Liviu AND Onu, Mihaela AND Wu, Tao AND Roceanu, Adina AND Bajenaru, Ovidiu},
    journal = {PLOS ONE},
    publisher = {Public Library of Science},
    title = {Exploring the reproducibility of functional connectivity alterations in {P}arkinson’s disease},
    year = {2017},
    volume = {12},
    url = {https://doi.org/10.1371/journal.pone.0188196},
    pages = {1-21},
    number = {11},
}

@article{Hoff2002,
author = {Peter D Hoff and Adrian E Raftery and Mark S Handcock},
title = {Latent Space Approaches to Social Network Analysis},
journal = {Journal of the American Statistical Association},
volume = {97},
number = {460},
pages = {1090--1098},
year = {2002},
publisher = {ASA Website},
doi = {10.1198/016214502388618906},
URL = {https://doi.org/10.1198/016214502388618906},
eprint = {https://doi.org/10.1198/016214502388618906
}

}

@article{Jones2021MRDPG,
      title={The multilayer random dot product graph}, 
      author={Andrew Jones and Patrick Rubin-Delanchy},
      year={2021},
      note={arXiv:2007.10455},
      primaryClass={stat.ML},
      url={https://arxiv.org/abs/2007.10455}, 
}

@article{Bandeira2016,
   title={Sharp nonasymptotic bounds on the norm of random matrices with independent entries},
   volume={44},
   ISSN={0091-1798},
   url={http://dx.doi.org/10.1214/15-AOP1025},
   DOI={10.1214/15-aop1025},
   number={4},
   journal={The Annals of Probability},
   publisher={Institute of Mathematical Statistics},
   author={Bandeira, Afonso S. and van Handel, Ramon},
   year={2016}}

@article{AAL116_atlas,
title = {Automated Anatomical Labeling of Activations in SPM Using a Macroscopic Anatomical Parcellation of the MNI MRI Single-Subject Brain},
journal = {NeuroImage},
volume = {15},
number = {1},
pages = {273-289},
year = {2002},
issn = {1053-8119},
doi = {https://doi.org/10.1006/nimg.2001.0978},
url = {https://www.sciencedirect.com/science/article/pii/S1053811901909784},
author = {N. Tzourio-Mazoyer and B. Landeau and D. Papathanassiou and F. Crivello and O. Etard and N. Delcroix and B. Mazoyer and M. Joliot},
abstract = {An anatomical parcellation of the spatially normalized single-subject high-resolution T1 volume provided by the Montreal Neurological Institute (MNI) (D. L. Collins et al., 1998, Trans. Med. Imag. 17, 463–468) was performed. The MNI single-subject main sulci were first delineated and further used as landmarks for the 3D definition of 45 anatomical volumes of interest (AVOI) in each hemisphere. This procedure was performed using a dedicated software which allowed a 3D following of the sulci course on the edited brain. Regions of interest were then drawn manually with the same software every 2 mm on the axial slices of the high-resolution MNI single subject. The 90 AVOI were reconstructed and assigned a label. Using this parcellation method, three procedures to perform the automated anatomical labeling of functional studies are proposed: (1) labeling of an extremum defined by a set of coordinates, (2) percentage of voxels belonging to each of the AVOI intersected by a sphere centered by a set of coordinates, and (3) percentage of voxels belonging to each of the AVOI intersected by an activated cluster. An interface with the Statistical Parametric Mapping package (SPM, J. Ashburner and K. J. Friston, 1999, Hum. Brain Mapp. 7, 254–266) is provided as a freeware to researchers of the neuroimaging community. We believe that this tool is an improvement for the macroscopical labeling of activated area compared to labeling assessed using the Talairach atlas brain in which deformations are well known. However, this tool does not alleviate the need for more sophisticated labeling strategies based on anatomical or cytoarchitectonic probabilistic maps.}
}

@article{Weil2016VisualDisfPD,
  title     = "Visual dysfunction in {P}arkinson's disease",
  author    = "Weil, Rimona S and Schrag, Anette E and Warren, Jason D and
               Crutch, Sebastian J and Lees, Andrew J and Morris, Huw R",
  journal   = "Brain",
  publisher = "Oxford University Press (OUP)",
  volume    =  139,
  number    =  11,
  pages     = "2827--2843",
  year      =  2016,
  keywords  = "Parkinson's disease; cognition; perception; vision",
  language  = "en"
}

@article{Gottlich2013PD,
  title     = "Altered resting state brain networks in {P}arkinson's disease",
  author    = "G{\"o}ttlich, Martin and M{\"u}nte, Thomas F and Heldmann,
               Marcus and Kasten, Meike and Hagenah, Johann and Kr{\"a}mer,
               Ulrike M",
  journal   = "PLoS One",
  publisher = "Public Library of Science (PLoS)",
  volume    =  8,
  number    =  10,
  pages     = "e77336",
  year      =  2013,
  language  = "en"
}

@article{DiffADHD2017,
  title    = "Functional brain connectivity differences between different {ADHD} presentations: Impaired functional segregation in {ADHD-combined} presentation but not in {ADHD-inattentive} presentation",
  author   = "Ghaderi, Amir Hossein and Nazari, Mohammad Ali and Shahrokhi,
              Hassan and Darooneh, Amir Hossein",
  journal  = "Basic Clin. Neurosci.",
  volume   =  8,
  number   =  4,
  pages    = "267--278",
  year     =  2017,
  keywords = "ADHD presentations; Brain integration; Brain segregation; EEG;
              Graph theory",
  language = "en"
}

@ARTICLE{DiffAlzheimer2024,
  title     = "Changes of brain functional network in Alzheimer's disease and
               frontotemporal dementia: a graph-theoretic analysis",
  author    = "Wu, Shijing and Zhan, Ping and Wang, Guojing and Yu, Xiaohua and
               Liu, Hongyun and Wang, Weidong",
  journal   = "BMC Neurosci.",
  publisher = "Springer Science and Business Media LLC",
  volume    =  25,
  number    =  1,
  pages     = "30",
  year      =  2024,
  keywords  = "Alzheimer's disease; EEG; Frontotemporal dementia; Functional
               connectivity; Graph-theoretic analysis",
  copyright = "https://creativecommons.org/licenses/by/4.0",
  language  = "en"
}

@inproceedings{han2015consistent,
  author    = {Han, Qiuyi and Xu, Kevin S. and Airoldi, Edoardo M.},
  title     = {Consistent estimation of dynamic and multi-layer block models},
  booktitle = {Proceedings of the 32nd International Conference on Machine Learning},
  year      = {2015}
}

@article{pensky2024dimple,
author = {Pensky, Marianna and Wang, Yaxuan},
year = {2024},
month = {07},
pages = {1-14},
title = {Clustering of Diverse Multiplex Networks},
volume = {PP},
journal = {IEEE Transactions on Network Science and Engineering},
doi = {10.1109/TNSE.2024.3374102}
}

@article{nguen2024networktwosampletestblock,
      title={Network two-sample test for block models}, 
      author={Chung Kyong Nguen and Oscar Hernan Madrid Padilla and Arash A. Amini},
      year={2024},
      note={arXiv:2406.06014},
      primaryClass={math.ST},
      url={https://arxiv.org/abs/2406.06014}, 
}

@inproceedings{xuefei2020flexible,
author = {Zhang, Xuefei and Xue, Songkai and Zhu, Ji},
title = {A flexible latent space model for multilayer networks},
year = {2020},
booktitle = {Proceedings of the 37th International Conference on Machine Learning},
articleno = {1047},
numpages = {10},
}

@article{TangSemiparamNetworkTest,
    author = {Minh Tang and Avanti Athreya and Daniel L. Sussman and Vince Lyzinski and Youngser Park and Carey E. Priebe},
    title = {A Semiparametric Two-Sample Hypothesis Testing Problem for Random Graphs},
    journal = {Journal of Computational and Graphical Statistics},
    volume = {26},
    number = {2},
    pages = {344--354},
    year = {2017},
    publisher = {ASA Website},
    doi = {10.1080/10618600.2016.1193505},
    
    
    URL = { https://doi.org/10.1080/10618600.2016.1193505
    },
    eprint = {https://doi.org/10.1080/10618600.2016.1193505
    }
}

@article{macdonald2024mesoscale,
      title={Mesoscale two-sample testing for network data}, 
      author={Peter W. MacDonald and Elizaveta Levina and Ji Zhu},
      year={2024},
      note={arXiv:2410.17046},
      primaryClass={stat.ME},
      url={https://arxiv.org/abs/2410.17046}, 
}

@ARTICLE{LucasJimenez2016,
  title    = "Altered functional connectivity in the default mode network is
              associated with cognitive impairment and brain anatomical changes
              in {P}arkinson's disease",
  author   = "Lucas-Jim{\'e}nez, Olaia and Ojeda, Natalia and Pe{\~n}a, Javier and D{\'\i}ez-Cirarda, Mar{\'\i}a and Cabrera-Zubizarreta,
              Alberto and G{\'o}mez-Esteban, Juan Carlos and
              G{\'o}mez-Beldarrain, Mar{\'\i}a {\'A}ngeles and
              Ibarretxe-Bilbao, Naroa",
  journal  = "Parkinsonism Relat. Disord.",
  volume   =  33,
  pages    = "58--64",
  year     =  2016,
  keywords = "Cognitive impairment; Default mode network; Diffusion-weighted
              imaging; Functional connectivity; Gray matter; Parkinson's
              disease",
  language = "en"
}

@ARTICLE{Tessitore2019,
  title     = "Functional connectivity signatures of {P}arkinson's disease",
  author    = "Tessitore, Alessandro and Cirillo, Mario and De Micco, Rosa",
  journal   = "J. Parkinsons. Dis.",
  publisher = "IOS Press",
  volume    =  9,
  number    =  4,
  pages     = "637--652",
  year      =  2019,
  keywords  = "Functional MRI; Parkinson's disease; biomarkers; imaging;
               resting-state networks",
  language  = "en"
}

@ARTICLE{Baggio2014,
  title     = "Functional brain networks and cognitive deficits in {P}arkinson's
               disease",
  author    = "Baggio, Hugo-Cesar and Sala-Llonch, Roser and Segura,
               B{\`a}rbara and Marti, Maria-Jos{\'e} and Valldeoriola, Francesc
               and Compta, Yaroslau and Tolosa, Eduardo and Junqu{\'e}, Carme",
  journal   = "Hum. Brain Mapp.",
  publisher = "Wiley",
  volume    =  35,
  number    =  9,
  pages     = "4620--4634",
  year      =  2014,
  keywords  = "Parkinson's disease; cognitive impairment; connectivity; fMRI;
               graph theory",
  copyright = "http://onlinelibrary.wiley.com/termsAndConditions\#vor",
  language  = "en"
}

@article{MaMaUnifModel2020,
  author  = {Zhuang Ma and Zongming Ma and Hongsong Yuan},
  title   = {Universal Latent Space Model Fitting for Large Networks with Edge Covariates},
  journal = {Journal of Machine Learning Research},
  year    = {2020},
  volume  = {21},
  number  = {4},
  pages   = {1--67},
  url     = {http://jmlr.org/papers/v21/17-470.html}
}

\appendix
	\label{ch:appendix}
	
% \begin{algorithm}[H]\label{alg:full_update}
% \caption{Naive proximal gradient descent for problem \eqref{conv_prob}}
% \begin{algorithmic}
%     \State \textbf{Input:} Step size $\eta$, initial estimates $S^{(0)}, \{Q_k^{(0)}\}$, 
%     $\{R_{k\ell}^{(0)}\}$,  penalty coefficients $\lambda, \{\alpha_{2k}\}, \{\alpha_{k\ell}\}$
%     \For{$t = 1, 2,... $ or until convergence}
%         \State 
%         \[
%         S^{(t)} \gets \mathcal{T}_{\lambda\eta/M} \left[ S^{(t-1)} - \frac{\eta}{M} \frac{\partial}{\partial S} \mathcal{L}(S^{(t-1)}, \{Q_k^{(t-1)}\}, \{R_{k\ell}^{(t-1)}\}) \right]
%         \]
%         \For{$k = 1, \dots, K$}
%             \State 
%             \[
%             Q_k^{(t)} \gets \mathcal{T}_{\lambda\alpha_{2k}\eta/m_k} \left[ Q_k^{(t-1)} - \frac{\eta}{m_k} \frac{\partial}{\partial Q_k} \mathcal{L}(S^{(t)}, \{Q_k^{(t-1)}\}, \{R_{k\ell}^{(t-1)}\}) \right]
%             \]
%         \EndFor
%         \For{$k=1, \ldots, K$ and $\ell = 1, \dots, m_k$}
%             \State 
%             \[
%             R_{k\ell}^{(t)} \gets \mathcal{T}_{\lambda\alpha_k\eta} \left[ R_{k\ell}^{(t-1)} - \eta \frac{\partial}{\partial R_{k\ell}} \mathcal{L}(S^{(t)}, \{Q_k^{(t)}\}, \{R_{k\ell}^{(t-1)}\}) \right]
%             \]
%         \EndFor
%     \EndFor
%     \State \textbf{Output:} $S^{(t)}, \{Q_k^{(t)}\}_{k=1}^K, \{R_{k\ell}^{(t)}\}_\mathcal{I}$
% \end{algorithmic}
% \end{algorithm}

\section{Refitting step details}\label{refitting_details_section}

The drawback of using the nuclear norm penalty is the bias caused by shrinking the non-zero eigenvalues of the fitted matrices;  however, this bias does not affect the estimated eigenvectors.  The standard remedy for this is to use the refitting procedure developed by \cite{mazumder}.

We begin by describing the $\textit{FirstStageRefit}$, that is, the refitting step for Problem \eqref{first-stage-conv-prob}. Consider the eigen-decompositions of the solutions to Problem \eqref{first-stage-conv-prob}:
$$
\widehat{S + Q_k}=\hat{\bar{V}}_{0k} \hat{\Gamma}_{0k} \hat{\bar{V}}_{0k}^{\top},\quad \hat{R}_{k \ell}=\hat{\bar{U}}_k \hat{\Gamma}_{k \ell} \hat{\bar{U}}_{k \ell}^{\top}, \quad \text{where}\quad  \ell=1,\ldots, m_k.
$$
 Element-wise, we have
$$
[\widehat{S +Q_k}]_{i j}=\sum_{r=1}^{\widehat{d_0 + d_k}} \gamma_{r}(\widehat{S + Q_k}) \hat{\bar{V}}_{0k, i r} \hat{\bar{V}}_{0k,j r} \quad i=1, \ldots, n ;\ j=1, \ldots n,
$$
 where $\widehat{d_0 + d_k}$ is the estimated rank of $\widehat{S + Q_k}$, and $\gamma_{r}(\cdot)$ denotes the $r$-th eigenvalue of the input matrix, ordered by magnitude.  We can write the element-wise decomposition of $\hat{R}_{k\ell}$ similarly.
 Since $\widehat{S + Q_k}$ and $\{\hat{R}_{k \ell}\}_{\ell=1}^{m_k}$ are expected to be low-rank up to numerical precision (due to the nuclear norm penalization), in our implementation, we compute the ranks by counting the number of eigenvalues with magnitude above a small positive threshold, set to $10^{-6}$. Fixing the eigenvectors corresponding to such eigenvalues, $\textit{FirstStageRefit}$ solves the following convex problem with variables $\hat{\Gamma}_{0k}$ and $\{\hat{\Gamma}_{k \ell}\}_{\ell=1}^{m_k}$:
\begin{equation}\label{first_stage_refit}
\begin{aligned}
\min\ \Bigl\{-\sum_{\ell=1}^{m_k} \sum_{i\le j} \log f\Bigl(A_{k \ell, i j} ; & \sum_{r=1}^{\widehat{d_0 + d_k}} \gamma_{r}(\widehat{S + Q_k}) \hat{\bar{V}}_{0k,ir} \hat{\bar{V}}_{0k,jr} \ + \\ 
& \sum_{r=1}^{\hat{d}_{k \ell}} \gamma_{r}(\hat{R}_k) \hat{\bar{U}}_{k \ell, i r} \hat{\bar{U}}_{k \ell, j r}, \ \phi\Bigr)\Bigr\}
\end{aligned}
\end{equation}
 
Similarly, consider the eigen-decompositions of the solutions to Problem \eqref{second-stage-conv-prob} that are truncated up to the first $\hat{d}_0$ and $\hat{d}_k, k=1,\ldots, K$ eigenvectors, respectively:
$$
\hat{S}=\hat{\bar{V}} \hat{\Gamma}_0 \hat{\bar{V}}^{\top}, \quad \hat{Q}_k =\hat{\bar{W}}_k \hat{\Gamma}_k\hat{\bar{W}}_k ^{\top}, \quad \text{where}\quad  k=1,\ldots, K.
$$
% Element-wise, we have
% $$
% \hat{S}_{i j}=\sum_{r=1}^{\hat{d}_0} \gamma_{r}(\hat{S}) \hat{\bar{V}}_{i r} \hat{\bar{V}}_{j r} \quad i=1, \ldots, n ;\ j=1, \ldots n,
% $$
%
Then $\textit{SecondStageRefit}$ solves the following convex problem with variables $\hat{\Gamma}_0, \{\hat{\Gamma}_k\}_{k=1}^K$ and the refitted estimates $\{\hat{R}_{k\ell}\}_\mathcal{I}$ of the individual components kept fixed:

\begin{equation}\label{second_stage_refit}
\begin{aligned}
\min \ \Bigl\{-\sum_{k=1}^K\sum_{\ell=1}^{m_k} \sum_{i\le j} \log f\Bigl(A_{k \ell, i j} ; & \sum_{r=1}^{\hat{d}_0} \gamma_{r}(\hat{S}) \hat{\bar{V}}_{ir} \hat{\bar{V}}_{jr} \ + \\ 
& \sum_{r=1}^{\hat{d}_{k}} \gamma_{r}(\hat{Q}_k) \hat{\bar{W}}_{k, i r} \hat{\bar{W}}_{k, j r} \ + \hat{R}_{k\ell, ij}, \ \phi\Bigr)\Bigr\}.
\end{aligned}
\end{equation}
 
If $f$ is from a one-parameter exponential family as in \eqref{one_param_exp_family}, $\textit{FirstStageRefit}$ is  equivalent to fitting a GLM with $\widehat{d_0 + d_k}  + \sum_{\ell =1}^{m_k} \hat{d}_{k \ell}$ predictors (no intercept) and $n(n+1) m_k / 2$ responses. In turn, Problem \eqref{second_stage_refit} is equivalent to fitting a GLM with $\hat{d_0}  + \sum_{k=1}^{K} \hat{d}_{k}$ predictors and $n(n+1) M / 2$ responses. Notice that the first-stage GLMs are fitted without any intercept, while the second-stage GLM is fitted with a fixed vector of observation-dependent offsets $\hat{R}_{k\ell,ij}$. 

% Although experimentally we demonstrate that refitting significantly improves modeling accuracy, our theoretical analysis only studies the performance of Algorithm \ref{alg:stepwise_update} without the refitting step. In particular, in Section \ref{theory_section}, we demonstrate that consistency for the GroupMultiNeSS parameters can be established up to the same asymptotic order without this extra step. 

\section{Proofs}
%Here, we present proofs of all the statements made throughout this manuscript.
\subsection{Section \ref{sec:identifiability} proofs (Identifiability)}\label{identif_proof_section} 
\begin{proof}[Proof of Proposition \ref{identif_propos}]
    By condition 1 and Lemma 1 in \citep{multiness}, for any $k=1, \ldots, K$ and $\ell = 1, \ldots, m_k$, we have
    \begin{equation}\label{identif_rotation}
        [V \ W_k\ U_{k\ell}] \ O_{k\ell} = [V'\ W_k'\ U_{k\ell}']
    \end{equation}
    where $O_{k\ell}$ satisfies $S_{k\ell}O_{k\ell}S_{k\ell}^\top \in \mathcal{O}_{p_0 + p_k + p_{k\ell}, \ q_0 + q_k + q_{k\ell}}$ with a permutation matrix
    $$S_{k\ell}=\left[\begin{array}{cccccc}
    I_{p_0} & 0 & 0 & 0 & 0 & 0\\
    0 & 0 & 0 & I_{q_0} & 0 & 0\\
    0 & I_{p_k} & 0 & 0 & 0 & 0 \\
    0 & 0 & 0 & 0 & I_{q_k} & 0\\
    0 & 0 & I_{p_{k\ell}} & 0 & 0 & 0 \\
    0 & 0 & 0 & 0 & 0 & I_{q_{k\ell}} 
    \end{array}\right].
    $$
    It is sufficient to demonstrate that $O_{k\ell}$ is block-diagonal for any layer, i.e., we need to verify that
    $$O_{k\ell} = \left[\begin{array}{ccc}
    O_{11, k\ell} & O_{12, k\ell} &    O_{13, k\ell} \\ 
    O_{21, k\ell} & O_{22, k\ell} &    O_{23, k\ell} \\
    O_{31, k\ell} & O_{32, k\ell} &    O_{33, k\ell}  \end{array}\right] = 
    \left[\begin{array}{ccc}
    O_{11, k\ell} & 0 &    0 \\ 
   0 & O_{22, k\ell} &    0 \\
   0 & 0 &    O_{33, k\ell} 
    \end{array}\right]
    $$
    with $O_{11, k\ell} \in \mathcal{O}_{p_0, q_0}, O_{22, k\ell} \in \mathcal{O}_{p_k, q_k}, O_{33, k\ell} \in \mathcal{O}_{p_{k\ell}, q_{k\ell}}$.
    We will refer to the components of the $r$-th column of $O_{k\ell}$ as follows $$o_{k\ell, r} = \left[\begin{array}{l}
o_{k\ell, r}^{(1)} \\
o_{k\ell, r}^{(2)} \\
o_{k\ell, r}^{(3)} 
\end{array}\right]\in \mathbb{R}^{d_0} \times \mathbb{R}^{d_k} \times \mathbb{R}^{d_{k\ell}}
$$

Consider layers $(k_1, \ell_1)$ and $(k_2, \ell_2)$ satisfying condition 3. 
Plugging them into \eqref{identif_rotation} and equating two expressions for $V'$, we obtain, for every $ r = 1, \ldots, d_0$
$$0 = V(o_{k_1\ell_1, r}^{(1)} - o_{k_2\ell_2, r}^{(1)}) + W_{k_1}o_{k_1\ell_1, r}^{(2)} - W_{k_2}o_{k_2\ell_2, r}^{(2)} + U_{k_1\ell_1}o_{k_1\ell_1, r}^{(3)} - U_{k_2\ell_2}o_{k_2\ell_2, r}^{(3)}.
$$
By linear independence, it implies that $o_{k_1\ell_1, r}^{(2)} = o_{k_1\ell_1, r}^{(3)} = 0$. Since it holds for any $r=1, \ldots, d_0$, we have
$O_{21, k_1\ell_1} = O_{31, k_1\ell_1} = 0$. 
By \eqref{identif_rotation}, it means $V' = VO_{k_1, \ell_1}$ and so for any layer $(k, \ell)$, we have $O_{21, k\ell} = O_{31, k\ell} = 0$.
Using the fact that $S_{k\ell}O_{k\ell}S_{k\ell}^\top$ is an indefinite orthogonal transformation, we can conclude that the symmetric upper-diagonal blocks $O_{12, k\ell}, O_{13, k\ell}$ are also zeros.
% and since $O_{k_1, \ell_1}$ is an indefinite orthogonal rotation, the same holds for the symmetric blocks $O_{12, k_1\ell_1} = O_{13, k_1\ell_1} = 0$.

Now, consider layers $s$ and $t$ in group $k$ satisfying the second identifiability condition. Plugging them into \eqref{identif_rotation} and equating two expressions for $W_k'$ now, we have, for every $r=d_0 + 1, \ldots, d_0 + d_k$
$$0 = V (o_{ks, r}^{(1)} - o_{kt, r}^{(1)}) + W_k (o_{ks, r}^{(2)} - o_{kt, r}^{(2)}) + U_{ks}o_{ks, r}^{(3)} - U_{ks}o_{kt, r}^{(3)}.
$$
By linear independence, $o_{ks, r}^{(3)} = o_{kt, r}^{(3)} = 0$. Since this holds for all $r=d_0 + 1, \ldots, d_0 + d_k$, we deduce
$O_{32, ks} = 0$. Combined with $O_{12, ks} = 0$, this implies that $W_k' = W_kO_{22, ks}$ and so for any layer $(k, \ell)$, we have $O_{32, k\ell} = 0$. Since the lower-right $2\times 2$ block sub-matrix of $O_{k\ell}$ is also an indefinite orthogonal rotation, we deduce $O_{23, k\ell} = 0$, which completes the proof.
% Since $S_{k\ell}O_{k\ell}S_{k\ell}^\top$ is an indefinite orthogonal transformation,  it holds:
% $$\left[\begin{array}{ccc}
% I_{p_0, q_0} & 0 & 0\\
% 0 & I_{p_{k}, q_{k}} & 0 \\ 
% 0 & 0 & I_{p_{k\ell}, q_{k\ell}}
% \end{array}\right]=
% O_{k\ell}\left[\begin{array}{ccc}
% I_{p_0, q_0} & 0 & 0\\
% 0 & I_{p_{k}, q_{k}} & 0 \\ 
% 0 & 0 & I_{p_{k\ell}, q_{k\ell}}
% \end{array}\right] O_{k\ell}^{\top}.
% $$
% Equating the blocks, we obtain that symmetrically $O_{12, k\ell} = O_{13, k\ell} = O_{23, k\ell} = 0$
\end{proof}

\subsection{Section \ref{sec:estimation} Proofs (Optimization)}\label{fit_section_proofs}

\begin{proof}[Proof of Proposition \ref{second_stage_gaus_case_proposition}]
    For the Gaussian distribution, for each $1\le i\le j \le n$, we can rewrite the joint negative log-likelihood of the $(i, j)$-th entries (up to the $1/2\sigma^2$ multiplier) as
    \begin{equation*}
    \begin{aligned}
        \sum_{(k, \ell)\in\mathcal{I}} (A_{k\ell, ij} - S_{ij} -  Q_{k, ij} - \hat{R}_{k\ell, ij})^2 %&= \sum_{k=1}^K \Bigl[m_k \bigl(S_{ij} + Q_{k, ij}\bigr)^2 - 2m_k\tilde{A}_{k, ij}(S_{ij} + Q_{k, ij}) \Bigr] + const\\
        &=\sum_{k=1}^K m_k \bigl(S_{ij} + Q_{k, ij} -\tilde{A}_{k,ij}\bigr)^2 + const,
    \end{aligned}
    \end{equation*}
    where $const$ is a term independent of $S$ and $Q_k$. Therefore, the solution to Problem \eqref{second-stage-conv-prob} coincides with the solution of  %the following program:
    $$\min_{S, Q_k}\ \Bigl\{\sum_{k=1}^{K}{m_k\over 2\sigma^2}\sum_{i\le j}\bigl[S_{ij} + Q_{k, ij} - \tilde{A}_{k, ij}\bigr]^2 + \lambda_2 \|S\|_* + \sum_{k=1}^K\lambda_2\alpha_{2k} \|Q_k\|_* \Bigr\}.
    $$
\end{proof}

\begin{proof}[Proof of Proposition \ref{SpQ_avg_residual_relationship_propos}]
    With the assumed edge distribution, Problem \eqref{first-stage-conv-prob} solves \begin{equation}\label{loop_free_gaus_frist_stage}
    \widehat{S + Q_k}, \{\hat{R}_{k\ell}\}_{\ell = 1}^{m_k} = \argmin_{S + Q_k, R_{k\ell}} {1\over4\sigma^2}\sum_{\ell = 1}^{m_k}\Bigl\|A_{k\ell} - (S + Q_k) - R_{k\ell}\Bigr\|_F^2 +\ \lambda_{1k}\| S + Q_k\|_* + \sum_{\ell=1}^{m_k}\lambda_{1k} \alpha_{1k\ell}\|R_{k\ell}\|_*,
\end{equation}
    where the off-diagonal entries of each matrix inside the Frobenius norm are counted twice compared to \eqref{group_log_likelihood}, leading to an additional $1/2$ multiplier in front of the sum. By optimality, $\widehat{S + Q_k}$ should minimize the target function in \eqref{loop_free_gaus_frist_stage} over $S + Q_k$ with each $R_{k\ell}$ fixed to $\hat{R}_{k\ell}$, that is, %solve the optimization problem
    \begin{align*}
        \widehat{S+Q}_k =& \argmin_{Z\in\mathbb{R}^{n\times n}} \Bigl\{{1\over 4\sigma^2}\sum_{\ell=1}^{m_k}\bigl\|\bigl(A_{k\ell} - \hat{R}_{k\ell}\bigr) - Z\bigr\|_F^2 + \lambda_{1k} \bigl\|Z\bigr\|_* \Bigr\}\\
        =&\argmin_{Z\in\mathbb{R}^{n\times n}} \Bigl\{{1\over 2}\bigl\|{1\over m_k}\sum_{\ell=1}^{m_k}\bigl(A_{k\ell} - \hat{R}_{k\ell}\bigr) - Z\bigr\|_F^2 + {2\sigma^2\lambda_{1k} \over m_k}\bigl\|Z\bigr\|_* + const \Bigr\}
    \end{align*}
where $const$ is a term independent of $Z$. By the variational definition of soft-thresholding, the optimal $Z$ in the last problem is achieved at \eqref{SpQ_avg_residual_relationship}.
\end{proof}

\subsection{Section \ref{sec:theory} Proofs (Consistency)}\label{consist_section_proofs}
We start with additional notation. For $a, b\in\mathbb{R}$, let $a\wedge b = \min(a, b), \ a\vee b = \max(a, b)$. For a matrix $V\in\mathbb{R}^{n\times d}$, denote the projection operator onto its column space by $\mathcal{P}_V = V(V^\top V)^\dagger V^\top$,  where $(\cdot)^\dagger$ is the Moore-Penrose pseudoinverse. For a symmetric matrix $Z\in\mathbb{R}^{n\times n}$ with the eigendecomposition $Z = V\Gamma V^\top$, we always order the diagonal entries of  $\Gamma$ in descending order by their absolute values, with  
$$\|\Gamma\|_2 = |\gamma_1(Z)| \ge \ldots \ge |\gamma_n(Z)| = \gamma_{\min}(Z) . 
$$

In this section, we use multiple measures of similarity between subspaces spanned by the columns of two $n \times d$ orthogonal matrices $V$ and $\hat{V}$. Suppose the singular values of $V^{\top} \hat{V}$ are $\sigma_1 \geq$ $\sigma_2 \geq \cdots \geq \sigma_d \geq 0$. Then, following standard notations, we define the $\sin \Theta$-matrix as
\begin{equation*}
\sin\Theta(V, \hat{V})=\operatorname{diag}\left[\sin(\cos ^{-1}\left(\sigma_1\right)), \cdots, \sin(\cos ^{-1}\left(\sigma_d\right))\right], 
\end{equation*} 
and use $\|\sin \Theta(V, \hat{V})\|_2$ to measure the similarity between $V$ and $\hat V$. Further, for symmetric matrices $Z,\hat{Z}\in\mathbb{R}^{n\times n}$ with leading $d < n$ eigenvectors denoted  as $V, \hat{V} \in\mathbb{R}^{n\times d}$, respectively, we define $\sin_d(Z, \hat{Z}):= \|\sin \Theta(V, \hat{V})\|_2$.   This is uniquely defined only if either $|\gamma_d| > |\gamma_{d+1}|$ or $|\gamma_d| = |\gamma_{d+1}| = 0$.  For all matrices we use this for, this will be true for sufficiently large $n$. %, as $d$ will denote the rank of $Z$ and $\hat{Z} =\mathcal{T}_\rho(Z + E)$ will be a soft-thresholded perturbation of $Z$ with an additive noise matrix $E$ satisfying $\|E\|_2 =o( \gamma_{\min}(Z))$.

Lemma 1 in \cite{cai_zhang} establishes the equivalence between the sin distance and several other metrics. We will use the following two corollaries from this Lemma:    
\begin{equation}\label{vvT_sin_theta}
\|\hat{V} \hat{V}^{\top}-V V^{\top}\|_2 \leq 2\|\sin \Theta(\hat{V}, V)\|_2,
\end{equation}
\begin{equation}\label{infO_sin_theta}
\inf_{O\in \mathcal{O}_d}\|\hat{V} - VO\|_2 \leq \sqrt{2}\|\sin \Theta(\hat{V}, V)\|_2.
\end{equation}
Another important result we use is Theorem 1 of \cite{cai_zhang}, which provides a useful bound on the $\sin$-distance between the eigenspaces of a matrix and its perturbation.  Since originally this result was formulated for arbitrary matrices and we only need it for symmetric matrices, we restate it in Lemma \ref{version_cai_zhang_thm} with a slightly looser but much more convenient bound.

% We start by giving an outline of the proof for Theorem \ref{main_consistency_theorem}.  \liza{I think by the time you are writing an appendix, you just give the proof;  no need for an outline, then proof.   The technical lemmas can remain separate and be given later, but the material from here to the start of formal proof should be absorbed into the proof. } 

% We do not consider further ($t > 1$) iterations in the update procedures of both stages, since in the theoretical framework we use, it is impossible to demonstrate the rate improvement after more than one iteration update. However, we can demonstrate that the error would not increase after additional iterations. Thus, by induction, the final error rate obtained by running the updates until numerical convergence will be at least as good as the rate achieved after the first iteration.

\begin{proof}[Proof of Theorem \ref{main_consistency_theorem}]
Assume $\mathcal{E}_{noise}$ in \eqref{error_bound_set} and initialization error event in \eqref{init_error_set} occur simultaneously, which has probability at least $1 - (M + K + 1) n e^{-C_0n}$ by Lemma \ref{avg_errors_lemma} and Assumption \ref{initializer_assump}.  

The initial step is to rewrite the first iteration update for each parameter in Stage I and Stage II as a soft-thresholding operator applied to the parameter's ground-truth value, say $Z$, plus the error term, which comprises Gaussian noise and the weighted sum of an update-dependent set of latent components errors:
\begin{equation}\label{intuitive_soft_threshold_update}
    \hat{Z} = \mathcal{T}_\rho(Z + \text{other components' errors} + \text{Gaussian noise}).
\end{equation}
The first stage update rule for the parameters $S+Q_k$ and $\{R_{k\ell}\}_{\ell=1}^{m_k}$ can be rewritten 
\begin{equation}\label{first_stage_gaus_with_error}
    \begin{aligned}
    &R_{k\ell}^{(1)}
     = \mathcal{T}_{\rho_{1k\ell}} \Bigl[R_{k\ell}  - \Delta_{S+Q_k}^{(0)} + E_{k\ell}\Bigr], \qquad \text{for } \ell=1,\ldots, m_k, \\
      &(S + Q_k)^{(1)}  = \mathcal{T}_{\rho_{1k}} \Bigl[(S + Q_k) - {1\over m_k} \sum_{\ell=1}^{m_k} \Delta_{R_{k\ell}}^{(1)} + \bar{E}_k\Bigr], 
    \end{aligned}
\end{equation}
where by $\Delta_Z^{(1)}:=Z^{(1)} - Z$ we denote the error of parameter $Z$ after the first iteration.
The second stage update rule at the first iteration can be written  similarly:
\begin{equation}\label{second_stage_gaus_with_error}
    \begin{aligned}
    &Q_{k}^{(1)} 
     = \mathcal{T}_{\rho_{2k}}
     \Bigl[Q_k - \Delta_{S}^{(0)} - {1\over m_k} \sum_{\ell=1}^{m_k} \Delta_{R_{k\ell}} + \bar{E}_k \Bigr], \qquad k=1, \ldots, K,\\
      &S^{(1)} =\mathcal{T}_{\rho_2} \Bigl[S - \sum_{k=1}^K {m_k\over M}\Delta_{Q_k}^{(1)}  -  {1\over M} \sum_{(k,\ell)\in\mathcal{I}}\Delta_{R_{k\ell}} + \bar{E}\Bigr], 
    \end{aligned}
\end{equation}
where $\Delta_{R_{k\ell}}, (k, \ell) \in \mathcal{I}$ denotes the individual component error after the $k$-th group updates in the first stage.  

The key to our proof will be Lemma \ref{soft_threshold_properties} that shows that the soft threshold operator applied to a matrix perturbed by additive error as in \eqref{intuitive_soft_threshold_update} should have the threshold $\rho$ with the rate of the total error's spectral norm to guarantee that the estimate $\hat{Z}$ is close to the ground truth $Z$ in the Frobenius norm. In particular, it should dominate both the spectral norm of the other components' errors, which we control using Lemma \ref{avg_errors_lemma},  and the average of the appropriate Gaussian noise matrices, which we control by restricting the analysis to the set $\mathcal{E}_{noise}$. In what follows, we formalize this intuition for Stage I and then for Stage II.\\

\noindent {\bf Stage I (group $k$):} 
% We want to establish the rate for $\Delta_{R_{k\ell}}, \ell=1, \ldots, m_k$ by applying property 2 in Lemma \ref{soft_threshold_properties}
% and demonstrate $\operatorname{rank}(R_{k\ell}^{(1)}) \le d_{k\ell}$ as a prepartaion step to bound $\|\sum_{\ell=1}^{m_k}\Delta_{R_{k\ell}}\|_2$ 
% using Lemma \ref{avg_errors_lemma}.
We first establish the properties of $R_{k\ell}^{(1)}, \ \ell=1, \ldots, m_k$ by applying Lemma \ref{soft_threshold_properties} with %he following  correspondence of notations due to \eqref{first_stage_gaus_with_error}
$$E :=  E_{k\ell} -\Delta_{S+ Q_k}^{(0)}, \quad  \rho: =\rho_{1k\ell}, \quad d := d_{k\ell}.$$
Define $\rho_{1k\ell}=C_1( r^{(I)}_k\vee n^{1/2})$ with $C_1:=2(1 + 3\sigma)$,
By Assumption \ref{initializer_assump} and the triangular inequality, we have  
\begin{equation}\label{total_error_bound_first_stage}
    \|E_{k\ell} - \Delta_{S+ Q_k}^{(0)}\|_2 \le r^{(I)}_k + 3\sigma n^{1/2} \le \rho_{1k\ell} / 2 . 
\end{equation}
Therefore,  Properties 1 and 2 imply
\begin{equation}\label{error_rkl_1}
    \|\Delta^{(1)}_{R_{k\ell}}\|_2 \le 2\rho_{1k\ell}  \quad \text{and} \quad \|\Delta^{(1)}_{R_{k\ell}}\|_F \le 4\rho_{1k\ell} d_{k\ell}^{1/2},
\end{equation}
and by Property 4, if $R_{k\ell}$ is PSD, then $R_{k\ell}^{(1)}$ is also PSD. 
Combining Corollary \ref{cai_zhang_corollary_smallest_eigval} and \eqref{total_error_bound_first_stage}, we have,  for sufficiently large $n$, 
$$\theta_{k\ell} := \sin_{d_{k\ell}}(R_{k\ell}, R_{k\ell}^{(1)})  \le {\rho_{1k\ell}/2\over b_Rn^\tau /2} = {\rho_{1k\ell}\over b_Rn^\tau},
$$
which implies together with \eqref{total_error_bound_first_stage} that, %for sufficiently large $n$, 
\begin{equation}\label{q_1k_dominates_err_plus_angle}
    \|E_{k\ell}-\Delta_{S+ Q_k}^{(0)}\|_2 + 2\|R_{k\ell}\|_2\theta_{k\ell}^2 \le \rho_{1k\ell}/2 + o(\rho_{1k\ell}) < \rho_{1k\ell}.
\end{equation}
% Thus, for sufficiently large $n$, condition of Property 5 is satisfied and $R_{k\ell}^{(1)})$ is of rank at most $d$.
So, by Lemma \ref{avg_errors_lemma} with $s:=s_{u, u}^{(k)}$, $\theta:=\max_{1\le \ell \le m_k}\theta_{k\ell}$, and $\rho := \max_{1\le \ell \le m_k}\rho_{1k\ell}$, we have 
\begin{equation}\label{sum_individ_comp_errors_bound}
    \begin{aligned}
    \Bigl\|{1\over m_k}\sum_{\ell=1}^{m_k} \Delta_{R_{k\ell}}^{(1)} \Bigr\|_2 &\le 11{B_R\over b_R}\max_{1\le\ell\le m_k}\rho_{1k\ell} \Bigl[{1\over m_k} \vee s_{u, u}^{(k)}\vee {1\over b_Rn^{\tau}}\max_{1\le\ell\le m_k}\rho_{1k\ell}\Bigr]^{1/2} .
    \end{aligned}
\end{equation}
Next, we establish an error bound for $\Delta_{S + Q_k}^{(1)}$. With $C_2:={11 (1 + 3\sigma)B_R\over b_R(1\wedge b_R)}$, define
\begin{equation}\label{rho_1k_rate}
    \rho_{1k} =C_2 \max_{1\le\ell\le m_k}\rho_{1k\ell} \Bigl[{1\over m_k} \vee s_{u, u}^{(k)} \vee n^{-\tau}\max_{1\le\ell\le m_k}\rho_{1k\ell}\Bigr]^{1/2} , 
\end{equation}
so that, since $\rho_{1k\ell} \ge 3\sigma(n/m_k)^{1/2}$, 
\begin{equation}\label{total_error_bound_shared_part_first_stage}
    \Bigl\|\bar{E}_k -{1\over m_k}\sum_{\ell=1}^{m_k} \Delta_{R_{k\ell}}^{(1)} \Bigr\|_2 \le 3\sigma (n / m_k)^{1/2} + \Bigl\|{1\over m_k}\sum_{\ell=1}^{m_k} \Delta_{R_{k\ell}}^{(1)} \Bigr\| < \rho_{1k}.
\end{equation}
Then Properties 1 and 2 in Lemma \ref{soft_threshold_properties} imply
\begin{equation}\label{error_sqk_1}
     \|\Delta_{S + Q_k}^{(1)}\|_2 \le 2\rho_{1k} \quad \text{and}\quad \|\Delta_{S + Q_k}^{(1)}\|_F \le 4\rho_{1k}\sqrt{d_0 + d_k}.
\end{equation}
% Combining \eqref{rho_1k_rate} and \eqref{error_sqk_1}, we see that $\|\Delta_{S + Q_k}^{(1)}\|_2 = o(r_k^{(I)}\vee n^{1/2})$. Treating $(S + Q_k)^{(1)}$ as a new initializer for $S + Q_k$, we see that all bounds dependent on it, that is, \eqref{total_error_bound_first_stage} and \eqref{q_1k_dominates_err_plus_angle}, are still satisfied. Thus, by induction, the next iteration errors $\Delta_{R_{k\ell}}^{(t)}$ and $\Delta_{S + Q_{k}}^{(t)}$ for any $t\ge 1$ satisfy the error bounds in \eqref{error_rkl_1} and \eqref{error_sqk_1}, respectively. 

\noindent {\bf Stage II:} We first establish the properties of $Q_{k}^{(1)}, \ k=1, \ldots, K$ by applying Lemma \ref{soft_threshold_properties} with the following  correspondence of notations due to \eqref{second_stage_gaus_with_error}
$$E :=  \bar{E}_{k} - {1\over m_k} \sum_{\ell=1}^{m_k} \Delta_{R_{k\ell}} -\Delta_{S}^{(0)}, \quad  \rho: =\rho_{2k}, \quad d := d_{k}.$$
Define $\rho_{2k} = 4(r^{(II)} \vee \rho_{1k})$, so that by \eqref{total_error_bound_shared_part_first_stage} and the triangular inequality, we have
\begin{equation}\label{total_error_bound_second_stage}
    \|\bar{E}_{k} - {1\over m_k} \sum_{\ell=1}^{m_k} \Delta_{R_{k\ell}} -\Delta_{S}^{(0)}\|_2 \le \|\bar{E}_{k} - {1\over m_k} \sum_{\ell=1}^{m_k} \Delta_{R_{k\ell}}\|_2 +\|\Delta_{S}^{(0)}\|_2 \le \rho_{1k} + r^{(II)} \le \rho_{2k} /2.
\end{equation}
Therefore, by Properties 1 and 2 in Lemma \ref{soft_threshold_properties}, we have 
\begin{equation}\label{error_qk_2}
     \|\Delta_{Q_k}^{(1)}\|_2 \le 2\rho_{2k} \quad \text{and}\quad \|\Delta_{Q_k}^{(1)}\|_F \le 4\rho_{2k}\sqrt{d_k}
\end{equation}
and by Property 4, if $Q_{k}$ is PSD, then $Q_{k}^{(1)}$ is also PSD.
Combining Corollary \ref{cai_zhang_corollary_smallest_eigval} and \eqref{total_error_bound_second_stage}, we get for sufficiently large $n$:
$$\theta_k := \sin_{d_k}(Q_k, Q_k^{(1)})\le {\rho_{2k}\over b_Qn^{\tau}},
$$
which implies together with \eqref{total_error_bound_second_stage}that for sufficiently large $n$ it holds
$$\bigl\|\bar{E}_{k} - {1\over m_k} \sum_{\ell=1}^{m_k} \Delta_{R_{k\ell}} -\Delta_{S}^{(0)}\bigr\|_2  + 2\|Q_{k}\|_2\theta_{k}^2 \le \rho_{2k} / 2 + o(\rho_{2k}) < \rho_{2k}.
$$
Then, Lemma \ref{avg_errors_lemma} with $s:=s_{w, w}$, $\theta := \max_{1\le k\le K} \theta_k$, and $\rho:=\max_{1\le k\le K}\rho_{2k}$ implies
$$\Bigl\|{1\over M}\sum_{k=1}^{K} m_k\Delta_{Q_{k}}^{(1)} \Bigr\|_2 \le  c_K\Bigl\|{1\over K} \sum_{k=1}^{K} \Delta_{Q_{k}}^{(1)} \Bigr\|_2 \le {11c_KB_Q\over b_Q(1\wedge b_Q)}\max_{1\le k\le K}\rho_{2k} \Bigl[{1\over K} \vee s_{w, w} \vee n^{-\tau}\max_{1\le k\le K}\rho_{2k}\Bigr]^{1/ 2},
$$
where $c_K > 0$ is a constant such that $m_k / M \le c_K/K$ for each $k=1, \ldots, K$ (it is guaranteed to exist for sufficiently large $n$ by Assumption \ref{group_props_assump}).
To bound $\Delta_{S}^{(1)}$, we first apply Lemma \ref{avg_errors_lemma} with $s:=s_{u, u}, \theta := \max_{(k, \ell)\in\mathcal{I}}\theta_{k\ell}$, and $\rho:=\max_{(k, \ell)\in\mathcal{I}}\rho_{1k\ell}$ to obtain
\begin{align}\label{avg_R_err_bound}
\Bigl\|{1\over M}\sum_{(k,\ell)\in\mathcal{I}} \Delta_{R_{k\ell}} \Bigr\|_2 & \le 11 {B_R\over b_R} \max_{(k, \ell)\in\mathcal{I}}\rho_{1k\ell}\Bigr[{1\over M} \vee s_{u, u} \vee {1\over b_R n^{\tau}}\max_{(k, \ell)\in\mathcal{I}}\rho_{1k\ell}\Bigl]^{1/2}. 
\end{align}
% Thus, by
% Assumption \ref{group_props_assump}, we have for sufficiently large $n$ 
% \begin{align*}
%     \Bigl\|{1\over M}\sum_{(k,\ell)\in\mathcal{I}} \Delta_{R_{k\ell}} \Bigr\|_2 &= \Bigl\|\sum_{k=1}^K {m_k\over M} \Bigl( {1\over m_k}\sum_{\ell=1}^{m_k} \Delta_{R_{k\ell}}\Bigr) \Bigr\|_2 \le \pi_{\max}\Bigl\|{1\over K} \sum_{k=1}^K {1\over m_k}\sum_{\ell=1}^{m_k} \Delta_{R_{k\ell}}\Bigr\|_2 \\
%     &\le \pi_{\max}\max_{1\le k\le K}\Bigl\|{1\over m_k}\sum_{\ell=1}^{m_k} \Delta_{R_{k\ell}}\Bigr\|_2 < \pi_{\max}\max_{1\le k\le K}\rho_{1k}.
% \end{align*}
Letting $C_3 := 12\bigl({c_KB_Q\over b_Q(1\wedge b_Q)} + {B_R\over b_R(1\wedge b_R)}\bigr)$, define
$$\rho_2=C_3\Bigl(\max_{1\le k\le K}\rho_{2k} \bigl[K^{-1} \vee s_{w, w} \vee n^{-\tau}\max_{1\le k\le K}\rho_{2k}\bigr]^{1/2} \vee  \max_{(k, \ell)\in\mathcal{I}}\rho_{1k\ell}\Bigr[{1\over M} \vee s_{u, u} \vee n^{-\tau}\max_{(k, \ell)\in\mathcal{I}}\rho_{1k\ell}\Bigl]^{1/2}\Bigr),
$$
so that by ${\rho_{1k\ell}/M^{1/2}} \ge 3\sigma (n/M)^{1/2} \ge \|\bar{E}\|_2$ we have 
$$\Bigl\|{1\over M}\sum_{(k, \ell)\in\mathcal{I}}\Delta_{R_{k\ell}}^{(1)} \Bigr\|_2 + \Bigl\|{1\over M}\sum_{k=1}^Km_k\Delta_{Q_{k}}^{(1)} \Bigr\|_2 + \|\bar{E}\|_2 <\rho_2 . 
$$
Then, 
for sufficiently large $n$, Properties 1 and 2 in Lemma \ref{soft_threshold_properties} imply
\begin{equation}\label{error_s_2}
       \|\Delta_{S}^{(1)}\|_2 \le 2\rho_{2} \quad \text{and} \quad \|\Delta_{S}^{(1)}\|_F \le 4\rho_{2}\sqrt{d_0}.
\end{equation}
By Property 4, if $S$ is PSD, then $S^{(1)}$ is also PSD.
% Finally, we can see that $\|\Delta_{S}^{(1)}\|_2 = o(r^{(II)}\vee n^{1/2})$. So, treating $S^{(1)}$ as a new initializer, we obtain by induction that under the same choice of the thresholds $\rho_{2k}$ and $\rho_{2}$, next iteration errors $\Delta_{Q_{k}}^{(t)}$ and $\Delta_{S}^{(t)}$ satisfy the error bounds in \eqref{error_qk_2} and \eqref{error_s_2} respectively for any $t\ge 1$. 
\end{proof}

We conclude this section by proving Proposition \ref{init_sqrt_n_proposition}, establishing the bounds on the first and second stage averaging initializers. 
\begin{proof}[Proof of Proposition \ref{init_sqrt_n_proposition}]
    We rewrite the first and second stage initializers similarly to \eqref{intuitive_soft_threshold_update} by  substituting soft thresholding with hard thresholding:
    $$(S + Q_k)^{(0)} = \Bigl[{1\over m_{k}}\sum_{\ell=1}^{m_k}A_{k\ell} \Bigr]_{d_0 + d_k}= \Bigl[(S + Q_k) - {1\over m_k}\sum_{\ell=1}^{m_k}R_{k\ell} + \bar{E}_k \Bigr]_{d_0 + d_k},
    $$
    $$S^{(0)} = \Bigl[{1\over K}\sum_{k=1}^K{1\over m_k} \sum_{\ell=1}^{m_k}(A_{k\ell} -\hat{R}_{k\ell})\Bigr]_{d_0}= \Bigl[S + {1\over K} \sum_{k=1}^K Q_k -{1\over K}\sum_{k=1}^K{1\over m_k}\sum_{\ell=1}^{m_k}\Delta_{R_{k\ell}} + \bar{E}_k \Bigr]_{d_0 + d_k}.
    $$ 
   Our next goal is to use Lemma \ref{hard_threshold_lemma} to establish deterministic bounds on the errors of two initializers, assuming $\mathcal{E}_{noise}$ holds. 
    \\
    \\
    {\bf Stage I.} 
    By Lemma 4 in \citep{multiness}, we have 
    $$\|{1\over m_k}\sum_{\ell=1}^{m_k}R_{k\ell}\|_2 \le B_Rn^{\tau}(m_k^{-1} + s_{u, u}^{(k)})^{1/2} 
    $$
   Therefore, by triangular inequality, it holds for $E:=\bar{E}_k -  {1\over m_k}\sum_{\ell=1}^{m_k}R_{k\ell}$:
    $$\|E\|_2 \le 3\sigma (n / m_k)^{1/2} + B_Rn^\tau\bigl(m_k^{-1} + s_{u, u}^{(k)}\bigr)^{1/2} = o(n^\tau)$$
    and by Property 2 in Lemma \ref{hard_threshold_lemma}, we obtain for sufficiently large $n$
    $$\|(S + Q_k)  - (S + Q_k)^{(0)}\|_2   \le {19B_{S + Q} \over b_{S +Q}}\|\mathcal{P}_{[V, W_k]} E\|_2.
    $$
    The needed result follows by triangular inequality and Lemma 4 in \citep{multiness}:
    \begin{align*}
    \|\mathcal{P}_{[V, W_k]} E\|_2& \le \|\bar{E}_k\|_2 + \|\sum_{\ell=1}^{m_k}\mathcal{P}_{[V, W_k]} R_{k\ell}\|_2 /m_k\\
    &\le 3\sigma(n/m_k)^{1/2} + B_Rn^\tau s_{vw, u}^{(k)}(m_k^{-1} + s_{u,u}^{(k)})^{1/2} \lesssim n^{1/2}.
    \end{align*}  
\\
    {\bf Stage II.}  
    Combining \eqref{avg_R_err_bound} with the  established rates $r^{(I)}_k \lesssim n^{1/2}$ for the first-stage initializers  above, we obtain for sufficiently large $n$,
    $$\Bigl\|{1\over K}\sum_{k=1}^K{1\over m_k}\sum_{\ell=1}^{m_k}\Delta_{R_{k\ell}} \Bigl\|_2 \le c_K'\Bigl\|{1\over M}\sum_{(k,\ell)\in\mathcal{I}} \Delta_{R_{k\ell}} \Bigr\|_2 \lesssim (n/M)^{1/2},
    $$
    where $c_K' > 0$ is a constant such that $M / K\le c_K'm_k$ for every $k=1, \ldots, K$ (it is guaranteed to exist by Assumption \ref{group_props_assump}).
    By Lemma 4 in \citep{multiness}, we also have 
    $$\Bigl\|{1\over K} \sum_{k=1}^K Q_k\Bigr\|_2 \le B_Q n^\tau (K^{-1} + s_{w, w})^{1/2}  
    $$
    Therefore, by triangular inequality, it holds for $E:={1\over K} \sum_{k=1}^K Q_k + \bar{E} - {1\over K}\sum_{k=1}^K{1\over m_k}\sum_{\ell=1}^{m_k}\Delta_{R_{k\ell}}$ if $n$ is sufficiently large
    $$\|E\|_2 \le (1 + \delta) B_Q n^\tau (K^{-1} + s_{w, w})^{1/2}.
    $$
    This implies $|\gamma_{d_0}(S)| \ge b_{S} n^\tau \ge 4\|E\|_2$ by \eqref{min_S_max_Q_sep}. So, we can apply Property 2 in Lemma \ref{hard_threshold_lemma} to obtain:
    $$\|S^{(0)} - S\|_2 \le {19B_{S} \over b_S}\|\mathcal{P}_V E\|_2.
    $$
    Finally, Lemma 4 in \citep{multiness} and triangular inequality imply the needed rate
    \begin{align*}
        \|\mathcal{P}_V E\|_2 &\le \|\sum_{k=1}^K \mathcal{P}_V Q_k\|_2 / K + O(\sqrt{n/M})  \\
        &\le B_Qn^\tau s_{v, w}(K^{-1} + s_{w, w})^{1/2} +  O(\sqrt{n/M}) \lesssim \sqrt{nK / M}.
        \end{align*}
    % Also, by Assumption \ref{group_props_assump}, for sufficiently large $n$ we have $\pi_\max \ge Km_k / M$ for each $k$, so
    % $$\|{1\over K}\sum_{k=1}^K {1\over m_k}\sum_{\ell=1}^{m_k}R_{k\ell}\|_2 \le B_Rn^{\tau}m_k^{-1/2}(1 + m_ks_{u, u}^{(k)})^{1/2} \le \sqrt{2}B_Rn^\tau\bigl(m_k^{-1}\vee s_{u, u}^{(k)}\bigr)^{1/2}
    % $$
    % 
    % $$\|E\|_2 \le 3\sigma (n / m_k)^{1/2} +\sqrt{2}B_Rn^\tau\bigl(m_k^{-1}\vee s_{u, u}^{(k)}\bigr)^{1/2} = o(n^\tau)$$
    % since $b_{S + Q_k} \ge 6B_R(m_k^{-1}\vee s_{u, u}^{(k)}\bigr)^{1/2}$.
    % % Moreover, by Weil's inequality
    % % $$\gamma_{d_0 + d_k}(S + Q_k) \le \min\{\gamma_{d_0}(S), \gamma_{d_k}(Q_k)\} \le 
    % % $$
    % By Lemma \ref{hard_threshold_lemma}, with , we obtain 
    % $$\|(S + Q_k)  - (S + Q_k)^{(0)}\|_2   \le {19B_{S + Q} \over b_{S +Q}}\|\mathcal{P}_{[V, W_k]} E\|_2.
    % $$
    % The needed result follows by triangular inequality and Lemma 4 in \citep{multiness}:
    % \begin{align*}
    % \|\mathcal{P}_{[V, W_k]} E\|_2& \le \|\bar{E}_k\|_2 + \|\sum_{\ell=1}^{m_k}\mathcal{P}_{[V, W_k]} R_{k\ell}\|_2 /m_k\\
    % &\le 3\sigma(n/m_k)^{1/2} +\sqrt{2}B_Rn^\tau s_{vw, u}^{(k)}(m_k^{-1}\vee s_{u,u}^{(k)})
    % \end{align*}
\end{proof}

\subsection{Technical lemmas}\label{technical_lemmas_section}
In this section, we present the proofs of the technical lemmas used to establish the main result of Theorem \ref{main_consistency_theorem} and Proposition \ref{init_sqrt_n_proposition}.

\begin{lemma}[Version of Theorem 1, \citep{cai_zhang}]\label{version_cai_zhang_thm}
Consider symmetric matrices $Z, E \in \mathbb{R}^{n\times n}$. Define the eigen-decompositions of $Z$ and its perturbation $\tilde{Z} = Z + E$ as, respectively, 
\begin{align*} 
Z =\left[\begin{array}{ll}
V & V_{\perp}
\end{array}\right] \left[\begin{array}{cc}
\Gamma_1 & 0 \\
0 & \Gamma_2
\end{array}\right] \left[\begin{array}{c}
V^{\top} \\
V_{\perp}^{\top}
\end{array}\right] , \ \ \ 
\tilde{Z}  =\left[\begin{array}{ll}
\tilde{V} & \tilde{V}_{\perp}
\end{array}\right] \left[\begin{array}{cc}
\tilde{\Gamma}_1 & 0 \\
0 & \tilde{\Gamma}_2
\end{array}\right] \left[\begin{array}{c}
\tilde{V}^{\top} \\
\tilde{V}_{\perp}^{\top}
\end{array}\right],
\end{align*}
where $V^\top V = I_d, \ V_\perp^\top V_\perp = I_{n-d}, \ \Gamma_1 = \operatorname{diag}[\gamma_1(Z), \ldots, \gamma_d(Z)], \ \Gamma_2 = \operatorname{diag}[\gamma_{d+1}(Z), \ldots, \gamma_n(Z)]$ for some $d < n$, and $\tilde{\Gamma}_1, \tilde{\Gamma}_2, \tilde{V}, \tilde{V}_{\perp}$ are defined similarly. Let  
$$\alpha = |\gamma_{\min}(V^{\top} \tilde{Z}V)|,\ \beta = \|V_\perp^\top \tilde{Z}V_\perp\|_2, \ e_{21}=\|\mathcal{P}_{V_{\perp}} E \mathcal{P}_V\|_2, \ e_{12}=\|\mathcal{P}_V E \mathcal{P}_{V_{\perp}} \|_2.
$$ Then, if 
\begin{equation}\label{cai_zhang_reformulated_condition}
    \alpha > \beta + e_{12}\wedge e_{21},
\end{equation}
it holds
\begin{equation}
    \|\sin \Theta(V, \hat{V})\|_2 \leq \frac{ e_{12}\vee e_{21}}{\alpha-\beta-e_{21} \wedge e_{12}} 
\end{equation}
\end{lemma}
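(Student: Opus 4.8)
The plan is to reduce the statement directly to Theorem~1 of \cite{cai_zhang} and then coarsen the resulting bound into the algebraically convenient form stated here. The first step is a translation of notation. Since $Z$ and $\tilde Z = Z + E$ are symmetric, their leading eigenvectors agree with their leading left (equivalently right) singular vectors, so the bilateral $\sin\Theta$ bound of \cite{cai_zhang}, which controls both the left and the right singular subspaces, collapses to a single bound on $\|\sin\Theta(V,\tilde V)\|_2$. Likewise, the matrices $V^\top\tilde Z V$ and $V_\perp^\top \tilde Z V_\perp$ are symmetric, so their extreme singular values are exactly $|\gamma_{\min}(V^\top\tilde Z V)| = \alpha$ and $\|V_\perp^\top\tilde Z V_\perp\|_2 = \beta$; and since $E$ is symmetric, $\mathcal{P}_{V_\perp}E\mathcal{P}_V = (\mathcal{P}_V E\mathcal{P}_{V_\perp})^\top$, so in fact $e_{12}=e_{21}$, and in either case these are the off-diagonal perturbation blocks appearing in \cite{cai_zhang}. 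With these identifications, Theorem~1 of \cite{cai_zhang} yields, under the condition $\alpha^2 > \beta^2 + (e_{12}\wedge e_{21})^2$,
\[
\|\sin\Theta(V,\tilde V)\|_2 \;\le\; \frac{\alpha\,(e_{12}\vee e_{21}) + \beta\,(e_{12}\wedge e_{21})}{\alpha^2 - \beta^2 - (e_{12}\wedge e_{21})^2}.
\]

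The second step is to loosen this into the claimed bound using elementary algebra. The hypothesis $\alpha > \beta + e_{12}\wedge e_{21}$ implies both $\alpha^2 > (\beta + e_{12}\wedge e_{21})^2 \ge \beta^2 + (e_{12}\wedge e_{21})^2$, so the condition of \cite{cai_zhang} is met, and also $\alpha + \beta \ge \alpha \ge e_{12}\wedge e_{21}$ (using $\beta\ge 0$). For the numerator I would use $\alpha\,(e_{12}\vee e_{21}) + \beta\,(e_{12}\wedge e_{21}) \le (\alpha+\beta)(e_{12}\vee e_{21})$. For the denominator, factoring $\alpha^2-\beta^2 = (\alpha+\beta)(\alpha-\beta)$ and using $\alpha+\beta \ge e_{12}\wedge e_{21}$ gives
\[
\alpha^2-\beta^2-(e_{12}\wedge e_{21})^2 \;\ge\; (\alpha+\beta)(\alpha-\beta) - (\alpha+\beta)(e_{12}\wedge e_{21}) \;=\; (\alpha+\beta)\,(\alpha-\beta-e_{12}\wedge e_{21}),
\]
which is strictly positive by hypothesis. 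Dividing and cancelling the common factor $\alpha+\beta$ produces exactly $\|\sin\Theta(V,\tilde V)\|_2 \le (e_{12}\vee e_{21})/(\alpha-\beta-e_{21}\wedge e_{12})$, as claimed.

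I do not expect a real obstacle; the only thing requiring care is the bookkeeping of the first step — confirming that $\alpha$, $\beta$, $e_{12}$, $e_{21}$, the orientation convention for $\sin\Theta$, and the choice of leading eigenvectors (well-defined since $\alpha>\beta$ forces an eigengap in $\tilde Z$) match the quantities appearing in the general, possibly rectangular, statement of \cite{cai_zhang} once it is specialized to symmetric matrices. Everything after that is the elementary manipulation above.
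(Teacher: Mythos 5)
Your proposal is correct and follows essentially the same route as the paper: invoke Theorem 1 of \cite{cai_zhang}, note that $\alpha > \beta + e_{12}\wedge e_{21}$ implies its hypothesis $\alpha^2 > \beta^2 + (e_{12}\wedge e_{21})^2$, and then loosen the resulting ratio by bounding numerator and denominator so that a common factor cancels. The only cosmetic difference is that you cancel $(\alpha+\beta)$ while the paper cancels $(\alpha+\beta+e_{12}\wedge e_{21})$; both yield the stated bound.
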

\begin{proof}
Condition \eqref{cai_zhang_reformulated_condition} is stronger than the one in Theorem 1, \citep{cai_zhang} since it implies
$$ \alpha^2 > (\beta + e_{12}\wedge e_{21})^2 \ge \beta^2 + e_{12}^2 \wedge e_{21}^2,$$
and we can further relax the original upper bound on $\|\sin \Theta(V, \hat{V})\|_2$ as follows: 
\begin{align*}
    \|\sin \Theta(V, \hat{V})\|_2 &  \le \ {\alpha e_{12} + \beta e_{21}\over \alpha^2 -\beta^2 - e_{12}^2\wedge e_{21}^2}  
    \le \ {(\alpha + \beta + e_{12}\wedge e_{21}) (e_{12} \vee e_{21})\over \alpha^2 -(\beta + e_{12} \wedge  e_{21})^2} \\
    & =  \ {e_{12} \vee e_{21} \over \alpha - \beta - e_{12}\wedge e_{21}}.
\end{align*}
\end{proof}
\begin{corollary}  \label{cai_zhang_corollary_smallest_eigval}  
     In notation of Lemma \ref{version_cai_zhang_thm}, if $d$ is the rank of $Z$ and $|\gamma_d(Z)| > 3\|E\|_2$, then
        $$\|\sin\Theta(V, \tilde{V})\|_2 \le {\|\mathcal{P}_VE\|_2 \over |\gamma_d(Z)| - 3\|E\|_2 }.
        $$
\end{corollary}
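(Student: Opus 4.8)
The plan is to verify the hypothesis \eqref{cai_zhang_reformulated_condition} of Lemma \ref{version_cai_zhang_thm} and then simplify the bound it produces, the key structural fact being that $Z$ has rank $d$, so in the notation of that lemma $\Gamma_2 = \operatorname{diag}[\gamma_{d+1}(Z),\dots,\gamma_n(Z)] = 0$ while all diagonal entries of $\Gamma_1$ are nonzero. First I would record the elementary operator-norm estimates that follow because $V$ and $V_\perp$ have orthonormal columns (so $\|\mathcal{P}_V\|_2 = \|\mathcal{P}_{V_\perp}\|_2 = \|V\|_2 = \|V_\perp\|_2 = 1$): each of $\|V^\top E V\|_2$, $\|V_\perp^\top E V_\perp\|_2$, $e_{12} = \|\mathcal{P}_V E\mathcal{P}_{V_\perp}\|_2$, and $e_{21} = \|\mathcal{P}_{V_\perp} E\mathcal{P}_V\|_2$ is at most $\|E\|_2$. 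In particular, since $V_\perp^\top Z V_\perp = \Gamma_2 = 0$, we get $\beta = \|V_\perp^\top \tilde Z V_\perp\|_2 = \|V_\perp^\top E V_\perp\|_2 \le \|E\|_2$, and also $e_{12}\vee e_{21}\le \|E\|_2$ and $e_{12}\wedge e_{21}\le\|E\|_2$.

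Next I would lower bound $\alpha = |\gamma_{\min}(V^\top \tilde Z V)|$. Writing $V^\top \tilde Z V = \Gamma_1 + V^\top E V$ with $\|V^\top E V\|_2\le\|E\|_2$, Weyl's inequality applied to the algebraically sorted eigenvalues shows that each eigenvalue of $V^\top\tilde Z V$ lies within $\|E\|_2$ of the corresponding (same-rank) eigenvalue of $\Gamma_1$. Since every diagonal entry of $\Gamma_1$ has magnitude at least $|\gamma_d(Z)| = \min_j|\gamma_j(Z)| > 0$ (as $d = \operatorname{rank}(Z)$), picking the eigenvalue of $V^\top \tilde Z V$ of minimal magnitude and matching it to its counterpart in $\Gamma_1$ gives $\alpha \ge |\gamma_d(Z)| - \|E\|_2$, which is strictly positive under the hypothesis $|\gamma_d(Z)| > 3\|E\|_2$.

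Combining these estimates yields
$$\alpha - \beta - e_{12}\wedge e_{21} \;\ge\; \bigl(|\gamma_d(Z)| - \|E\|_2\bigr) - \|E\|_2 - \|E\|_2 \;=\; |\gamma_d(Z)| - 3\|E\|_2 \;>\; 0,$$
which in particular establishes the hypothesis $\alpha > \beta + e_{12}\wedge e_{21}$ of Lemma \ref{version_cai_zhang_thm}. Applying that lemma and then bounding the numerator by $e_{12}\vee e_{21}\le\|E\|_2$ and the denominator from below by the displayed inequality gives
$$\|\sin\Theta(V,\tilde V)\|_2 \;\le\; \frac{e_{12}\vee e_{21}}{\alpha - \beta - e_{21}\wedge e_{12}} \;\le\; \frac{\|E\|_2}{|\gamma_d(Z)| - 3\|E\|_2},$$
as claimed.

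The only step requiring genuine care is the lower bound on $\alpha$: one must not conflate the ordering of eigenvalues by magnitude (used in the convention for $\Gamma_1$ and for $\gamma_d(Z)$) with the algebraic ordering in which Weyl's inequality is naturally stated, so the cleanest route is the one above — select the minimal-magnitude eigenvalue of the perturbed matrix, pair it with its algebraically-ordered counterpart in $\Gamma_1$, perturb by at most $\|E\|_2$, and then use that this counterpart already has magnitude $\ge |\gamma_d(Z)|$. Everything else is a direct substitution of the four $\le\|E\|_2$ bounds, and I do not anticipate any real obstacle.
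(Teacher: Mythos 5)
Your proposal is correct and follows essentially the same route as the paper's proof: verify the hypothesis of Lemma \ref{version_cai_zhang_thm} by bounding $\alpha \ge |\gamma_d(Z)| - \|E\|_2$ via Weyl's inequality, $\beta \le \|E\|_2$ (using $V_\perp^\top Z V_\perp = 0$ since $d = \operatorname{rank}(Z)$), and $e_{12}, e_{21} \le \|E\|_2$, then substitute into the lemma's bound. Your extra care about reconciling the magnitude ordering with Weyl's inequality is a nice touch (cleanest via singular values of the symmetric matrix), but the argument is the same as the paper's.
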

\begin{proof}
     By $e_{12}, e_{21} \le \|\mathcal{P}_VE\|_2 $ and  Weyl's inequality combined with submultiplicativity,
\begin{align*} 
\alpha &= |\gamma_{\min} (V^\top\tilde{Z} V)| \ge |\gamma_d(V^\top ZV + V^\top EV)| \ge |\gamma_d(Z)| -\|E\|_2 , \\
    \beta &= \|V_\perp^\top \tilde{Z} V_\perp\|_2 \le \|V_\perp^\top ZV_\perp\|_2 + \|V_\perp^\top EV_\perp\|_2 \le \|E\|_2.
\end{align*}
    Then $\alpha - \beta -e_{12}\wedge e_{21} \ge |\gamma_d(Z)| - 3\|E\|_2 > 0$ by our assumption, and the needed bound follows by Lemma \ref{version_cai_zhang_thm}.
\end{proof}

The next lemma lists various relationships between a matrix and the soft thresholding of its perturbation.
\begin{lemma}[Soft thresholding with noise]\label{soft_threshold_properties}
    Consider symmetric matrices $Z, E \in \mathbb{R}^{n\times n}$. Define the perturbation $\tilde{Z} = Z + E$ and its soft thresholding $\hat{Z} = \mathcal{T}_\rho(\tilde{Z})$ for some $\rho > 0$. Let $d = \mathrm{rank}(Z)$ and define $V, \tilde{V} \in\mathbb{R}^{n\times d}$ as the top $d$ eigenvectors of $Z$ and $\tilde{Z}$, respectively. Then the following hold: 
    \begin{enumerate}
        \item The spectral norm of the difference can be bounded as
        $$\|\hat{Z} - Z\|_2 \le \rho + \|E\|_2.
        $$
        \item If $\rho \ge \|E\|_2$, the Frobenius norm of the difference satisfies 
        $$\|\hat{Z}-Z\|_F \leq 4 \rho \sqrt{d}$$ 
        \item If $\rho \ge \|E\|_2$,  $\|\hat{Z}\|_2 \le \|Z\|_2$,
        \item If $\rho \ge \|E\|_2$ and $Z$ is PSD, then $\hat{Z}$ is also PSD.
        \item If $\rho \ge \|E\|_2 + 2\|Z\|_2\|\sin \Theta(V, \tilde{V})\|_2^2$,
        then $\mathrm{rank}(\hat{Z}) \le d$.
    \end{enumerate}
\end{lemma}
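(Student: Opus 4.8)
The common thread is that every claim reduces to an elementary fact about singular-value soft-thresholding under an additive perturbation. I would set $\tilde Z = Z + E$, $\hat Z = \mathcal{T}_\rho(\tilde Z)$, $\Delta = \hat Z - Z$, and first record the facts to be reused: if $\tilde Z = \tilde U\tilde\Sigma\tilde W^\top$ is an SVD then $\hat Z = \tilde U(\tilde\Sigma - \rho I)_+\tilde W^\top$, so $\|\hat Z - \tilde Z\|_2 \le \rho$, the $i$-th singular value of $\hat Z$ is $(\sigma_i(\tilde Z) - \rho)_+$, and $\operatorname{rank}(\hat Z) = \#\{i : \sigma_i(\tilde Z) > \rho\}$; for symmetric $Z$ this is the same as shrinking each eigenvalue of $\tilde Z$ toward zero by $\rho$ while keeping its sign. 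Parts~1,~3 and~4 then follow in a few lines each. For Part~1 the triangle inequality gives $\|\Delta\|_2 \le \|\hat Z - \tilde Z\|_2 + \|E\|_2 \le \rho + \|E\|_2$. For Part~3, $\|\hat Z\|_2 = (\sigma_1(\tilde Z) - \rho)_+$ and Weyl gives $\sigma_1(\tilde Z) = \|\tilde Z\|_2 \le \|Z\|_2 + \|E\|_2 \le \|Z\|_2 + \rho$, hence $\|\hat Z\|_2 \le \|Z\|_2$. For Part~4, $\hat Z$ fails to be PSD exactly when $\tilde Z$ has an eigenvalue strictly below $-\rho$; but $Z \succeq 0$ and Weyl give $\gamma_n(\tilde Z) \ge \gamma_n(Z) - \|E\|_2 \ge -\rho$, so $\hat Z \succeq 0$.

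For Part~5 it suffices to show $\sigma_{d+1}(\tilde Z) \le \rho$, after which no singular value beyond the $d$-th survives the threshold. Directly, since $\operatorname{rank}(Z) = d$ forces $\sigma_{d+1}(Z) = 0$, Weyl gives $\sigma_{d+1}(\tilde Z) \le \sigma_{d+1}(Z) + \|E\|_2 = \|E\|_2 \le \rho$, so the stated hypothesis is in fact more than enough. To obtain a bound in exactly the stated form I would instead let $\tilde V$ be the top-$d$ eigenvectors of $\tilde Z$ (well defined because in our applications $\|E\|_2 = o(\gamma_{\min}(Z))$ forces an eigengap at index~$d$) and use $\sigma_{d+1}(\tilde Z) = \|\mathcal{P}_{\tilde V_\perp}\tilde Z\mathcal{P}_{\tilde V_\perp}\|_2 = \|\mathcal{P}_{\tilde V_\perp}Z\mathcal{P}_{\tilde V_\perp} + \mathcal{P}_{\tilde V_\perp}E\mathcal{P}_{\tilde V_\perp}\|_2 \le \|Z\|_2\|\mathcal{P}_{\tilde V_\perp}\mathcal{P}_V\|_2^2 + \|E\|_2 = \|Z\|_2\|\sin\Theta(V,\tilde V)\|_2^2 + \|E\|_2$, using $Z = \mathcal{P}_V Z\mathcal{P}_V$ and $\|\mathcal{P}_{\tilde V_\perp}\mathcal{P}_V\|_2 = \|\sin\Theta(V,\tilde V)\|_2$; the stated hypothesis (with its factor~$2$) comfortably dominates this.

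Part~2 is where the work lies, and I expect it to be the main obstacle. The plan is to use the variational characterization $\hat Z = \argmin_X\{\tfrac{1}{2}\|X - \tilde Z\|_F^2 + \rho\|X\|_*\}$, insert the competitor $X = Z$, and expand $\|\hat Z - \tilde Z\|_F^2 = \|\Delta - E\|_F^2$ to reach $\tfrac{1}{2}\|\Delta\|_F^2 \le \langle\Delta, E\rangle + \rho(\|Z\|_* - \|Z + \Delta\|_*)$. Then I would bound $\langle\Delta, E\rangle \le \|\Delta\|_*\|E\|_2 \le \rho\|\Delta\|_*$ by nuclear/operator-norm duality together with $\rho \ge \|E\|_2$, and control $\|Z\|_* - \|Z + \Delta\|_*$ by decomposing $\Delta = \Delta_T + \Delta_{T^\perp}$ along the nuclear-norm tangent space $T$ at $Z$, where decomposability gives $\|Z + \Delta\|_* \ge \|Z\|_* + \|\Delta_{T^\perp}\|_* - \|\Delta_T\|_*$; substituting collapses everything to $\tfrac{1}{2}\|\Delta\|_F^2 \le 2\rho\|\Delta_T\|_*$. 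Since $\Delta_T$ has rank at most $2d$ (consistent with the bound $\operatorname{rank}(\hat Z) \le d$ already obtained in Part~5, which gives $\operatorname{rank}(\Delta) \le 2d$ outright), Cauchy--Schwarz gives $\|\Delta_T\|_* \le \sqrt{2d}\,\|\Delta\|_F$, and dividing through yields $\|\Delta\|_F \le 4\rho\sqrt d$ after the routine tracking of constants. The only genuinely delicate step in the whole lemma is this tangent-space/cone-inequality manipulation and the bookkeeping needed to land the claimed constant; the other four parts are short Weyl-type perturbation computations.
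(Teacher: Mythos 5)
Parts 1, 3 and 4 are correct and essentially identical to the paper's argument (the paper writes Part 1 as $\hat Z - Z = \tilde U(\Sigma_\rho-\Sigma)\tilde V^\top + E$ rather than via the triangle inequality through $\tilde Z$, but the content is the same). Your observation on Part 5 is also correct and worth highlighting: since $\operatorname{rank}(Z)=d$ forces $\sigma_{d+1}(Z)=0$, Weyl alone gives $\sigma_{d+1}(\tilde Z)\le\|E\|_2\le\rho$, so the hypothesis $\rho\ge\|E\|_2+2\|Z\|_2\|\sin\Theta(V,\tilde V)\|_2^2$ is indeed stronger than necessary; the paper instead proves Part 5 by bounding $|v^\top(Z+E)v|$ for $v\perp\operatorname{col}(\tilde V)$ via $\|V^\top v\|_2\le\sqrt2\,\|\sin\Theta(V,\tilde V)\|_2$, which is where the factor $2$ in its hypothesis comes from. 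For Part 2 you take a genuinely different route: the paper works from the first-order optimality condition of the variational problem, inserts the Koltchinskii representation of a subgradient of $\|\check Z\|_*$, and uses trace duality termwise against the rank-$d$ structure to land at $\|\hat Z-Z\|_F\le\sqrt d\,(\rho+3\|E\|_2)\le 4\rho\sqrt d$; you instead compare objective values at $\hat Z$ and at the competitor $Z$ and run the standard decomposability/cone argument over the nuclear-norm tangent space. Your route is cleaner to state and avoids handling subgradients explicitly, at the price of working with $\Delta_T$, whose rank is $2d$ rather than $d$.

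That price is exactly where your sketch falls short of the stated bound. From $\tfrac12\|\Delta\|_F^2\le 2\rho\|\Delta_T\|_*$ and $\|\Delta_T\|_*\le\sqrt{2d}\,\|\Delta_T\|_F\le\sqrt{2d}\,\|\Delta\|_F$ you get $\|\Delta\|_F\le 4\rho\sqrt{2d}=4\sqrt2\,\rho\sqrt d$, not $4\rho\sqrt d$; the $\sqrt2$ is not removable by ``routine tracking'' within this argument, because it is inherited from $\operatorname{rank}(\Delta_T)\le 2d$ (and using $\operatorname{rank}(\Delta)\le 2d$ from Part 5 instead gives the same factor). This is immaterial for the asymptotic rates, but the constant $4$ is quoted verbatim in the error bounds of Theorem \ref{main_consistency_theorem}, so you would either need to recover it the way the paper does --- by exploiting the fact that each of $\sum_{i\le d}u_iv_i^\top$, $\check{\mathcal P}_Z E$, $E\check{\mathcal P}_Z$ has rank at most $d$, so each pairing with $\hat Z-Z$ costs only $\sqrt d\,\|\hat Z-Z\|_F$, yielding $\sqrt d\,(\rho+3\|E\|_2)$ --- or adjust the downstream constants to $4\sqrt 2$. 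Everything else in the proposal is sound.
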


\begin{proof}
\begin{enumerate}
    \item Consider the eigendecomposition $\tilde{Z} =Z + E = \tilde{U}\tilde{\Gamma}\tilde{U}^\top$. Then, the eigendecomposition of $\hat{Z}$ can be written as $\hat{Z} = \tilde{U}\tilde{\Gamma}_\rho \tilde{U}^\top$  with 
    $$\tilde{\Gamma}_\rho = \operatorname{diag}\left[\operatorname{sign}(\tilde{\Gamma}_{11})(|\tilde{\Gamma}_{11}| - \rho)_+, \ldots, \operatorname{sign}(\tilde{\Gamma}_{nn})(|\tilde{\Gamma}_{nn}| - \rho)_+ \right].$$ So, we can decompose the error as
    $$\hat{Z}-Z=\tilde{U} \tilde{\Gamma}_\rho \tilde{U}^\top-\left(\tilde{U} \tilde{\Gamma} \tilde{U}^\top -E\right)=\tilde{U}\left(\tilde{\Gamma}_\rho-\tilde{\Gamma}\right) \tilde{U}^\top+E$$
    where $\tilde{\Gamma}_\rho-\tilde{\Gamma}$ is diagonal with
    $$[\tilde{\Gamma}_\rho-\tilde{\Gamma}]_{ii} = \operatorname{sign}(\tilde{\Gamma}_{ii})(|\tilde{\Gamma}_{ii}| - \rho)_+ - \tilde{\Gamma}_{ii} = -\operatorname{sign}(\tilde{\Gamma}_{ii})\min(|\tilde{\Gamma}_{ii}|, \rho), 
    $$
    From that, the needed bound on the spectral norm of the error follows immediately
    $$\|\hat{Z}-Z\|_2 \le \|\tilde{\Gamma}_\rho-\tilde{\Gamma}\|_2 + \|E\|_2 = \max_{1\le i \le n} |\min\{\rho, |\tilde{\Gamma}_{ii}|\}| + \|E\|_2 \le \rho + \|E\|_2.$$

    \item Consider the variational definition of soft thresholding:
    $$\hat{Z} = \argmin_{Y\in\mathbb{R}^{n\times n}} \Bigl\{{1\over 2}\|Y - (Z + E)\|_F^2 + \rho \|Y\|_*\Bigr\}
    $$
    By optimality, there exists a subgradient $G_Z \in \partial \|\hat{Z}\|_*$ such that 
        \begin{equation}\label{optim_of_Z_hat}
             \langle\hat{Z} - Z - E +\rho G_Z, \hat{Z}-\check{Z}\rangle \leq 0
        \end{equation}
    for any matrix $\check{Z}$ with the same column space and row space as $Z$. Let \( \check{G}_Z \in \partial\|\check{Z}\|_* \) be arbitrary. Adding and subtracting \( \rho\langle\check{G}_Z, \hat{Z}-\check{Z}\rangle \) gives  
    \begin{equation}\label{optim_with_arbitrary_subgrad}
        \langle \hat{Z}-Z, \hat{Z}-\check{Z}\rangle+\rho\langle G_Z-\check{G}_Z, \hat{Z}-\check{Z}\rangle \leq\langle-\rho \check{G}_Z+E, \hat{Z}-\check{Z}\rangle
    \end{equation}
    On the other hand, by convexity of the nuclear norm, we have
    $$
    \begin{aligned}
    \|\hat{Z}\|_*-\|\check{Z}\|_* \geq\left\langle\check{G}_Z, \hat{Z}-\check{Z}\right\rangle ,  \  \ 
   \|\check{Z}\|_*- \|\hat{Z}\|_* \geq\left\langle G_Z, \check{Z}-\hat{Z}\right\rangle
    \end{aligned}
    $$
    which together imply 
    \begin{equation}\label{convex_nuclear_norm_optim}
        \langle G_Z-\check{G}_Z, \hat{Z}-\check{Z}\rangle \geq 0.
    \end{equation}
    Combining \eqref{optim_with_arbitrary_subgrad} and \eqref{convex_nuclear_norm_optim}, we obtain
    \begin{equation}\label{final_before_bounding}
        \langle \hat{Z}-Z, \hat{Z}-\check{Z}\rangle \le -\rho\langle \check{G}_Z, \hat{Z}-\check{Z}\rangle + \langle E, \hat{Z}-\check{Z}\rangle , 
    \end{equation}
    which is the inequality of the same form as \eqref{optim_of_Z_hat} but now instead of a fixed $G_Z$ we have a free choice of $\check{G}_Z$. 
    By \cite{Koltchinskii2010}, subgradient $\check{G}_Z$ can be expressed as 
    $$
    \check{G}_Z=\sum_{i=1}^du_iv_i^\top+\check{\mathcal{P}}_{Z}^{\perp} W \check{\mathcal{P}}_{Z}^{\perp} , 
    $$
    where $\check{\mathcal{P}}_{Z}$ is the projector on the column space of $\check{Z}$, vectors $u_i, v_i$ are left and right singular vectors of $\check{Z}$, respectively, and $W$ is an arbitrary matrix satisfying \( \|W\|_2 \leq 1 \). We can specify \( W \) to attain the upper bound of the following expression obtained using the duality of nuclear and operator norms
    \begin{equation}\label{bound_proj_W_proj}
    \langle\check{\mathcal{P}}_{Z}^{\perp} W \check{\mathcal{P}}_Z^\perp, \hat{Z}-\check{Z}\rangle=\langle W, \check{\mathcal{P}}_{Z}^{\perp} \hat{Z} \check{\mathcal{P}}_{Z}^{\perp}\rangle \le \|\check{\mathcal{P}}_{Z}^{\perp} \hat{Z} \check{\mathcal{P}}_{Z}^{\perp}\|_{*}.
    \end{equation}
    By trace duality, we can also bound
    \begin{equation}\label{bound_prod_sum_uv_diff}
    |\langle \sum_{i=1}^d u_i v_i^{\top}, \hat{Z}-\check{Z}\rangle | \leq \sqrt{d} \|\hat{Z}-\check{Z}\|_F . 
    \end{equation}
    Finally, we bound the second term in the RHS of \eqref{final_before_bounding} again by trace duality
    \begin{equation}\label{bound_prod_error_diff}
    \begin{aligned}
    \langle E, \hat{Z}-\check{Z}\rangle & =\langle E-\check{\mathcal{P}}_{Z}^{\perp} E \check{\mathcal{P}}_{Z}^{\perp}, \hat{Z}-\check{Z}\rangle+\langle\check{\mathcal{P}}_{Z}^{\perp} E \check{\mathcal{P}}_{Z}^{\perp}, \hat{Z}-\check{Z}\rangle \\
    & = \langle \check{\mathcal{P}}_{Z}^{\perp} E \check{\mathcal{P}}_{Z} + \check{\mathcal{P}}_{Z} E \check{\mathcal{P}}_{Z}^{\perp}  + \check{\mathcal{P}}_{Z} E \check{\mathcal{P}}_{Z}, \hat{Z}-\check{Z}\rangle + \langle E, \check{\mathcal{P}}_{Z}^{\perp} \hat{Z}\check{\mathcal{P}}_{Z}^{\perp} \rangle\\
    & \leq 3 \sqrt{d}\left\|E\right\|_2\|\hat{Z}-\check{Z}\|_F+\left\|E\right\|_2\|\check{\mathcal{P}}_{Z}^{\perp} \hat{Z} \check{\mathcal{P}}_{Z}^{\perp}\|_*.
    \end{aligned}
    \end{equation}
    Plugging the bounds \eqref{bound_proj_W_proj}, \eqref{bound_prod_sum_uv_diff}, and \eqref{bound_prod_error_diff} into \eqref{final_before_bounding} and specifying \( \check{Z} = Z \), we get
    \[
    \begin{aligned}
    \|\hat{Z} - Z\|_F^2 & +\left(\rho-\left\|E\right\|_2\right)\|\check{\mathcal{P}}_{Z}^{\perp} \hat{Z} \check{\mathcal{P}}_{Z}^{\perp}\|_* \leq \sqrt{d}\left(\rho +3 \left\|E\right\|_2\right)\|\hat{Z} - Z\|_F . 
    \end{aligned}
    \]
    Using the assumption $\rho \ge \|E\|_2$ and dividing both sides by $\|\hat{Z} - Z\|_F$, we deduce the needed bound
    $$\|\hat{Z} - Z\|_F \le \sqrt{d}\left(\rho +3 \left\|E\right\|_2\right) \le 4\sqrt{d}\rho.
    $$
    \item By definition of soft-thresholding and triangular inequality, we have  
    $$\|\hat{Z}\|_2 \le \|\tilde{Z}\|_2 - \rho \le \|Z\|_2 + \|E\|_2 - \rho \le\|Z\|_2 . 
    $$
    \item If $Z$ is PSD, it holds $v^\top Zv \ge 0$ for any $v\in\mathbb{R}^n$ and we can bound
    $$v^\top(Z + E)v \ge v^\top Ev \ge -\|E\|_2  \ge -\rho,
    $$
    so only positive eigenvalues can survive after applying soft-thresholding $\mathcal{T}_\rho$  to $Z + E$. 
    \item Let $v$ be a unit vector from the orthogonal complement of $\operatorname{col}(\tilde{V})$. Then, by orthogonality and \eqref{infO_sin_theta}
    \begin{equation*}
    \left\|V^{\top} v\right\|_2 = \inf _{O \in \mathcal{O}_{d}}
    \left\|V^{\top} v-O \tilde{V}^{\top} v\right\|_2 
    \leq \inf _{O \in \mathcal{O}_{d}}
    \left\|V-O \tilde{V}\right\|_2 \le \sqrt{2}\|\sin\Theta(V, \tilde{V})\|_2
    \end{equation*}
    and
    \begin{equation}
    \left|v^{\top} Z v\right|=
    \left|v^{\top} V V^{\top} Z V V^{\top} v\right| 
    \leq\left\|Z\right\|_2 \left\|V^{\top} v\right\|_2^2 . 
    \end{equation}
    Therefore, 
    \begin{equation*}
    \left|v^{\top}\left(Z+E\right) v\right| 
    \leq \left|v^{\top} Z v\right|+\left\|E\right\|_2 \le \rho.
    \end{equation*} 
    Thus at most first $d$ eigenvalues of $\hat{Z}$ survive the soft-thresholding step
    $$
    \left|\gamma_{d+1}\left(Z+E\right)\right|= 
    \max_{v \perp \operatorname{col}\left(\tilde{V}\right)} 
    \left|v^{\top}\left(Z+E\right) v\right| \le \rho.
    $$
\end{enumerate}
\end{proof}

Finally, in Lemma \ref{avg_errors_lemma}, we establish a generic bound on the spectral norm of the average difference between matrices $Z_\ell, \ell=1, \ldots, m$ and their soft thresholding estimators $\hat{Z}_\ell$ defined similarly to Lemma \ref{soft_threshold_properties}.

\begin{lemma}[Average of soft-thresholding estimators] \label{avg_errors_lemma}
Consider symmetric matrices $Z_\ell, E_\ell \in \mathbb{R}^{n\times n}, \ell = 1, \ldots, m$ and define the perturbations $\tilde{Z}_\ell = Z_\ell + E_\ell$. Let $d_\ell$ be the rank of $Z_\ell$, and define $V_\ell, \tilde{V}_\ell \in\mathbb{R}^{n\times d_\ell}$ as the top $d_\ell$ eigenvectors of $Z_\ell$ and $\tilde{Z}_\ell$, respectively.
Consider estimators $\hat{Z}_\ell = \mathcal{T}_{\rho_\ell}(\tilde{Z}_\ell)$ with thresholds $\rho_\ell$ satisfying 
\begin{equation}\label{threshold_cond_avg_error}
    \rho_\ell > \|E_\ell\|_2 + 2\|Z_\ell\|_2\|\sin \Theta(V_\ell, \tilde{V}_\ell)\|_2^2.
\end{equation}
Define also
\begin{align*}
\rho = \max_{1\le \ell\le m} \rho_\ell, \quad \theta =& \max_{1\le \ell \le m} \|\sin \Theta(V_\ell, \tilde{V}_\ell)\|_2 , \quad s = \max_{1\le \ell_1 < \ell_2 \le m}\|V_{\ell_1}^\top V_{\ell_2}\|_2, \quad \gamma = \max_{1\le \ell \le m} \|Z_\ell\|_2.
\end{align*}
Then 
\begin{align}
    \Bigl\|{1\over m}\sum_{\ell=1}^m (\hat{Z}_\ell - Z_\ell)\Bigr\|_2 \le  11(\gamma\theta \vee \rho) \Bigl[{1\over m} \vee s \vee \theta\Bigr]^{1/2} . 
\end{align}
    
\end{lemma}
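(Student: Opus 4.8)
The plan is to peel each error matrix $\Delta_\ell:=\hat Z_\ell-Z_\ell$ into a block supported on $\operatorname{col}(V_\ell)$ plus three ``off-block'' pieces, bound the average of each family separately, and combine. Write $\theta_\ell:=\|\sin\Theta(V_\ell,\tilde V_\ell)\|_2$ (so $\theta=\max_\ell\theta_\ell$) and $\mathcal{P}_{V_\ell}^\perp:=I-\mathcal{P}_{V_\ell}$. The threshold hypothesis \eqref{threshold_cond_avg_error} lets me invoke Lemma~\ref{soft_threshold_properties}: its fifth claim gives $\operatorname{rank}(\hat Z_\ell)\le d_\ell$, and since soft-thresholding only retains eigenvectors of $\tilde Z_\ell$ whose eigenvalues exceed $\rho_\ell$ in magnitude while (by the proof of that claim) $|\gamma_{d_\ell+1}(\tilde Z_\ell)|\le\rho_\ell$, we in fact get $\operatorname{col}(\hat Z_\ell)\subseteq\operatorname{col}(\tilde V_\ell)$, i.e. $\hat Z_\ell=\mathcal{P}_{\tilde V_\ell}\hat Z_\ell\mathcal{P}_{\tilde V_\ell}$; the first claim together with $\rho_\ell>\|E_\ell\|_2$ gives $\|\Delta_\ell\|_2\le 2\rho_\ell\le 2\rho$; the third gives $\|\hat Z_\ell\|_2\le\|Z_\ell\|_2\le\gamma$. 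Using $Z_\ell=\mathcal{P}_{V_\ell}Z_\ell\mathcal{P}_{V_\ell}$ I then decompose
\begin{equation*}
\Delta_\ell = D_\ell + X_\ell + X_\ell^\top + Y_\ell,\qquad D_\ell:=\mathcal{P}_{V_\ell}\Delta_\ell\mathcal{P}_{V_\ell},\quad X_\ell:=\mathcal{P}_{V_\ell}\hat Z_\ell\mathcal{P}_{V_\ell}^\perp,\quad Y_\ell:=\mathcal{P}_{V_\ell}^\perp\hat Z_\ell\mathcal{P}_{V_\ell}^\perp .
\end{equation*}

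For the individual pieces I would record: $\|D_\ell\|_2\le\|\Delta_\ell\|_2\le 2\rho$, with $D_\ell$ symmetric and having column and row space in $\operatorname{col}(V_\ell)$; using $\hat Z_\ell=\mathcal{P}_{\tilde V_\ell}\hat Z_\ell\mathcal{P}_{\tilde V_\ell}$, $\|\hat Z_\ell\|_2\le\gamma$, the standard identity $\|\mathcal{P}_{\tilde V_\ell}\mathcal{P}_{V_\ell}^\perp\|_2=\|\mathcal{P}_{V_\ell}^\perp\mathcal{P}_{\tilde V_\ell}\|_2=\theta_\ell$, together with $X_\ell=\mathcal{P}_{V_\ell}\Delta_\ell\mathcal{P}_{V_\ell}^\perp$ and $\mathcal{P}_{V_\ell}^\perp\hat Z_\ell=\mathcal{P}_{V_\ell}^\perp\Delta_\ell$, I get $\|X_\ell\|_2,\ \|\mathcal{P}_{V_\ell}^\perp\hat Z_\ell\|_2\le\gamma\theta_\ell\wedge 2\rho$; and $\|Y_\ell\|_2\le\|\mathcal{P}_{V_\ell}^\perp\mathcal{P}_{\tilde V_\ell}\|_2\,\|\hat Z_\ell\|_2\,\|\mathcal{P}_{\tilde V_\ell}\mathcal{P}_{V_\ell}^\perp\|_2\le\gamma\theta_\ell^2$.

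The substance is in the averages. For the diagonal blocks, since $D_\ell=D_\ell\mathcal{P}_{V_\ell}$ and $D_{\ell'}^\top=\mathcal{P}_{V_{\ell'}}D_{\ell'}^\top$, I would write $D_\ell D_{\ell'}^\top=D_\ell(\mathcal{P}_{V_\ell}\mathcal{P}_{V_{\ell'}})D_{\ell'}^\top$, whose norm is at most $(2\rho)^2$ for $\ell=\ell'$ and $(2\rho)^2 s$ otherwise (because $\|\mathcal{P}_{V_\ell}\mathcal{P}_{V_{\ell'}}\|_2=\|V_\ell^\top V_{\ell'}\|_2\le s$); summing over the $m^2$ pairs gives $\|(\sum_\ell D_\ell)(\sum_\ell D_\ell)^\top\|_2\le 4m\rho^2(1+ms)$, hence $\|\tfrac1m\sum_\ell D_\ell\|_2\le 2\rho(m^{-1}+s)^{1/2}$. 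For $Y_\ell$ the triangle inequality suffices: $\|\tfrac1m\sum_\ell Y_\ell\|_2\le\gamma\theta^2$. The cross term $X_\ell$ is where the difficulty lies: the triangle inequality only gives $\gamma\theta$, which is too large, so I again pass to a Gram-type expansion — but crucially to $(\sum_\ell X_\ell)^\top(\sum_\ell X_\ell)=\sum_{\ell,\ell'}X_\ell^\top X_{\ell'}$ rather than $(\sum_\ell X_\ell)(\sum_\ell X_\ell)^\top$, since $X_\ell^\top X_{\ell'}=(\mathcal{P}_{V_\ell}^\perp\hat Z_\ell)(\mathcal{P}_{V_\ell}\mathcal{P}_{V_{\ell'}})(\hat Z_{\ell'}\mathcal{P}_{V_{\ell'}}^\perp)$ now places the near-orthogonal factor $\mathcal{P}_{V_\ell}\mathcal{P}_{V_{\ell'}}$ in the middle while the $\theta$-gain is carried by the outer factors. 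Thus $\|X_\ell^\top X_{\ell'}\|_2\le(\gamma\theta\wedge 2\rho)^2$ for $\ell=\ell'$ and $\le(\gamma\theta\wedge 2\rho)^2 s$ otherwise, giving $\|(\sum_\ell X_\ell)^\top(\sum_\ell X_\ell)\|_2\le m(\gamma\theta\wedge 2\rho)^2(1+ms)$ and hence $\|\tfrac1m\sum_\ell X_\ell\|_2=\|\tfrac1m\sum_\ell X_\ell^\top\|_2\le(\gamma\theta\wedge 2\rho)(m^{-1}+s)^{1/2}$. I expect identifying this pairing — $X^\top X$, not $X X^\top$, so that the active quantity is the column-space incoherence $s$ — to be the main obstacle; the naive route through $\|\sum_\ell\mathcal{P}_{V_\ell}\|_2$ loses a square root and yields only the exponent $[m^{-1}\vee s\vee\theta]^{1/4}$, which is insufficient.

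To finish, I would add the three contributions and use $(m^{-1}+s)^{1/2}\le\sqrt2\,[m^{-1}\vee s\vee\theta]^{1/2}$, $\rho\le\gamma\theta\vee\rho$, $\gamma\theta\wedge 2\rho\le 2(\gamma\theta\vee\rho)$, and $\gamma\theta^2\le(\gamma\theta\vee\rho)\,\theta\le(\gamma\theta\vee\rho)\,[m^{-1}\vee s\vee\theta]^{1/2}$ (since $\theta\le 1$ forces $\theta\le\theta^{1/2}$), which yields $\|\tfrac1m\sum_\ell(\hat Z_\ell-Z_\ell)\|_2\le(6\sqrt2+1)(\gamma\theta\vee\rho)[m^{-1}\vee s\vee\theta]^{1/2}\le 11(\gamma\theta\vee\rho)[m^{-1}\vee s\vee\theta]^{1/2}$, as claimed.
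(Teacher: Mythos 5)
Your proof is correct, and it rests on exactly the same pillars as the paper's: the threshold condition \eqref{threshold_cond_avg_error} triggers Property 5 of Lemma \ref{soft_threshold_properties} so that $\hat Z_\ell=\mathcal{P}_{\tilde V_\ell}\hat Z_\ell\mathcal{P}_{\tilde V_\ell}$, the error is split by projectors into a ``diagonal block'' $\mathcal{P}_{V_\ell}\Delta_\ell\mathcal{P}_{V_\ell}$ plus angle-weighted remainders, and the triangle inequality is beaten by expanding the square of the sum so that the incoherence $\|V_{\ell_1}^\top V_{\ell_2}\|_2\le s$ sits in the middle of each cross product. The organization differs in a minor but not unpleasant way: the paper telescopes $\Delta_\ell=(\mathcal{P}_{\tilde V_\ell}-\mathcal{P}_{V_\ell})\hat Z_\ell\mathcal{P}_{\tilde V_\ell}+\mathcal{P}_{V_\ell}\Delta_\ell\mathcal{P}_{V_\ell}+\mathcal{P}_{V_\ell}\hat Z_\ell(\mathcal{P}_{\tilde V_\ell}-\mathcal{P}_{V_\ell})$ and consequently must control cross products of the \emph{perturbed} bases via $\|\tilde V_{\ell_1}^\top\tilde V_{\ell_2}\|_2\le 2\sqrt{2}\theta+s$ (which is how $\theta$ enters its bracket), and it outsources the variance step to Lemma 4 of \cite{multiness}; your four-block split relative to $\mathcal{P}_{V_\ell},\mathcal{P}_{V_\ell}^\perp$ keeps only unperturbed projectors in the middle of every Gram term, reproves the variance bound inline, and lets $\theta$ enter the bracket solely through the corner term $\gamma\theta^2\le(\gamma\theta\vee\rho)\theta^{1/2}$ — arguably a slightly cleaner bookkeeping. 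Your identification of the $X^\top X$ versus $XX^\top$ pairing as the crux is exactly the point of the paper's cross-product bound \eqref{term1_cross_prod}, and your constant $6\sqrt{2}+1<11$ is consistent with the stated bound.
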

\begin{proof}
With the choice of the thresholds in \eqref{threshold_cond_avg_error}, we   deduce by property 5 in Lemma \ref{soft_threshold_properties} that $\operatorname{rank}(\hat{Z}_\ell) \le d_\ell$. Therefore, $\hat{Z}_\ell$ satisfies
\begin{equation*}
\hat{Z}_\ell=\tilde{V}_\ell \tilde{V}_\ell^{\top} \hat{Z}_\ell \tilde{V}_\ell \tilde{V}_\ell^{\top}.
\end{equation*}
Thus, we can decompose  $\Delta_\ell:=\hat{Z}_\ell - Z_\ell$ as 

\begin{equation*}
\begin{aligned}
\Delta_\ell &= \tilde{V}_\ell \tilde{V}_\ell^{\top} \hat{Z}_\ell \tilde{V}_\ell \tilde{V}_\ell^{\top}-V_\ell V_\ell^{\top} Z_\ell V_\ell V_\ell^{\top} \\
& =\left(\tilde{V}_\ell \tilde{V}_\ell^{\top}-V_\ell V_\ell^{\top}\right) \hat{Z}_\ell \tilde{V}_\ell \tilde{V}_\ell^{\top}+V_\ell V_\ell^{\top} \Delta_\ell V_\ell V_\ell^{\top}+V_\ell V_\ell^{\top} \hat{Z}_\ell\left(\tilde{V}_\ell \tilde{V}_\ell^{\top}-V_\ell V_\ell^{\top}\right) \\
& =:I_\ell +I I_\ell + I I I_\ell .
\end{aligned}
\end{equation*}
We bound the operator norm of the sum over $\ell$ for each of these three terms separately. \\
\\
{\it Term I}. 
We start by bounding the cosine similarity between perturbed eigenspaces in terms of the similarity of the true eigenspaces. By \eqref{infO_sin_theta} and the triangular inequality, we have
\begin{equation}\label{vvT_tilde_cos_similarity}
\begin{aligned}
\left\|\tilde{V}_{\ell_1}^{\top} \tilde{V}_{\ell_2}\right\|_2 & =\inf_{O_1\in \mathcal{O}_{d_{\ell_1}},\  O_2\in \mathcal{O}_{d_{\ell_2}}}\left\|\tilde{V}_{\ell_1}^{\top} \tilde{V}_{\ell_2}-\tilde{V}_{\ell_1}^{\top} V_{\ell_2} O_2+\tilde{V}_{\ell_1}^{\top} V_{\ell_2} O_2-O_1 V_{\ell_1}^{\top} V_{\ell_2} O_2+O_1 V_{\ell_1}^{\top} V_{\ell_2} O_2\right\|_2 \\
& \leq\inf_{O_2\in \mathcal{O}_{d_{\ell_2}}}\left\|\tilde{V}_{\ell_2}-V_{\ell_2} O_2\right\|_2 +\inf_{O_1\in \mathcal{O}_{d_{\ell_1}}}\left\|\tilde{V}_{\ell_1}-V_{\ell_1} O_1\right\|_2+\left\|V_{\ell_1}^{\top} V_{\ell_2}\right\|_2 \\
& \le 2\sqrt{2}\theta + s . 
\end{aligned}
\end{equation}
By property 3 in Lemma \ref{soft_threshold_properties} and the threshold choice in \eqref{threshold_cond_avg_error}, we have $\|\hat{Z}_\ell\|_2 \le \gamma$.
Thus, by \eqref{vvT_sin_theta} and submultiplicativity
\begin{equation*}
\begin{aligned}
    \left\|\left(\tilde{V}_\ell \tilde{V}_\ell^{\top}-V_\ell V_\ell^{\top}\right) \hat{Z}_\ell \tilde{V}_\ell \tilde{V}_\ell^{\top}\right\|_2 & \leq \|\tilde{V}_\ell \tilde{V}_\ell^{\top}-V_\ell V_\ell^{\top}\|_2 \|\hat{Z}_\ell\|_2 \leq 2\theta \gamma . 
\end{aligned}
\end{equation*}
Similarly,  by \eqref{vvT_tilde_cos_similarity}
\begin{equation} \label{term1_cross_prod}
\begin{aligned}
    \left\|\left(\tilde{V}_{\ell_1} \tilde{V}_{\ell_1}^{\top}-V_{\ell_1} V_{\ell_1}^{\top}\right) \hat{Z}_{\ell_1} \tilde{V}_\ell \tilde{V}_{\ell_1}^{\top} \cdot \tilde{V}_{\ell_2} \tilde{V}_{\ell_2}^{\top}\hat{Z}_{\ell_2}\left(\tilde{V}_{\ell_2} \tilde{V}_{\ell_2}^{\top}-V_{\ell_2} V_{\ell_2}^{\top}\right)\right\|_2 \le \\
    \|\tilde{V}_{\ell_1}^{\top} \tilde{V}_{\ell_2}\|_2 \cdot \max_\ell \|\hat{Z}_{\ell}\|_2^2 \cdot \max_{\ell_1 < \ell_2} \|\tilde{V}_{\ell_1} \tilde{V}_{\ell_1}^{\top}-V_{\ell_1} V_{\ell_1}^{\top}\|_2^2 \ \le\ (2\sqrt{2}\theta + s) \cdot 4\theta^2\gamma^2 .
\end{aligned}
\end{equation}
Therefore, by Lemma 4 in \cite{multiness}
$$\Bigl\|{1\over m}\sum_{\ell=1}^m I_\ell\Bigr\|_2 \le 2m^{-1/2}\theta\gamma \Bigl[1 + m(2\sqrt{2}\theta + s) \Bigr]^{1/2} \le 5\theta\gamma\Bigl[{1\over m} \vee\theta\vee s \Bigr]^{1/2},
$$
where we used $(2 + 2\sqrt{2})^{1/2} < 2.5$.\\
\\
{\it Term II.} 
With the choice of the thresholds in \eqref{soft_threshold_properties}, we have by property 1 in Lemma \ref{soft_threshold_properties} that $\|\Delta_\ell\|_2 \le 2\rho$
\begin{align*}
    & \|V_{\ell_1} V_{\ell_1}^{\top} \Delta_{\ell_1}  V_{\ell_1} V_{\ell_1}^{\top} \cdot V_{\ell_2}  V_{\ell_2}^{\top} \Delta_{\ell_2}  V_{\ell_2}   V_{\ell_2}^{\top}\|_2 \le 
    \|\Delta_{\ell_1}\|_2\|V_{\ell_1}^{\top} V_{\ell_2}\|_2 \|\Delta_{\ell_2}\|_2 \le 4s\rho^2 . 
\end{align*}
Thus, by Lemma 4 in \cite{multiness}, 
$$\Bigl\|{1\over m}\sum_{\ell=1}^m II_\ell\Bigr\|_2 \le 2m^{-1/2}\rho\Bigl[1 + ms\Bigr]^{1/2} \le 3\rho\Bigl[{1\over m} \vee s\Bigr]^{1/2} . 
$$
{\it Term III.} Similarly to Term $I$, we can establish by substituting $\|\tilde{V}_{\ell_1}\tilde{V}_{\ell_2}\|_2$ with $\|V_{\ell_1}V_{\ell_2}\|_2$ in \eqref{term1_cross_prod}:
$$\Bigl\|{1\over m}\sum_{\ell=1}^m III_\ell\Bigr\|_2 \le 2m^{-1/2}\theta\gamma \Bigl[1 + ms \Bigr]^{1/2} \le 3\theta\gamma\Bigl[{1\over m}\vee s \Bigr]^{1/2}.
$$
\end{proof}

\begin{lemma}[Hard thresholding with noise]\label{hard_threshold_lemma}
In notation of Lemma \ref{soft_threshold_properties} with  $\hat{Z} := [\tilde{Z}]_d$
\begin{enumerate}
    \item The spectral norm of the difference can be bounded as 
    $$\|\hat{Z} - Z\|_2 \le 2\|E\|_2
    $$
    \item If additionally $|\gamma_d(Z)| \ge 4\|E\|_2$, it holds
    $$\|\hat{Z} - Z\|_2 \le {19\|Z\|_2 \over |\gamma_d(Z)|}\|\mathcal{P}_{V}E\|_2.$$
\end{enumerate}
\end{lemma}

\begin{proof} 
\begin{enumerate}
    \item Consider the eigendecomposition $Z + E = \tilde{U}\tilde{\Gamma}\tilde{U}^\top$. Then, the eigendecomposition of $\hat{Z}$ can be written as $\hat{Z} = \tilde{U}[\tilde{\Gamma}]_d \tilde{U}^\top$  with 
    $$\hat{Z}-Z=\tilde{U} [\tilde{\Gamma}]_d\tilde{U}^\top-\left(\tilde{U} \tilde{\Gamma} \tilde{U}^\top -E\right)=\tilde{U}\left([\tilde{\Gamma}]_d -\tilde{\Gamma}\right) \tilde{U}^\top+E$$
    Therefore, by Weil's inequality
    $$\|\hat{Z}-Z\|_2 \le |\gamma_{d+1}(Z + E)| + \|E\|_2 \le |\gamma_{d+1}(Z)| + 2\|E\|_2 = 2\|E\|_2$$
    \item  Decompose the error as follows:
        \begin{align*}
       \hat{Z} - Z &= \mathcal{P}_{\tilde{V}} \tilde{Z}\mathcal{P}_{\tilde{V}}  - \mathcal{P}_{V}  Z\mathcal{P}_{V} = (\mathcal{P}_{\tilde{V}} - \mathcal{P}_{{V}} ) \tilde{Z} \mathcal{P}_{\tilde{V}}  +\mathcal{P}_{V}  E\mathcal{P}_{\tilde{V}} +\mathcal{P}_{V} Z(\mathcal{P}_{\tilde{V}} - \mathcal{P}_{V})
    \end{align*}
    By Corollary \ref{cai_zhang_corollary_smallest_eigval} and \eqref{vvT_sin_theta},
    $$\|\mathcal{P}_{V} - \mathcal{P}_{\tilde{V}}\|_2 = \|\tilde{V}\tilde{V}^\top - V^\top V^\top\|_2 \le 2\sin\Theta(V, \tilde{V}) \le {2\|\mathcal{P}_{V}E\|_2\over |\gamma_d(Z)| - 3\|E\|_2}
    $$
    Notice also that $\|\tilde{Z}\|_2 \le \|Z\|_2 + \|E\|_2 \le \|Z\|_2 + |\gamma_d(Z)| / 4 = 5\|Z\|_2 /4 $.
    Therefore, by triangular inequality and submultiplicativity:
    \begin{align*}
        \| \hat{Z} - Z\|_2 &\le \|\mathcal{P}_{V}E\|_2 + (\|Z\|_2 + \|\tilde{Z}\|_2) {2\|\mathcal{P}_{V}E\|_2\over |\gamma_d(Z)| - 3\|E\|_2}  \\
        & \le \|\mathcal{P}_{V} E\|_2 + {9\|Z\|_2 \over  4} {2 \|\mathcal{P}_{V}E\|_2\over |\gamma_d(Z)| / 4} \\
        &\le {19\|Z\|_2 \over |\gamma_d(Z)|}\|\mathcal{P}_{V}E\|_2.
    \end{align*}
\end{enumerate}
\end{proof}

% \begin{lemma}
%     Consider two symmetric $n\times n$ matrices $Z_1$ and $Z_2$ with ranks $d_1$ and $d_2$, such that rank of $Z_1 + Z_2$ is  $d_1 + d_2$. Then
%     \begin{equation}
%         |\gamma_{d_1 + d_2}(Z_1 + Z_2)| \ge {|\gamma_{d_1}(Z_1)| \cdot |\gamma_{d_2}(Z_2)|
%         \over |\gamma_{d_1}(Z_1)| + |\gamma_{d_2}(Z_2)|}\bigl[1- \operatorname{cos}^2(Z_1, Z_2)\bigr].
%     \end{equation}
% \end{lemma}
% \begin{proof}
% Since $Z_1$ and $Z_2$ are symmetric, we may write their spectral
% decompositions
% \[
% Z_1 = U_1 \Gamma_1 U_1^\top,
% \qquad
% Z_2 = U_2 \Gamma_2 U_2^\top,
% \]
% where $U_i\in\mathbb R^{n\times d_i}$ have orthonormal columns spanning $\operatorname{col}(Z_i)$..]
% \end{proof}

\section{Additional experiments}

\subsection{Dependency of estimation errors on the cosine similarities of the components}
\label{cosine_similarity_appendix_section}

In this section, we present additional experiments exploring how the latent component estimation error depends on their cosine similarities.
Similarly to Section \ref{sec:experiments}, we generate latent components by Algorithm \ref{alg:sampling_procedure} with $n=200$, $M=16$, $K=4$, $s_{v,u} = s_{w, u} = 0.1$ and vary one of the cosine similarities $s_{w,w}, s_{u,u}$ over the grid $[0.1, 0.2, \ldots, 0.9]$  while keeping the others fixed at 0.1.  Figure \ref{fig:error_angle_dependency} presents the results for the Gaussian edge distribution;  the logistic link results are very similar.

\begin{figure}[th!]
    \centering
    \includegraphics[width=0.9\linewidth]{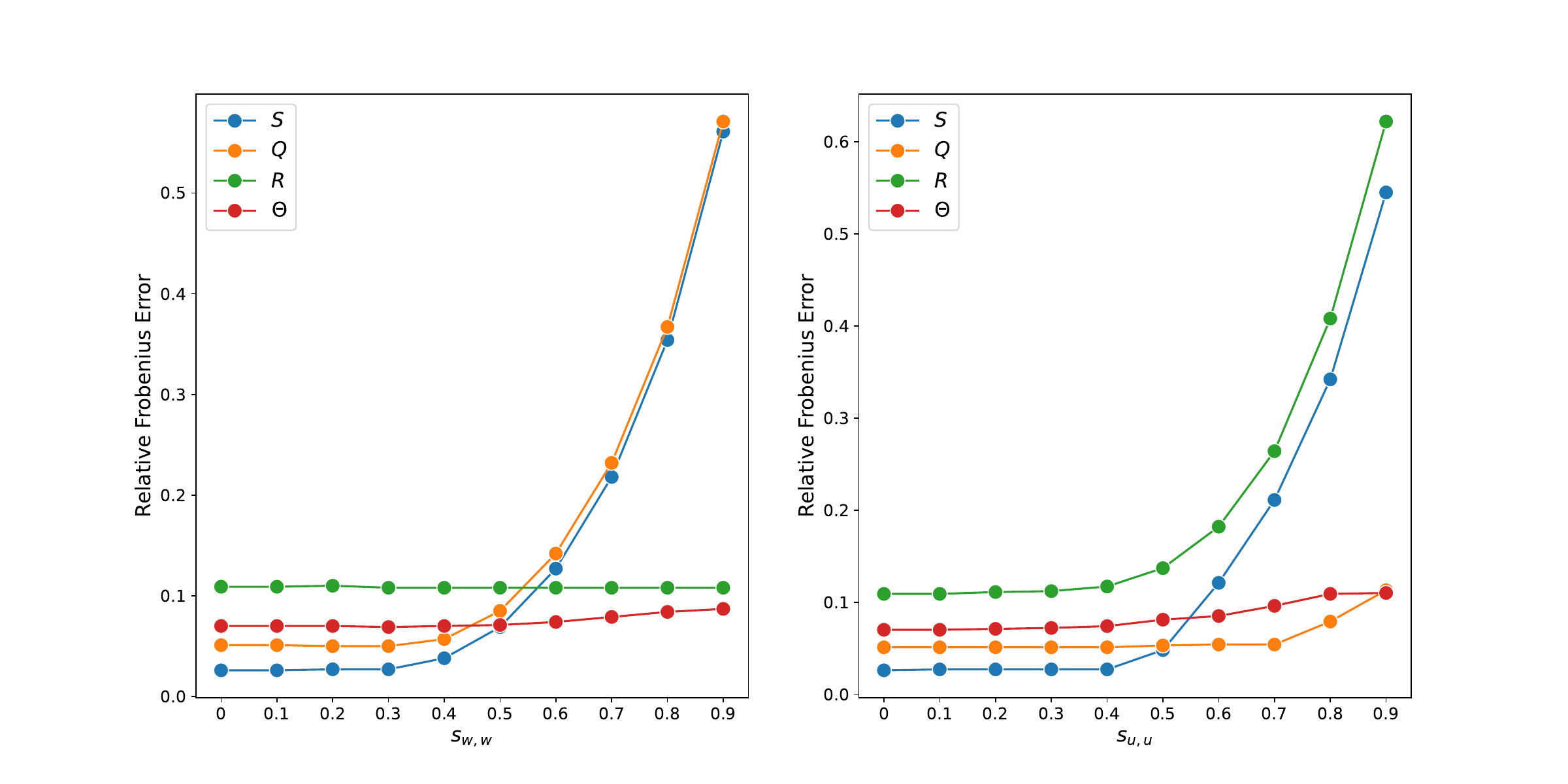}
    \caption{Dependency of ARFE on the cosine similarities between group and individual components. Unless one of the parameters is varied, the networks are generated according to Algorithm \ref{alg:sampling_procedure} with $M=16$ layers on $n=200$ nodes, $K=4$ balanced groups, and $s_{v, u}=s_{w, u} = 0.1$.}
    \label{fig:error_angle_dependency}
\end{figure}

In general, we see that an increase in any of the similarities has a very small effect on the error of $\Theta$. We conjecture that higher similarity of latent components primarily makes their separation harder but has less influence on the effective sample size for estimating their sum.

In the $s_{w, w}$ plot, we can observe that the similarity of the group components affects the separation of the shared and group components, but
does not affect the individual error $R$. This is expected since the separation of individual components occurs in the first stage and only involves the sum of the shared and group components, which is not affected by the correlation of the additive terms defining it.

On the other hand, in the $s_{u, u}$ plot, we can observe that the similarity of the individual components affects the errors of all other components.
This is also expected as the separation of the individual components occurs in the first stage and thus propagates the errors into the second stage, where separation of shared and group components takes place. A smaller change in the group error $Q$ compared to the shared $S$ may be explained by the fact that the estimation of each group component depends only on the individual components within its group, while the estimation of the shared component depends on all individual components. Notice that this can also be seen from our theoretical bounds in Theorem \ref{main_consistency_theorem}, which states that the error of $S$ depends on the maximal similarity $s_{u, u}$ of all individual components and the error of $Q_k$ depends on $s_{u, u}^{(k)}$, the maximal similarity of individual components in group $k$.

% It is also worth mentioning that the errors are robust to the increase in both $s_{w, w}$ and $s_{u, u}$, with the estimation accuracy of $S$ and $Q$ starting to decrease only for $s_{w, w} \ge 0.4$ and $s_{u, u} \ge 0.6$. 
% This observation aligns with the form of the theoretical bounds in Theorem \ref{main_consistency_theorem}, suggesting that $s_{w, w}$ and $s_{u, u}$ only play a role in the component errors if they, respectively, dominate ${1\over K}$ and ${1\over m_k}$. While the individual component error bound in Theorem \ref{main_consistency_theorem} does not suggest direct dependency of $R$ on $s_{u, u}$, we can see that $s_{u, u}$ appears in the bound on $r^{(I)}$ derived in Proposition \ref{init_sqrt_n_proposition}. 

\subsection{Results for the binary  edge model}\label{logistic_compar_appendix_section}

Here, we repeat the methods comparison experiment in Section \ref{compar_other_methods_exp_section} under the logistic edge model. Figure \ref{fig:errors_across_models_logistic} presents the corresponding results. We omit results for $n=100$ and $M=8$ due to the instability of the Logistic model fitting for small samples.

\begin{figure}[t!]
    \centering
    \includegraphics[width=0.95\linewidth]{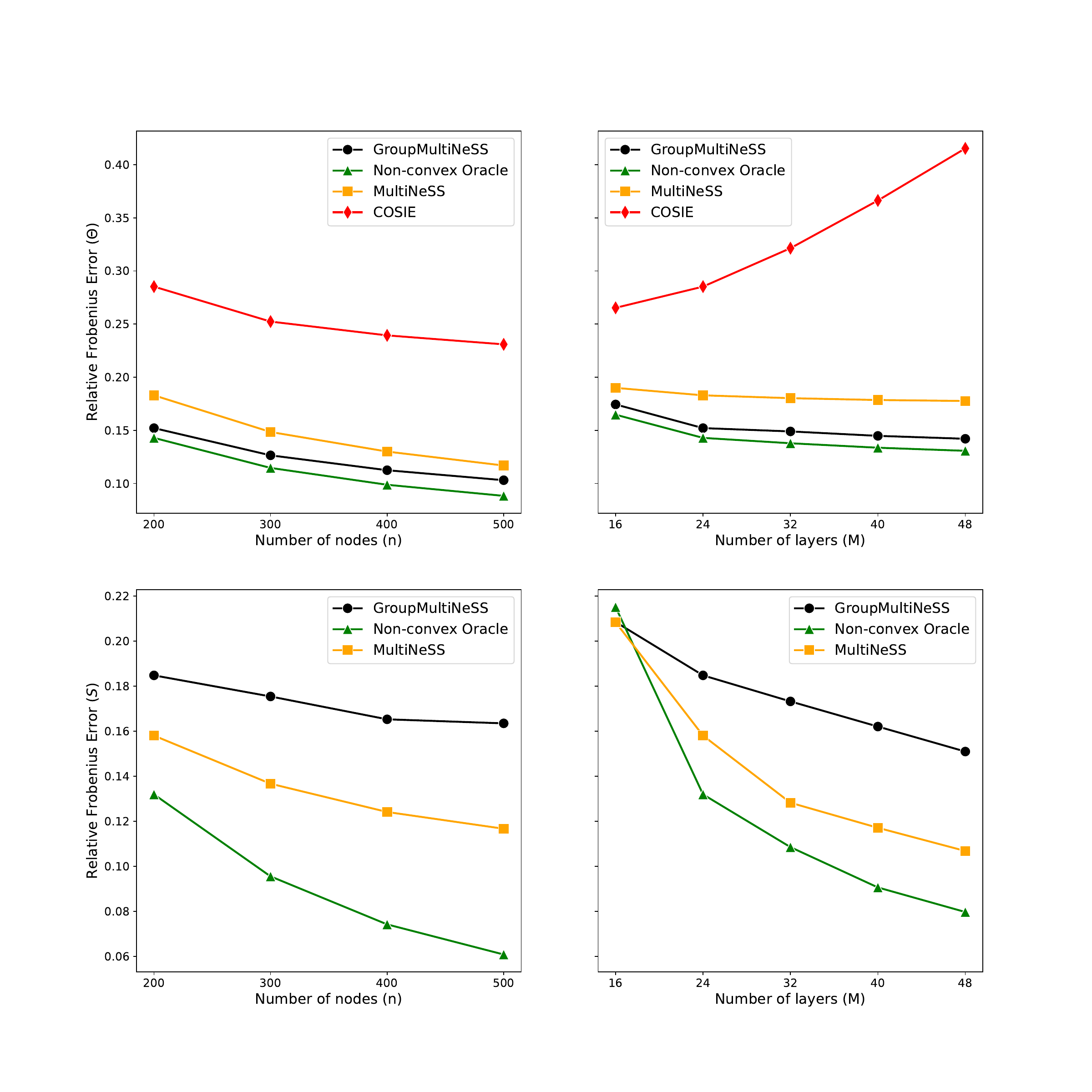}
    \caption{Change in ARFE of $\Theta$ (top row) and $S$ (bottom row) with the increase of (left column) the number of nodes $n$ with $M=16$ and (right column) the number of layers $M$ with $n=200$. In all simulations, the number of groups is $K=4$ and the latent dimension is $d=3$. Layers are sampled from the Logistic edge-entry model.  }
    \label{fig:errors_across_models_logistic}
\end{figure}

When it comes to estimating the overall parameter $\Theta$ (top row of Figure \ref{fig:errors_across_models_logistic}), GroupMultiNeSS is again very close to the oracle  and outperforms both MultiNeSS and COSIE in all regimes;  MultiNeSS also does substantially better than COSIE.  However, in the estimation of the shared component $S$ (bottom row) there is a bigger gap between GroupMultiNeSS and the oracle, and in fact MutliNeSS outperform GroupMultiNeSS.   This is likely due to the fact that the non-linear link function leads to an adjacency matrix with a much higher estimated rank than that of the original latent space, leading to a larger number of noise eigenvalues inflated during the refitting step.  This suggests more careful tuning is needed for the logistic link function model, and possibly a modification of the refitting step;  we leave this topic for future work.  

% \subsection{Additional real data analysis plots.}\label{real_data_plots_supplement}

% In this section, we present additional plots mentioned in the real data experiment described in Section \ref{ch:real_data}. In particular, the upper row of Figure \ref{fig:2d_shared_and_group_fig} shows the four leading latent dimensions of the shared component, and the lower row contains the first two dimensions of the group components aligned with the same Procrustes rotation as the one used in Figure \ref{fig:gmn_group_components_taowu}.
% \begin{figure}
%     \centering
%     \includegraphics[width=0.7\linewidth]{figures/2D_shared+group_comps.png}
%     \caption{First four dimensions of the shared component and first two dimensions of the group components, all fitted with the GroupMultiNeSS (all dimensions in the plots are disassortative).}
%     \label{fig:2d_shared_and_group_fig}
% \end{figure}

% \subsection{Comparison of MultiNeSS and Shared Space Hunting Approaches.}\label{ssh_vs_multiness_section}

% In this section, we compare the MultiNeSS fitting procedure with the Shared Space Hunting (with refinement) approach of \cite{TianYinqiu2024}. 

\end{document}